\newtheorem{theorem}{Theorem}
\newtheorem{corollary}{Corollary}
\newtheorem{lemma}{Lemma}
\newtheorem{definition}{Definition}
\newtheorem{hypothesis}{Hypothesis}
\newtheorem*{triangleconjecture}{Triangle Conjecture}
\newtheorem*{3sumconjecture}{3SUM Conjecture}
\newtheorem{proposition}{Proposition}
\newcommand{\eps}{\varepsilon}
\newcommand{\defparproblem}[4]{
	\bigskip
	\noindent
	\fbox{
		\begin{minipage}{.96\linewidth}
			\textsc{#1} \hfill Parameter: #2\\[2pt]
			\smallskip
			\noindent
			\begin{tabular}{@{}l@{ }l}
				\emph{Input:} & \begin{minipage}[t]{\linewidth-\widthof{Input:\ \ }}
					#3
				\end{minipage}\\[2pt]
				\emph{Task:} & \begin{minipage}[t]{\linewidth-\widthof{Input:\ \ }}
					#4
				\end{minipage}
			\end{tabular}
		\end{minipage}
	}
	\medskip
}
\newcommand\blfootnote[1]{%
  \begingroup
  \renewcommand\thefootnote{}\footnote{#1}%
  \addtocounter{footnote}{-1}%
  \endgroup
}
\title{Dynamic Parameterized Problems and Algorithms} 
\author{Josh Alman\thanks{MIT CSAIL, Cambridge, MA, USA. \texttt{jalman@mit.edu}. Supported by NSF Grant DGE-114747. Work initiated while at Stanford University.}
    \and
        Matthias Mnich\thanks{Universit{\"a}t Bonn, Institut f{\"u}r Informatik, Bonn, Germany \emph{and} Maastricht University, Department of Quantitative Economics, Maastricht, The Netherlands. \texttt{mmnich@uni-bonn.de}. Supported by ERC Starting Grant 306465 (BeyondWorstCase).}
    \and
        Virginia Vassilevska Williams \thanks{MIT CSAIL, Cambridge, MA, USA. \texttt{virgi@mit.edu}. Supported by NSF Grants CCF-141-7238, CCF-1528078 and CCF-1514339, and BSF Grant BSF:2012338. Work initiated while at Stanford University.}\blfootnote{A preliminary version of this paper appears in ICALP 2017.}
}
\begin{document}

\maketitle

\begin{abstract}
  Fixed-parameter algorithms and kernelization are two powerful methods to solve $\mathsf{NP}$-hard problems.
  Yet, so far those algorithms have been largely restricted to \emph{static} inputs.

  In this paper we provide fixed-parameter algorithms and kernelizations for fundamental $\mathsf{NP}$-hard problems with \emph{dynamic} inputs.
  We consider a variety of parameterized graph and hitting set problems which are known to have $f(k)n^{1+o(1)}$ time algorithms on inputs of size $n$, and we consider the question of whether there is a data structure that supports small updates (such as edge/vertex/set/element insertions and deletions) with an update time of $g(k)n^{o(1)}$; such an update time would be essentially optimal. 
  Update and query times independent of $n$ are particularly desirable.
  Among many other results, we show that {\sc Feedback Vertex Set} and\linebreak {\sc $k$-Path} admit dynamic algorithms with $f(k)\log^{O(1)}n$ update and query times for some function~$f$ depending on the solution size $k$ only.

  We complement our positive results by several conditional and unconditional lower bounds.
  For example, we show that unlike their undirected counterparts, {\sc Directed Feedback Vertex Set} and {\sc Directed $k$-Path} do not admit dynamic algorithms with $n^{o(1)}$ update and query times even for constant solution sizes $k\leq 3$, assuming popular hardness hypotheses.
  We also show that unconditionally, in the cell probe model, {\sc Directed Feedback Vertex Set} cannot be solved with update time that is purely a function of $k$.
\end{abstract}

\section{Introduction}
\label{sec:introduction}
The area of dynamic algorithms studies data structures that store a dynamically changing instance of a problem, can answer queries about the current instance and can perform small changes on it.
The major question in this area is, how fast can updates and queries be?

The most studied dynamic problems are dynamic graph problems such as connectivity (e.g.,~\cite{HenzingerKing2001,HolmEtAl2001,HuangEtAl2016,Thorup2000}), reachability~\cite{HenzingerEtAl2015b}, shortest paths (e.g.,~\cite{BernsteinRoditty2011,DemetrescuItaliano2004,HenzingerEtAl2016}), and maximum matching~\cite{BhattacharyaEtAl2015,GuptaPeng2013,Solomon2016}.
For a dynamic graph algorithm, the updates are usually edge or vertex insertions and deletions.
Any dynamic graph algorithm that can perform edge insertions can be used for a static algorithm by starting with an empty graph, and using~$m$ insertions to insert the $m$-edge input graph.
That is, if the update time of the dynamic algorithm is $u(m)$ then the static problem can be solved in $O(m\cdot u(m))$ time, plus the time to query for the output.
Hence, if a problem requires $\Omega(f(m))$ time to be solved statically, then any dynamic algorithm that can insert edges, and can be queried for the problem solution in $o(f(m))$ time, must need $\Omega(f(m)/m)$ (amortized) time to perform updates.
This is not limited to edge updates; similar statements are true for vertex insertions and other update types.
A fundamental question is which problems can be fully dynamized, i.e., have dynamic algorithms supporting updates in $O(f(m)/m)$ time where~$f(m)$ is the static runtime?

This question is particularly interesting for static problems that can be solved in near-linear time.
For them, we are interested in near-constant time updates---the holy grail of dynamic algorithms.
The field of dynamic algorithms has achieved such full dynamization for many problems.
A prime example of the successes of this vibrant research area is the {\em dynamic connectivity} problem: maintaining the connected components of a graph under edge updates, to answer queries about whether a pair of vertices is connected.
This problem can be solved with amortized expected update time $O(\log n \log\log^2 n)$~\cite{HuangEtAl2016,Thorup2000} and query time $O(\log n / \log\log\log n)$; polylogarithmic deterministic amortized bounds are also known, the current best by Wulff-Nielsen~\cite{Wulff-Nilsen13a}. 
After much intense research on the topic \cite{HenzingerK99,HenzingerT97,HolmEtAl2001}, the first polylogarithmic {\em worst case expected} update times were obtained by Kapron et al.~\cite{KapronEtAl2013}, who were the first to break through what seemed like an $\Omega(\sqrt n)$ barrier; the bounds of Kapron et al.~\cite{KapronEtAl2013} were recently improved by Gibb et al.~\cite{gibb2015dynamic}.
Similar $\tilde{O}(1)$ update and query time bounds\footnote{Throughout this paper, we write $\tilde{O}(f(n,k))$ to hide $\text{polylog}(n)$ factors.} are known for many problems solvable in linear time such as dynamic minimum spanning tree, biconnectivity and $2$-edge connectivity~\cite{HenzingerKing2001,HolmEtAl2001}, and maximal matching~\cite{BaswanaEtAl2015,Solomon2016}.

Barriers for dynamization have also been studied extensively.
Many unconditional, cell probe lower bounds are known.
For instance, for connectivity and related problems it is known~\cite{PatrascuDemaine2006,PatrascuThorup2011} that either the query time or the update time needs to be $\Omega(\log n)$.
However, current cell probe lower bound techniques seem to be limited to proving polylogarithmic lower bounds.
In contrast, conditional lower bounds based on popular hardness hypotheses have been successful at giving tight bounds for problems such as dynamic reachability, dynamic strongly connected components and many more~\cite{AbboudVassilevskaWilliams2014,HenzingerEtAl2015,KopelowitzEtAl2016,Patrascu2010}.

While the field of dynamic algorithms is very developed, practically all the problems which have been studied are polynomial-time solvable problems.
What about $\mathsf{NP}$-hard problems? Do they have fast dynamic algorithms?
By the discussion above, it seems clear that (unless $\mathsf{P}=\mathsf{NP}$), superpolynomial query/update times are necessary, and surely this is not as interesting as achieving near-constant time updates.
If the problem is relaxed, and instead of exact solutions, approximation algorithms are sufficient, then efficient dynamic algorithms have been obtained for some polynomial time approximable problems such as dynamic approximate vertex cover~\cite{BaswanaEtAl2015,BhattacharyaEtAl2015,OnakRubinfeld2010}.
What if we insist on exact solutions?

The efficient dynamization question does make sense for {\em parameterized} $\mathsf{NP}$-hard problems. For such problems, each instance is measured by its size $n$ as well as a \emph{parameter} $k$ that measures the optimal solution size, the treewidth or genus of the input graph, or any similar structural property.
If $\mathsf{P}\neq \mathsf{NP}$, then the runtime of any algorithm for such a problem needs to be superpolynomial, but it is desirable that the superpolynomiality is only in terms of~$k$.
That is, one searches for so-called {\em fixed-parameter algorithms} with runtime $f(k) \cdot n^c$ for some computable function $f$ and some fixed constant $c$ independent of $k$ and~$n$.
The {\em holy grail} here is an algorithm with runtime $f(k)\cdot n$ where~$f$ is a modestly growing function.
Such linear-time fixed-parameter algorithms can be very practical for small $k$.
The very active area of fixed-parameter algorithms has produced a plethora of such algorithms for many different parameterized problems.
Some examples include (1) many branching tree algorithms such as those for {\sc Vertex Cover} and {\sc $d$-Hitting Set}, (2) many algorithms based on color-coding~\cite{AlonEtAl1995} such as for {\sc $k$-Path}, (3) all algorithms that follow from Courcelle's theorem\footnote{Courcelle's theorem states that every problem definable in monadic second-order logic of graphs can be decided in linear time on graphs of bounded treewidth.}~\cite{Courcelle1990}, and (4) many more~\cite{BodlaenderEtAl2016,Dorn2010,EtscheidMnich2016,IwataEtAl2014,LokshtanovEtAl2015,vanBevern2014,Wahlstrom2014}.

We study whether $\mathsf{NP}$-hard problems with (near-)linear time fixed-parameter algorithms can be made efficiently dynamic. 
The main questions we address are:
\begin{itemize}
  \item Which problems solvable in $f(k) \cdot n^{1+o(1)}$ time have dynamic algorithms with update and query times at most $f(k) \cdot n^{o(1)}$?   
  \item Which problems solvable in $f(k) \cdot n$ time have dynamic algorithms with update and query times that depend solely on $k$ and not on $n$?
  \item Can one show that (under plausible conjectures) a problem requires $\Omega(f(k) \cdot n^\delta)$ (for constant $\delta>0$) update time to maintain dynamically even though statically it can be solved in $f(k) \cdot n^{1+o(1)}$ time?
\end{itemize}

\subsection{Prior Work}
We are aware of only a handful papers related to the question that we study.
Bodlaender~\cite{Bodlaender1993} showed how to maintain a tree decomposition of constant treewidth under edge and vertex insertions and deletions with $O(\log n)$ update time, as long as the underlying graph always has treewidth at most 2.
Dvo\v{r}\'{a}k et al.~\cite{DvorakEtAl2014} obtained a dynamic algorithm maintaining a tree-depth decomposition of a graph under the promise that the tree-depth never exceeds~$D$; edge and vertex insertions and deletions are supported in $f(D)$ time for some function~$f$.
Dvo\v{r}\'{a}k and T\r{u}ma~\cite{DvorakTuma2013} obtained a dynamic data structure that can count the number of induced copies of a given $h$-vertex graph, under edge insertions and deletions, and if the maintained graph has bounded expansion, the update time is bounded by $O(\log^{h^2} n)$.

A more recent paper by Iwata and Oka~\cite{IwataOka2014} gives several dynamic algorithms for the following problems, under the promise that the solution size never grows above $k$:
(1) an algorithm that maintains a {\sc Vertex Cover} in a graph under $O(k^2)$ time edge insertions and deletions and $f(k)$ time queries\footnote{Here $f(k)$ denotes the runtime of the fastest fixed-parameter algorithm for {\sc Vertex Cover} when run on $k$-vertex graphs.}, (2) an algorithm for {\sc Cluster Vertex Deletion} under $O(k^8+k^4\log n)$ time edge updates and $f(k)$ time queries\footnote{Here $f(k)$ denotes the runtime of the fastest fixed-parameter algorithm for {\sc Cluster Vertex Deletion} when run on $k^5$-vertex graphs.}, and (3) an algorithm for {\sc Feedback Vertex Set} in graphs with maximum degree $\Delta$ where edge insertions and deletions are supported in amortized time $2^{O(k)}\Delta^3\log n$.
Notably, when discussing {\sc Feedback Vertex Set}, the paper concludes: ``It seems an interesting open question whether it is possible to construct an efficient dynamic graph without the degree restriction.''

The final related papers are by Abu-Khzam et al.~\cite{Abu-KhzamEtAl2014,Abu-KhzamEtAl2015}.
Although these papers talk about parameterized problems and dynamic problems, the setting is very different.
Their problem is, given two instances $I_1$ and~$I_2$ of a problem that only differ in $k$ ``edits'', and a solution $S_1$ of $I_1$, to find a feasible solution $S_2$ of~$I_2$ that is at Hamming distance at most $d$ from $S_1$.
The question of study is whether such problems admit fixed-parameter algorithms for parameters $k$ and $\ell$.
That question though is not about data structures but about a single update.
Moreover, their algorithm is given $S_1$ as input, which---unlike a dynamic data structure---cannot force the initial solution to have any useful properties.
Thus, the hardness results in their setting do not translate to our data structure setting.
Furthermore, the runtimes in their setting, unlike ours, must have at least a linear dependence on the size of the input, as one has to at least read the entire input.

Besides the work on parameterized dynamic algorithms, there has been some work on parameterized streaming algorithms by Chitnis et al.~\cite{ChitnisCHM15}.
This work focused on {\sc Maximal Matching} and {\sc Vertex Cover}.
The difference between streaming and dynamic algorithms is that (a) the space usage of the algorithm is the most important aspect for streaming, and (b) in streaming, a solution is only required at the end of the stream, whereas a dynamic algorithm can be queried at any point and needs to be efficient throughout, but can use a lot of space.
For {\sc Vertex Cover} instances whose solution size never exceeds $k$, Chitnis et al.~\cite{ChitnisCHM15} give a one-pass randomized streaming algorithm that uses $O(k^2)$ space and answers the final query in $2^{O(k)}$ time; when the vertex cover size can exceed $k$ at any point, there is a one-pass randomized streaming algorithm using $O(\min\{m,nk\})$ space and answering the query in $O(\min\{m,nk\})+2^{O(k)}$ time. 

The relevant prior work on (static) fixed-parameter algorithms \cite{AlonEtAl1995,BeckerEtAl2000,BodlaenderEtAl2009,Buss,CaiEtAl2006,CaiEtAl1997,ChenEtAl2010,ChenEtAl2008,Cygan2012,Bedlewo2014,CyganEtAl2014,cygan2016known,DaligaultEtAl2010,DomEtAl2009,EstivillCastroEtAl2005,Fernau2006,GrammEtAl2009,Gyarfas1990,IwaideNagamochi2016,Iwata2017,KratschWahlstrom2012,LokshtanovEtAl2014,LokshtanovEtAl2016,vanBevern2014} is discussed in Section~\ref{sec:relatedwork}. Prior work on (not necessarily fixed-parameter) dynamic algorithms which we use in our algorithms is described in Section~\ref{sec:priorwork}.

\subsection{Our Contributions}
\label{sec:ourcontributions}

\textbf{Algorithmic Results.}
We first define the notion of a fixed parameter dynamic problem as a parameterized problem with parameter~$k$ that has a data structure supporting updates and queries to an instance of size $n$ in time $f(k)n^{o(1)}$.
The class $\mathsf{FPD}$ contains all such parameterized problems.
By a formalization of our earlier discussion, $\mathsf{FPD}$ is contained in the class of parameterized problems admitting algorithms running in time $f(k)n^{1+o(1)}$.
After this, we introduce two techniques for making fixed-parameter algorithms dynamic, and then use them to develop dynamic fixed-parameter algorithms for a multitude of fundamental optimization problems.
Our algorithmic contributions are stated in Theorem~\ref{algs} below.
In the runtimes, $DC(n)$ refers to the time per update to a dynamic connectivity data structure on~$n$ vertices, which from prior work (see Proposition \ref{thm:dynamicconnectivity} in Section \ref{sec:dynamicconnectivity}) can be:
	\begin{itemize}
	\item {\em expected amortized} update time $O(\log n(\log\log n)^2)$, or
	\item {\em expected worst case}\footnote{The expected worst case data structures for dynamic connectivity from the literature assume an \emph{oblivious adversary} who does not get access to the random bits used by the data structure, so our results using $DC$ with expected worst case guarantees do as well.} time $O(\log^4 n)$, or
	\item {\em deterministic amortized} time $O(\log^2 n/\log\log n)$.
	\end{itemize}
Which of these bounds we pick determines the type of guarantees (expected vs. deterministic, worst case vs. amortized) that the algorithm gives.
\begin{theorem}
\label{algs}
  The following problems admit dynamic fixed-parameter algorithms:
  \begin{itemize}
    \item {\sc Vertex Cover} parameterized by solution size under edge insertions and deletions, with $O(1)$ amortized or $O(k)$ worst case update time and $O(1.2738^k)$ query time,
    \item {\sc Connected Vertex Cover} parameterized by solution size  under edge insertions and deletions, with $O(k2^k)$ update time and $O(4^k)$ query time,
    \item {\sc $d$-Hitting Set} for all values of $d$ parameterized by solution size under set insertions and deletions, either with $O(kd^k)$ {\em expected} update time and $O(k)$ query time, or with $O(f(k,d))$ (worst-case, deterministic) update time and $O(d^{k} d! (k+1)^d)$ query time, for a function $f$ loosely bounded by $(d!)^d k^{O(d^2)}$.
  	\item {\sc Edge Dominating Set} parameterized by solution size under edge insertions and deletions, with $O(1)$ update time and $O(2.2351^k)$ query time,
    \item {\sc Feedback Vertex Set} parameterized by solution size under edge insertions and deletions, with $2^{O(k\log k)}\log^{O(1)} n$ amortized update time and $O(k)$ query time,
    \item {\sc Max Leaf Spanning Tree} parameterized by solution size under edge insertions and deletions, with $O(3.72^k + k^5\log n + DC(n))$ amortized update time, and maintains the current max leaf spanning tree explicitly in memory,
    \item {\sc Dense Subgraph in Graphs with Degree Bounded by $\Delta$} parameterized by the number of vertices in the subgraph under edge insertions and deletions, with $2^{O({k\Delta})} \cdot DC(n)$ update time and $2^{O(k\Delta)}\log n$ query time.
    \item {\sc Undirected $k$-Path} parameterized by the number of vertices on the path, with $k!2^{O(k)} \cdot DC(n)$ update time and $k!2^{O(k)}\log n$ query time.
    \item {\sc Edge Clique Cover} parameterized by the number of cliques and under the promise that the solution never grows bigger than $g(k)$, with $O(4^{g(k)})$ update time and $2^{2^{O(k)}} + O(2^{4g(k)})$ query time.
    \item {\sc Point Line Cover} and {\sc Line Point Cover} parameterized by the size of the solution and under point and line insertions and deletions, respectively, with $O(g(k)^3)$ update time and $O(g(k)^{2g(k) + 2})$ query time, under the promise that the solution never grows to more than $g(k)$.
  \end{itemize}
\end{theorem}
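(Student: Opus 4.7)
The plan is to reduce each of the eleven items in Theorem~\ref{algs} to one of two master techniques, calibrating the parameters per problem. The first master technique, which I will call \emph{dynamic kernelization}, maintains in time $g(k)\,\mathrm{polylog}(n)$ per update a kernel of size $h(k)$ (plus a mapping from the input object to the kernel), and then answers each query by running the fastest known static FPT algorithm on the current kernel. The second master technique, used for the ``global'' problems {\sc Undirected $k$-Path} and {\sc Dense Subgraph}, combines a $k$-perfect hash family with the dynamic connectivity data structure of Proposition~\ref{thm:dynamicconnectivity}, using one copy of the data structure per coloring.

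For dynamic kernelization I plan to exploit the fact that the classical reduction rules are local. For {\sc Vertex Cover} (Buss rules), {\sc Connected Vertex Cover}, {\sc Edge Dominating Set}, and {\sc $d$-Hitting Set}, each update touches only $O(k)$ kernel elements, so reapplying the rules incrementally costs $O(k^{O(1)})$ or, via a bucket data structure that sorts vertices by degree in a tournament tree, even $O(1)$ amortized. The query time bounds ($1.2738^k$, $2.2351^k$, $O(d^k d!(k+1)^d)$, etc.) are then inherited from the best static FPT algorithms on the kernel of size $O(k^{d})$. The expected versus deterministic split for {\sc $d$-Hitting Set} comes from the choice of data structure used to store the sets: a hash-based bucketing gives the fast expected update bound, while a trie indexed by the ordered support gives the deterministic $(d!)^{d}k^{O(d^2)}$ bound. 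The promise-based items ({\sc Edge Clique Cover}, {\sc Point/Line Cover}) are the same template: the static kernels have $O(g(k)^{2})$ size, and each insertion/deletion affects $O(g(k))$ positions, after which brute-force branching on the kernel produces the stated query times.

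For the color-coding items I would precompute a family $\mathcal{F}$ of $k!\,2^{O(k)}$ colorings of $[n]\to[k]$ such that every $k$-subset is rainbow-colored by some member. For {\sc Undirected $k$-Path} and for {\sc Dense Subgraph} in degree-$\Delta$ graphs, I maintain for each $\chi\in\mathcal{F}$ a small dynamic structure layered on top of dynamic connectivity: the existence of a colorful $k$-path reduces, via a standard meet-in-the-middle DP on color subsets, to reachability between contracted super-nodes; the $2^{O(k\Delta)}$ and $k!\,2^{O(k)}$ factors are exactly the number of DP states per coloring. Each edge update triggers one update in $|\mathcal{F}|$ copies of the dynamic connectivity data structure, explaining the $|\mathcal{F}|\cdot DC(n)$ update time.

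The main obstacle, and the reason Iwata and Oka's question remained open, is {\sc Feedback Vertex Set} without a degree restriction: the standard $O(k^2)$-vertex kernel uses global reduction rules (in particular the chord/double-edge rules and the rule that contracts degree-$2$ vertices) whose naive dynamic maintenance would cost $\Omega(n)$ per update because one edge deletion can create $\Omega(n)$ newly reducible vertices along a long induced path. I expect to handle this by keeping the graph in a dynamic spanning-forest data structure (for cycle detection), detecting in polylogarithmic time which ``buckets'' of degree-$2$ chains are affected by an update, and then charging each expensive rekernelization to the $\Omega(k)$ kernel edges it destroys via an amortization argument that yields the claimed $2^{O(k\log k)}\log^{O(1)}n$ bound. {\sc Max Leaf Spanning Tree} is handled analogously, but because the data structure must output the spanning tree explicitly, I layer the branching FPT algorithm on top of a dynamic $2$-edge-connectivity structure so that after each update only the $O(k^5\log n)$ edges incident to the branching decisions need to be repaired, plus one call to $DC(n)$ for the rest of the tree.
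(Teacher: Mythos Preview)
Your two-technique framework is right, and several items (\textsc{Vertex Cover}, \textsc{Edge Dominating Set}, \textsc{Point Line Cover}) are essentially as the paper does them. But there are genuine gaps in three of the harder items.

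\textbf{\textsc{$d$-Hitting Set}.} Your explanation of the expected/deterministic split is wrong. The $O(kd^k)$ expected bound does \emph{not} come from hash-based bucketing of a kernel; it comes from a randomized \emph{branching tree} (exactly the Vertex Cover branching algorithm generalized: at each node store a uniformly random set from $\mathcal{F}$ and branch on its $d$ elements). The $(d!)^d k^{O(d^2)}$ deterministic bound does not come from a trie; it comes from a \emph{new} kernel the paper designs for this purpose. The kernel is defined via a recursive notion of ``$(\ell,r)$-good'' sets (roughly: an $\ell$-set is good if many $(\ell{+}r)$-sets above it are good and contain no smaller good set), and the kernel consists of the inclusion-minimal good sets. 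Maintaining this requires four mutually recursive procedures propagating goodness/strongness up and down the subset lattice; the $(d!)^d k^{O(d^2)}$ bound is the depth-$d$ recursion cost of this propagation, not a trie lookup. Your plan does not produce either algorithm.

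\textbf{\textsc{Feedback Vertex Set}.} The paper does \emph{not} solve this by dynamically maintaining the standard kernel and charging rekernelizations to destroyed kernel edges. It is a genuine \emph{branching tree} algorithm combined with a kernel at each node. The key structural lemma is that in the degree-$\geq 3$ kernel $G^*$ with $m^*$ edges, any FVS of size $\leq k$ must contain a vertex of degree $\geq m^*/(3k)$. So one branches on the $O(k)$ highest-degree vertices. The amortization trick is to branch on the top $12k$ (not $6k$) vertices, so that $\Omega(m^*/k)$ edge updates are needed before the branching set can become invalid; rebuilding then costs $O(m^*)\cdot f(k{-}1)$, amortized over those $\Omega(m^*/k)$ updates. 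The $2^{O(k\log k)}$ is $\prod_{i\leq k} O(i)$ from the $k$ levels of branching. Your ``charging to $\Omega(k)$ destroyed kernel edges'' is the wrong invariant and would not give the bound.

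\textbf{\textsc{$k$-Path} and \textsc{Dense Subgraph}.} The paper does not use a meet-in-the-middle DP on color subsets. For \textsc{$k$-Path} it maintains, for each hash $h_i$ \emph{and each permutation} $\pi$ of $[k]$, the layered graph keeping only edges between consecutive colors under $\pi\circ h_i$, plus a source and sink; then the question is plain $s$--$t$ connectivity, solved by one dynamic-connectivity instance per $(i,\pi)$. This is where the $k!$ factor comes from. For \textsc{Dense Subgraph} the hash family has $k(\Delta{+}1)$ colors (to color $H\cup N(H)$), and for each color subset $U$ one maintains the connected components of $G[L]$ together with a $k\times k\Delta$ table indexed by $(|C|,m(C))$; the query is a search over integer partitions, not a DP over color subsets. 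Your description would not directly yield the stated bounds, and in particular the $2^{O(k\Delta)}$ factor arises from the $2^{k(\Delta+1)}$ color subsets, not from DP states.

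Smaller point: the \textsc{Edge Clique Cover} kernel has $2^{g(k)}$ vertices (one per subset of cliques), not $O(g(k)^2)$; this is why the update time is $O(4^{g(k)})$.
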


\medskip
\noindent
\textbf{Discussion of the Algorithmic Results.}
Our dynamic algorithm for {\sc Vertex Cover} and that of Iwata and Oka~\cite{IwataOka2014} both have query time $O(1.2738^k)$, by using the best known fixed parameter algorithm for {\sc Vertex Cover} on the maintained kernel.
However, our algorithm improves upon theirs in two ways.
First, our update time is amortized {\em constant} or $O(k)$ worst case, whereas the Iwata-Oka algorithm has update time $O(k^2)$.
Second, their update time bound of $O(k^2)$ only holds if the vertex cover is guaranteed to never grow larger than $k$ throughout the sequence of updates.
Namely, their update time depends on the size of their maintained kernel, which may become unbounded in terms of $k$.
Our algorithm does not need any such promise---it will always have fast (amortized $O(1)$ or $O(k)$ worst case) update time and return a vertex cover of size $k$ if it exists, or determine that one does not. 
This is a much stronger guarantee. 

Our dynamic algorithm and Chitnis et al.'s streaming algorithm for {\sc Vertex Cover} are both based on Buss' kernel, but our algorithm is markedly different from theirs.
In particular, we actually work with a modified kernel that allows us to achieve constant amortized update time. Because our algorithm is completely deterministic, it necessarily needs $\Omega(m)$ space, and our algorithm does indeed take linear space.

We give two algorithms for {\sc $d$-Hitting Set}.
The first is based on a randomized branching tree method, while the second is deterministic and maintains a small kernel for the problem.
For every constant~$d$, any {\sc $d$-Hitting Set} instance on $m$ sets and $n$ elements has a kernel constructible in time $O(dn + 2^d m)$ that has $O(d^{d+1} d! (k+1)^d)$ sets,
due to van Bevern~\cite{vanBevern2014}, and a kernel constructible in time $O(m)$ that has $O((k+1)^d)$ sets, due to Fafianie and Kratsch~\cite{FafianieKratsch2015}.
Unfortunately, it seems difficult to efficiently dynamize these kernel constructions.
Because of this, we present a {\em novel kernel} for the problem.
Our kernel can be constructed in $O(dn+3^d m)$ time and has size $O((d-1)! (k+1)^d)$.
It also has nice properties that make it possible to maintain it dynamically with update time that is a function of only~$k$ and $d$.
In fact, for any fixed $d$, the update time is polynomial in~$k$.

Our algorithm for {\sc Feedback Vertex Set} is a nice combination of kernelization and a branching tree.
Aside from our dynamic kernel for {\sc $d$-Hitting Set}, this is probably the most involved of our algorithms.
Iwata and Oka~\cite{IwataOka2014} had also presented a dynamic fixed-parameter algorithm for {\sc Feedback Vertex Set}.
However, their update time depends linearly on the maximum degree of the graph, and is hence efficient only for bounded degree graphs. Their paper asks whether one can remove this costly dependence on the degree. Our algorithm answers their question in the affirmative---it has fast updates regardless of the graph density.

All of our algorithms, except for the last two in the theorem, meet their update and query time guarantees regardless of whether the currently stored instance has a solution of size~$k$ or not.
The two exceptions, {\sc Edge Clique Cover} and {\sc Point Line Cover}, only work under the promise that the solution never grows bigger than a function of $k$.
In this sense they are similar to most of the algorithms from prior work~\cite{DvorakEtAl2014,IwataOka2014}.
There does seem to be an inherent difficulty to removing the promise requirement, however. In fact, in the parameterized complexity literature, these two problems are also exceptional, in the sense that their fastest fixed parameter algorithms run by computing a kernel and then running a brute force algorithm on it~\cite{cygan2016known,kratsch2016point}, rather than anything more clever.

\medskip
\noindent
\textbf{Hardness Results.}
In addition to the above algorithms, we also prove conditional lower bounds for several parameterized problems, showing that they are likely not in $\mathsf{FPD}$.
To our knowledge, ours are the first lower bounds for any dynamic parameterized problems. 

The hardness hypothesis we assume concerns Reachability Oracles (ROs) for DAGs: an RO is a data structure that stores a directed acyclic graph and for any queried pair of vertices $s,t$, can efficiently answer whether $s$ can reach $t$.
(An RO does not perform updates.) 
Our main hypothesis is as follows:
\begin{hypothesis}[RO Hypothesis]
  On a word-RAM with $O(\log m)$ bit words, any Reachability Oracle for directed acyclic graphs on $m$ edges must either use $m^{1+\eps}$ preprocessing time for some $\eps>0$, or must use $\Omega(m^\delta)$ time to answer reachability queries for some constant $\delta>0$.
\end{hypothesis}

The only known ROs either work by computing the transitive closure of the DAG during preprocessing, thus spending $\Theta(\min\{mn,n^\omega\})$ time (where $n$ is the number of vertices and $2\leq \omega<2.373$ \cite{legallmult,VassilevskaWilliams2012}), or by running a BFS/DFS procedure after each query, thus spending $O(m)$ time.
Both of these runtimes are much larger than our assumed hardness; hence the RO Hypothesis is very conservative. 

We also use a slightly weaker version of the RO Hypothesis, asserting that its statement holds true even restricted to DAGs that consist of $\ell$ layers of vertices (for some fixed constant~$\ell$), so that the edges go only between adjacent layers in a fixed direction, from layer $i$ to layer~$i+1$.
While this new {\em LRO Hypothesis} is certainly weaker, we show that it is implied by either of two popular hardness hypotheses: the {\sc 3SUM} Conjecture and the Triangle Conjecture.
The former asserts that when given $n$ integers within $\{-n^c,\ldots,n^c\}$ for some constant $c$, deciding whether three of them sum to $0$ requires $n^{2-o(1)}$ time on a word-RAM with $O(\log n)$ bit words.
The latter asserts that detecting a triangle in an $m$-edge graph requires $\Omega(m^{1+\eps})$ time for some $\eps>0$.
These two conjectures have been used for many conditional lower bounds~ \cite{AbboudVassilevskaWilliams2014,GajentaanOvermars2012,KopelowitzEtAl2016}.

P{\v{a}}tra{\c{s}}cu studied the RO Hypothesis, and while he was not able to prove it, the following strong cell probe lower bound follows from his work~\cite{Patrascu2011}:  there are directed acyclic graphs on $m$ edges for which any RO that uses $m^{1+o(1)}$ preprocessing time (and hence space) in the word-RAM with $O(\log n)$ bit words, must have $\omega(1)$ query time.
Using this statement, unconditional, albeit weaker lower bounds can be proven as well.
This is what we prove:
\begin{theorem}
  Fix the word-RAM model of computation with $w$-bit words for $w = O(\log n)$ for inputs of size~$n$.
  Assuming the LRO Hypothesis, there is some $\delta>0$ for which the following dynamic parameterized graph problems on $m$-edge graphs require either $\Omega(m^{1+\delta})$ preprocessing or $\Omega(m^\delta)$ update or query time:
  \begin{itemize}
   \item {\sc Directed $k$-Path} under edge insertions and deletions,
   \item {\sc Steiner Tree} under terminal activation and deactivation, and
   \item {\sc Vertex Cover Above LP} under edge insertions and deletions. 
  \end{itemize}
  Under the RO Hypothesis (and hence also under the LRO Hypothesis), there is a $\delta>0$ so that {\sc Directed Feedback Vertex Set} under edge insertions and deletions requires $\Omega(m^\delta)$ update time or query time. 

  Unconditionally, there is no computable function $f$ for which a dynamic data structure for {\sc Directed Feedback Vertex Set} performs updates and answers queries in $O(f(k))$ time. 
\end{theorem}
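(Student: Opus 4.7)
My plan is to reduce the static Reachability Oracle (RO) problem on DAGs to dynamic \textsc{Directed Feedback Vertex Set} at the parameter value $k=0$, and to invoke the unconditional cell-probe bound of P{\v{a}}tra{\c{s}}cu cited earlier in the introduction: there exist $m$-edge DAGs on which any RO using $m^{1+o(1)}$ preprocessing time must have $\omega(1)$ query time. If a dynamic DFVS data structure achieved updates and queries in $O(f(k))$ time for some computable $f$, then at $k=0$ each operation would cost a constant that is independent of $m$, and the reduction below would yield a too-fast RO, which is a contradiction.

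Concretely, I would give the following RO construction from a hard DAG $G=(V,E)$ with $|E|=m$. In preprocessing, instantiate the dynamic DFVS structure on vertex set $V$ with no edges, and then insert the edges of $E$ one by one; since $G$ is acyclic, every intermediate graph has DFVS size $0$, and the total preprocessing cost is $m\cdot f(0) = m^{1+o(1)}$ because $f(0)$ is a constant in $m$. To answer a reachability query $(s,t)$, insert the directed edge $(t,s)$, query DFVS at $k=0$ (``is the current graph acyclic?''), and then delete $(t,s)$. Because $G$ itself is acyclic, every cycle in $G\cup\{(t,s)\}$ must traverse the new edge $(t,s)$, which happens exactly when $G$ contains an $s$-to-$t$ path. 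The per-query cost is thus $3\cdot O(f(0)) = O(1)$ in $m$, contradicting the $\omega(1)$ cell-probe lower bound. This forces at least one of the update operations or the query operation to take time that is not bounded by any computable function of $k$ alone.

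The step I expect to require the most care is reconciling the computational models: P{\v{a}}tra{\c{s}}cu's bound is stated in the cell-probe model with $w = O(\log n)$ bit words, so the reduction must be carried out there as well. This is essentially automatic because the reduction uses the DFVS data structure as a black box, and the vertex set is unchanged (so the word size matches). The only other subtle point is amortization: if $f$ is only an amortized bound, then preprocessing plus $q$ reachability queries costs $O((m+3q)\,f(0))$ cell probes in total, which still yields an RO with amortized cost $O(f(0)) = O(1)$ per reachability query. Since P{\v{a}}tra{\c{s}}cu's lower bound applies to average query time for oracles with $m^{1+o(1)}$ preprocessing, this amortized version is still enough to derive the contradiction and complete the proof.
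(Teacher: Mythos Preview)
Your reduction for the unconditional (and, implicitly, the RO-conditional) \textsc{Directed Feedback Vertex Set} bound is correct and essentially matches the paper's approach; the only cosmetic difference is that the paper routes the back-edge through a fresh auxiliary vertex (inserting $(v,s)$ and $(s,u)$ for a query $(u,v)$) rather than inserting a single edge $(t,s)$ directly. The auxiliary-vertex version cleanly sidesteps the edge case where $(t,s)$ is already present in the DAG, but your variant also works once you observe that if $(t,s)\in E$ then $s$ cannot reach $t$ in a DAG.

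However, the theorem bundles five separate claims, and your proposal addresses only the two concerning \textsc{Directed Feedback Vertex Set}. You say nothing about \textsc{Directed $k$-Path}, \textsc{Steiner Tree}, or \textsc{Vertex Cover Above LP} under the LRO Hypothesis. Each of these requires its own reduction from $\ell$-LRO, and the paper supplies them: for \textsc{Directed $k$-Path} one adjoins source and sink vertices to the layered graph and asks for a directed path on $\ell+2$ vertices; for \textsc{Steiner Tree} one makes the layered graph undirected, activates the two query endpoints as the only terminals, and asks for a tree on $k=\ell-1$ edges; for \textsc{Vertex Cover Above LP} one splits every vertex of the layered graph into a matched pair $v',v''$ so that a reachability query becomes an augmenting-path / perfect-matching question at integrality gap $0$. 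Without these three reductions your proof is incomplete.
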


Our lower bounds show that, although {\sc $k$-Path} and {\sc Feedback Vertex Set} have fixed parameter dynamic algorithms for undirected graphs, they probably do not for directed graphs.
Interestingly, the fixed-parameter algorithms for {\sc $k$-Path} in the static setting work similarly on both undirected and directed graphs, so there only seems to be a gap in the dynamic setting.

All problems for which we prove lower bounds have $f(k)n^{1+o(1)}$ time static algorithms, except for {\sc Vertex Cover above LP}.
However, it seems that the reason why the current algorithms are slower is largely due to the fact that near-linear time algorithms for maximum matching are not known.
Recent impressive progress on the matching problem~\cite{Madry13} gives hope that an $f(k)n^{1+o(1)}$ time algorithm for {\sc Vertex Cover above LP} might be possible.

A common feature of most of the problems above is that they are either not known to have a polynomial kernel (like {\sc Directed Feedback Vertex Set}), or do not have one unless $\mathsf{NP} \subseteq \mathsf{coNP}/poly$ (like {\sc $k$-Path}~\cite{BodlaenderEtAl2009} and {\sc Steiner Tree} parameterized by the number of terminal pairs~\cite{DomEtAl2009}).
One might therefore conjecture that problems which cannot be made fixed parameter dynamic do not have polynomial kernels, or vice versa.
Tempting as it is, this intuition turns out to be false.
{\sc Vertex Cover Above LP} does not have a dynamic fixed-parameter algorithm, yet it is known to admit a polynomial kernel~\cite{KratschWahlstrom2012}.
On the other hand, the {\sc $k$-Path} problem on undirected graphs also does not admit a polynomial kernel unless $\mathsf{NP} \subseteq \mathsf{coNP}/poly$~\cite{BodlaenderEtAl2009}, yet we give a dynamic fixed-parameter algorithm for it.
Hence, the existence of a polynomial kernel for a parameterized problem is not related to the existence of a dynamic fixed-parameter algorithm for it.

\medskip
\noindent
\textbf{Preliminaries.}
We assume familiarity with basic combinatorial algorithms, especially graph algorithms and hitting set algorithms.
When referring to a graph $G$, we will write $V(G)$ to denote its vertex set and~$E(G)$ to denote its edge set.
Unless otherwise specified, $n$ and $m$ will refer to the number of vertices and edges in $G$, respectively.
By $\tilde{O}(f(n))$ we denote $f(n) \log^{O(1)} n$.
We also assume familiarity with dynamic problems and parameterized problems.
For formal definitions, see Section~\ref{sec:formaldefs}.

\section{Overview of the Algorithmic Techniques}
\noindent
\textbf{Promise model and Full model.}
There are two different models of dynamic parameterized problems in which we design algorithms: the \emph{promise model} and the \emph{full model}.
When solving a problem with parameter~$k$ in the promise model, there is a computable function $g : \mathbb{N} \to \mathbb{N}$ such that one is promised that throughout the sequence of updates, there always exists a solution with parameter at most $g(k)$.
Hence, one only needs to maintain a solution under updates with good guarantees on both query and update times as long as the promise continues to hold.
If at any point during the execution no solution to the parameterized problem with parameter $g(k)$ exists, then the algorithm is not required to provide any guarantees.

In the full model, there is no such promise.
One needs to efficiently maintain a solution with parameter at most $k$, or the fact that no such solution exists, under any sequence of updates.
When possible, it is desirable to have an algorithm with guarantees in the full model instead of only the promise model, and all but two of our algorithms ({\sc Point Line Cover} and {\sc Edge Clique Cover}) do work in the full model.

\subsection{Techniques for designing dynamic fixed-parameter algorithms}

We present two main techniques for obtaining dynamic fixed-parameter algorithms: dynamic kernels and dynamic branching trees.

\medskip
\noindent
\textbf{Dynamization via kernelization.}
Using the notation of Cygan et al.~\cite{CyganEtAl2015}, a {\em kernelization algorithm} for a parameterized problem~$\Pi$ is an algorithm $\mathcal A$ that, given an instance $(I,k)$ of~$\Pi$, runs in polynomial time and returns an instance $(I', k')$ of $\Pi$ such that the size of the new instance is bounded by a computable function of $k$ and so that $(I', k')$ is a `yes' instance of $\Pi$ if and only if $(I,k)$ is. Frequently, when the problem asks us to output more than just a Boolean answer, then an answer for $(I',k')$ must be valid for $(I,k)$ as well.
We will refer to the output of $\mathcal A$ as a {\em kernel}.
For example, a kernelization algorithm for {\sc Vertex Cover} might take as input a graph $G$, and return a subgraph $G'$ such that any vertex cover of~$G'$ is also a vertex cover of~$G$.

In the first approach, we compute a kernel for the problem, and maintain that this is a valid kernel as we receive updates. In other words, as we receive updates, we will maintain what the output of a kernelization algorithm $\mathcal A$ would be, without actually rerunning $\mathcal A$ each time.
Similar to kernelizations for static fixed-parameter algorithms, if we can prove that the size of our kernel is only a function of $k$ whenever a solution with parameter $k$ exists, then we can answer queries in time independent of $n$ by running the fastest known static algorithms on the kernel.

The difficult part, then, is to efficiently dynamically maintain the kernel.
The details of how efficiently we can handle updates to the kernel also determines which model of dynamic fixed-parameter algorithm our algorithm works for.
If the kernel is defined by sufficiently simple or local rules such that updates can take place in time independent of the current kernel size, then the algorithm should work in the full model.
If updates might take time linear in the kernel size, then the algorithm only works in the promise model.

As we will see, there are many problems for which we can efficiently maintain a kernel.
In some instances we will be able to maintain the classical kernels known for the corresponding static problem, while in others, we will design new kernels which are easier to maintain.

\medskip
\noindent
\textbf{Dynamization via branching tree.}
In the second approach, we consider so-called {\em set selection problems}.
In these problems, the instance consists of a set of objects $U$ (e.g., vertices of a graph), the parameter is $k$, and one needs to select a subset $S\subseteq U$ of size $k$ (at least~$k$/at most $k$) so that a certain predicate~$P(S)$ is satisfied.
Many parameterized problems are of this nature, such as {\sc $k$-Path}, {\sc Vertex Cover}, and {\sc (Directed) Feedback Vertex Set}.

Consider a (static) set selection problem which admits a branching solution.
By this we mean, for every instance $U$ of the problem, there is an `easy to find' subset $T\subseteq U$ of size $|T|\leq f(k)$ (for some function~$f$) so that any solution $S$ of size at most $k$ must intersect $T$.
Furthermore, for any choice of $t\in T$ to be placed in the solution, one can efficiently obtain a reduced instance of the problem with parameter $k-1$, which corresponds to picking $t$ to be in the solution.
For instance, for {\sc Vertex Cover}, every edge $\{u,v\}$ can be viewed as such a subset $T$ since at least one of $u$ and $v$ is in any vertex cover, and if we pick~$u$, then we can remove it and all its incident edges from the graph to get a reduced instance.

For such problems, there is a simple fixed-parameter algorithm called the branching tree algorithm: The algorithm can be represented as a tree $\mathcal T$ rooted at a node $r$ (we refer to the vertices of $\mathcal T$ as nodes).
Each node $v$ of the tree corresponds to a reduced instance of the original one, and in this instance, $v$ has a subset $T$ of size~$f(k)$, and a child $v_i$ for every $i\in T$, where $v_i$ corresponds to selecting $i$ to be placed in the solution, and $v_i$ carries the reduced instance where $i$ is selected.
The height of the tree $X$ is bounded by $k$ since at most $k$ elements need to be selected, and the branching factor is $f(k)$. 
Each leaf $\ell$ of the tree $\mathcal T$ is either a ``yes''-leaf (when the predicate is satisfied on the set of elements selected on the path from $r$ to $\ell$) or a ``no''-leaf (when the predicate is not satisfied).
The runtime of the algorithm bounded by $f(k)^k\cdot t(N)$, where~$t(N)$ is the time to find a subset $T$ that must contain an element of the solution in instances of size $N$, together with the time to find a reduced instance, once an element is selected.

What we have described so far is a static algorithmic technique, but we investigate when this algorithmic technique can be made dynamic.
In other words, given an update, we would like to quickly update $\mathcal T$ so that it becomes a valid branching tree for the updated instance.
Since the number of nodes in the branching tree is only a function of $k$, one can afford to look at every tree node. 
Ideally, one would like the time spent per node to only depend on $k$.
However, for most problems that we consider, the branching tree needs to be rebuilt every so often, since the subset $T$ to branch on may become invalid after an update, and the time to rebuild can have a dependence on the instance size.
We use two methods to avoid this.
The first is to randomize the decisions made in the branching tree (e.g., which set $T$ to pick) so that, assuming an oblivious adversary that must provide the update sequence in advance, it is relatively unlikely that we need to rebuild the tree $\mathcal T$ (or its subtrees) after each update, and in particular, so that the expected cost of an update is only a function of $k$.
The second is to make `robust' choices of~$T$, so that many updates are requires before the choice of $T$ becomes incalid, and then amortize the cost of rebuilding the tree over all the updates required to force such a rebuilding.

\subsection{Algorithm Examples} \label{subsec:examples}
We give overviews of the techniques used in some of our algorithms, to demonstrate the dynamic kernel and dynamic branching tree approaches, and different ways in which they can be used.
We emphasize that these descriptions are substantial simplifications which hide many non-trivial details and ideas.

\medskip
\noindent
\textbf{Vertex Cover.}
We give both a dynamic kernel algorithm and a dynamic branching tree algorithm for {\sc Vertex Cover}.

Our first algorithm maintains a kernel obtained as follows:
Every vertex of degree at least $k+1$ `selects' $k+1$ incident edges arbitrarily and adds them to the edge set $E'$ of the kernel, independently of other vertices.
Next, every edge incident to two vertices of degree at most $k$ is also added to $E'$.
Finally, the vertices set of the kernel consists of all vertices that are not isolated in~$E'$. 
This is a valid kernel, since any vertex cover of size at most $k$ needs to include every vertex of degree strictly greater than $k$. Every edge in $E'$ either has both its end points of degree at most $k$, or is selected by one of its end points of degree at least $k+1$. 
Any node of a vertex cover of the kernel either has degree at most $k$ or selects $k+1$ edges.
Thus
the kernel must have size $O(k^2)$ when a vertex cover of size at most $k$ exists.
To insert an edge we simply add the edge to the kernel unless one of its incident vertices has degree greater than $k$.
If one of the end points $x$ used to be of degree at most $k$ and is now of degree $k+1$, we have $x$ select all its incident edges and add them to the kernel.
To delete an edge,
 we simply remove it from the kernel.
 If it was incident to a vertex $v$ of degree higher than $k+2$, then we need to find another edge incident to $v$ which is in the graph but not selected by $v$ to put into the kernel.
 If one of the end points now has degree $k$, we need to go through the incident edges and remove them from the kernel if their other end point has high degree and did not select them.
All these operations can be performed by storing appropriate pointers so that the updates run in $O(k)$ time
 With a little bit more work one can make them run in $O(1)$ amortized time.
To answer queries, we answer ``no'' in constant time if the kernel has more than $2k(k+1)$ edges, and otherwise 
we run the fastest fixed-parameter algorithm for static {\sc Vertex Cover} on the kernel of size $O(k^2)$ and parameter $k$.
This results in a $O(1.2738^k)$ update time.

Our second algorithm maintains a branching tree of depth at most $k$, which corresponds to using following randomized branching strategy: pick a uniformly random edge, and branch on adding each of its endpoints into the vertex cover.
For a static branching algorithm, there is no need to pick a uniformly random edge to branch on, since at least one endpoint of every single edge must be in the vertex cover.
However, a deterministic branching strategy like this in a dynamic algorithm would be susceptible to an adversarial edge update sequence, in which the adversary frequently removes edges which have been chosen to branch on.
By ensuring that each edge we branch on is a uniformly random edge, we make the probability that we need to recompute any subtree of the branching tree~$\mathcal T$ low.
We compute the expected update time to be only $O(k 2^k)$.
Queries can be answered in only $O(k)$ time by following a path in the branching tree to an accepting leaf, if one exists.

These algorithms demonstrate some subtleties of the two techniques.
In the branching tree algorithm, we use a randomized branching rule to deal with adversarial updates.
In some of our other branching tree algorithms, like for {\sc Feedback Vertex Set}, we are able to find a deterministic branching rule to yield a deterministic algorithm instead.
In the kernelization algorithm, we manage to find a kernel which can be updated quickly even when the answer becomes larger than $k$ and the kernel size becomes large.
In other problems, it will be harder to do this, and we may need to restrict ourselves to the promise model where we are guaranteed that the kernel will not grow too big in order to have efficient update times.
Dynamic kernelization techniques typically lead to faster update times and query times, like in this case, because we can apply the fastest known static algorithm for the problem to the kernel to answer queries.
In a branching tree algorithm, we may be using a branching rule which does not lead to the fastest algorithm because it is easier to dynamically maintain.

Interestingly, we are able to generalize both of these algorithms to the {\sc $d$-Hitting Set} problem.
The {\sc $d$-Hitting Set} branching tree algorithm is similar to that of {\sc Vertex Cover}, but the {\sc $d$-Hitting Set} dynamic kernelization algorithm is much more complicated, and involves a tricky recursive rule for determining which sets to put in the kernel.

\medskip
\noindent
\textbf{Max Leaf Spanning Tree.}
Our algorithm for {\sc Max Leaf Spanning Tree} uses the dynamic kernel approach.
The kernel we maintain is simply the given graph, where we contract vertices of degree two whose neighbors both also have degree two.
We can maintain this kernel by storing paths of contracted vertices in lists corresponding to edges they have been contracted into.
As long as this kernel has $\Omega(k^2)$ vertices, it must always have a spanning tree with at least $k$ leaves.

Unlike in other dynamic kernel algorithms, where we maintain that the kernel does not get too large, this kernel may grow to have $\Omega(n^2)$ edges.
We can nonetheless find a tree with at least~$k$ leaves in time independent of $n$, by breadth-first searching from an arbitrary vertex in the kernel until we have $\Omega(k^2)$ kernel vertices, and just finding a tree within those vertices.

This method finds a subtree $T_S$ with at least $k$ leaves, but we need to find a tree which spans the whole graph.
In the static problem, this could be accomplished by a linear-time breadth-first search away from~$T_S$, but in the dynamic problem, this is too slow.
To overcome this, we also maintain a spanning tree~$T$ of the entire graph, which does not necessarily have~$k$ leaves, using a known dynamic tree data structure.
When queried for a spanning tree, we find $T_S$, and then perform a `merge' operation to combine~$T$ and~$T_S$ into a spanning tree with at least $k$ leaves.
This merge operation makes only $O(k^4)$ changes to~$T$, so we are able to maintain a desired spanning tree in time independent of $n$.

We are able to maintain a linear size answer in only logarithmic time per update because the output is not very `sensitive' to updates: we can always output an answer very close to~$T$, which itself only changes in one edge per update.
In other problems where the output can be more sensitive to updates, like {\sc Edge Clique Cover}, we need to maintain a small intermediate representation of the answer instead of the answer itself.

\medskip
\noindent
\textbf{Feedback Vertex Set in undirected graphs.}
Our algorithm for {\sc Feedback Vertex Set} combines the dynamic kernel approach with the dynamic branching tree approach.
We will maintain a branching tree, where we branch off of which vertex to include in our feedback vertex set.
Then, at each node in the branching tree, we will maintain a kernel to help decide what vertices to branch on.
Similar to the situation with {\sc Max Leaf Spanning Tree}, our kernel can possibly have~$\Omega(n^2)$ edges.
Here we will deal with this by branching off of only $O(k)$ vertices in the kernel to add to our feedback vertex set, so that we can answer queries in sublinear time in the kernel size.

The kernel we maintain at each node of the branching tree is the given graph, in which vertices of degree one are deleted, and vertices of degree two are contracted.
This involves many details for maintaining contracted trees, and dealing with resulting self-loops.
Since the resulting graph has average degree at least three, whereas forests have much lower average degree, we show that a feedback vertex set of size at most $k$ must contain a vertex of high degree, whose degree is at least $1/(3k)$ of the total number of edges in the kernel.
Since there are at most~$6k$ such vertices, we can branch on which to include in our feedback vertex set.

This branching strategy works well for the static problem, but it is hard to maintain dynamically. 
Each edge update might change the set of vertices with high enough degree to branch on, and changing which vertex we branch on, and recomputing an entire subtree of the branching tree, can be expensive.
We alleviate this issue using amortization.
Instead of branching only on the $6k$ highest degree vertices, we instead branch on the $12k$ highest degree vertices.
If our kernel has~$m$ edges, then we prove that $\Omega(m)$ edge updates need to happen before there might be a small feedback vertex set containing none of the vertices we branched on.
After these updates we need to recompute the branching tree, but this is inexpensive when amortized over the required $\Omega(m)$ updates.

\section{Related work} 
\label{sec:relatedwork}
Here we discuss the best known fixed-parameter algorithms and kernelizations for the problems we study.

For {\sc Vertex Cover} parameterized by solution size $k$, we use an algorithm with runtime $O(1.2738^k + kn)$ by Chen et al.~\cite{ChenEtAl2010}, as well as a linear-time algorithm producing a kernel of size $O(k^2)$ due to Buss~\cite{Buss}.

For {\sc Vertex Cover Above LP} parameterized by the difference $\lambda$ between an optimal integral solution and an optimal solution to the natural LP relaxation of the problem, an algorithm with runtime $2.3146^\lambda\cdot n^{O(1)}$ was given by Lokshtanov et al.~\cite{LokshtanovEtAl2014}, and a randomized kernel of size $O(\lambda^3)$ is known~\cite{KratschWahlstrom2012}.
The variant of {\sc Vertex Cover} in which the subgraph induced by the vertex cover has to be connected can be solved in time $2^kk \cdot O(n+m)$~\cite{Cygan2012}, where $k$ again is the solution size.
This problem does not admit a kernel of size polynomial in $k$, unless $\mathsf{NP}\subseteq \mathsf{coNP}/poly$~\cite{DomEtAl2009}.

The generalization of {\sc Vertex Cover} from graphs to hypergraphs with hyperedges of size $\leq d$ is known as {\sc $d$-Hitting Set}.
A simple search tree algorithm solves this problem in time $d^k\cdot n^{O(1)}$, where $k$ is the desired size of the hitting set.
A linear-time algorithm~\cite{vanBevern2014} produces a kernel with $O(k^d)$ hyperedges; with additional $O(k^{1.5d})$ time the number of vertices can be reduced to $O(k^{d-1})$.

For the {\sc Feedback Vertex Set} problem parameterized by solution size $k$, a randomized $O(4^k \cdot kn)$ time algorithm due to Becker et al.~\cite{BeckerEtAl2000} finds a feedback vertex set of size $k$ with constant probability independent of $k$.
A linear-time algorithm by Iwata~\cite{Iwata2017} leads to a kernel of size $O(k^2)$.

For {\sc $k$-Path}, the color-coding method Alon et al.~\cite{AlonEtAl1995} leads to a randomized fixed-parameter algorithm with expected runtime $2^{O(k)}\cdot n$.
The {\sc $k$-Path} problem does not admit a polynomial kernel, unless $\mathsf{NP} \subseteq \mathsf{coNP}/poly$~\cite{BodlaenderEtAl2009}.

For {\sc Steiner Tree}, the relation between the number $n$ of vertices in the input graph and the terminal set size $k$ makes different algorithms the fastest choice: for $k\leq 4\log n$ an algorithm with runtime $3^k\cdot O(n) + 2^k\cdot O(m + n\log n))$~\cite{EricksonEtAl1987}, for $4\log n < k < 2\log n(\log\log n)^3$ an algorithm with runtime $2^{k + (k/2)^{1/3}(\ln n)^{2/3}}$~\cite{FuchsEtAl2007}, for $2\log n(\log\log n)^3 < k < (n-\log^2n)/2$ an algorithm with runtime $2^{k+\log_2k\log_2n}O(kn)$~\cite{Vygen2011}, and for $k > (n-\log^2n)/2$ brute force enumeration of minimum spanning trees for $T\cup X$ for each $X\subseteq V(G)\setminus T$ in time $2^{n-k}\cdot O(m\alpha(m,n))$, where $\alpha(m,n)$ is the inverse of the Ackermann function. 
Also this problem does not admit a kernel of size polynomial in $k$, unless $\mathsf{NP} \subseteq \mathsf{coNP}/poly$~\cite{BodlaenderEtAl2009}.

The complexity of {\sc Directed Feedback Vertex Set} was a long-standing open problem, until its fixed-parameter tractability was shown by Chen et al.~\cite{ChenEtAl2008}; an algorithm with runtime $4^kk!k^5\cdot O(n+m)$ is due to Lokshtanov et al.~\cite{LokshtanovEtAl2016}.
Whether this problem admits a kernel of size polynomial in $k$ is still unresolved~\cite{Bedlewo2014}.

Regarding {\sc Edge Clique Cover} parameterized by solution size $k$, a kernel with $2^k$ vertices obtainable in linear time is due to Gy\'{a}rf\'{a}s~\cite{Gyarfas1990}.
The instance can then be solved by brute force on the kernel, in time $2^{2^{O(k)}} + O(n^4)$, as described by Gramm et al.~\cite{GrammEtAl2009}; such doubly-exponential runtime is best possible assuming the Exponential-Time Hypothesis~\cite{cygan2016known}.
The result that this problem does not admit a kernel of size $k^{O(1)}$ is due to Cygan et al.~\cite{CyganEtAl2014}.

For {\sc Dense Subgraph} in graphs with maximum degree $\Delta$, a randomized fixed-parameter algorithm with expected runtime $2^{O(\Delta + k)}\cdot O((\Delta+k)n)$ was suggested by Cai et al.~\cite{CaiEtAl2006}.

For {\sc Point Line Cover}, it is folklore to derive a kernel with $O(k^2)$ points in linear time.

For {\sc Edge Dominating Set} parameterized by solution size $k$, a kernel with $O(k^2)$ vertices is known~\cite{Fernau2006}; the fastest fixed-parameter algorithm runs in $2.2351^k\cdot n^{O(1)}$ time~\cite{IwaideNagamochi2016}.

For {\sc Max Leaf Spanning Tree}, the fastest known algorithm in terms of solution size $k$ runs in $3.72^k \cdot n^{O(1)}$ time~\cite{DaligaultEtAl2010}.
An algorithm by Cai et al.~\cite{CaiEtAl1997} solves the problem in time $O((n + m) + (4(k+2)(k+1))^k)$.
The smallest known kernel has at most $3.75k$ vertices~\cite{EstivillCastroEtAl2005}.

\section{Dynamic Parameterized Problems - Formalizations}
\label{sec:formaldefs}

\medskip
\noindent
\textbf{Dynamic Problems.}
We will now formally define the types of problems we study.
A static computational problem~$\Pi$ defines a function $f_{\Pi}$ defined on $n$ bit inputs, and an algorithm for $\Pi$ computes $f_{\Pi}$ on any $n$-bit input, in time $t(n)$.
A {\em dynamic problem} $\Pi$ additionally has an update rule that describes small allowable changes to an $n$-bit instance $x$.
We say that $\Pi$ is \emph{$c(n)$-dynamic} if the update rule only changes at most $c(n)$ bits of the instance; typically, $c(n)=O(\log n)$.
If it is possible to get from any valid instance to any other via this update rule, the problem is called {\em fully dynamic}.
The {\em diameter} of a fully dynamic problem $\Pi$, denoted $D(\Pi)(n)$, is the maximum number of updates needed to get from any instance of size $n$ to any other instance of size $n$.
All of the problems we consider will be fully dynamic.

For example, the dynamic graph connectivity problem is a fully dynamic problem, where instances are graphs on $n$ vertices.
An instance on $m$ edges can be described using $2m\log (n+1)$ bits,\footnote{The $i$th edge is represented via the concatenation of the descriptions of its vertices; $(0^{\log n},0^{\log n})$ can be chosen as an empty edge.} and each update inserts or deletes a single edge (changing $2\log n$ bits in the description at a time). 
This problem is fully dynamic with diameter $2m$
since one can get from any $m$-edge graph to any other by entirely deleting all $m$ edges of the first graph and then inserting all $m$ edges of the second.

A {\em dynamic algorithm} for a dynamic problem $\Pi$ stores an $n$-bit instance $x$ of $\Pi$ and can perform updates to $x$ following the update rules of the problem. 
Dynamic algorithms store~$x$ in a data structure, which is allowed to store more than just $x$, and must support the updates in some (amortized or worst-case) time~$u(n)$, and be able to answer queries about~$f_\Pi(x)$ at any point in time.

If $\Pi$ is a fully dynamic problem, then the algorithm is called \emph{fully dynamic}.
Any fully dynamic algorithm for $\Pi$ with update time $u(n)$ also implies the existence of an $O(D(\Pi)(n)\cdot u(n))$ static algorithm for $\Pi$: start from a trivial instance $x_0$ and then update~$x_0$ at most $D(\Pi)(n)$ times in the dynamic algorithm to obtain $x$. For instance, for any graph problem, one can start from an empty graph and insert all edges of any desired graph $G$, one at a time, and hence a dynamic algorithm for a problem on $m$-edge graphs with update time~$u(m)$ implies a static algorithm running in time $O(m\cdot u(m))$. 

\medskip
\noindent
\textbf{Parameterized Problems.}
A \emph{parameterized problem} $\Pi$ is a subset of $\{0,1\}^\star \times \mathbb N$.
Thus, we consider bivariate inputs $(I,k)$ consisting of an instance $I$ together with a \emph{parameter} $k\in\mathbb N$.
We follow the general assumption that the parameter is part of the input, and $k = |I|^{O(1)}$.

We say that problem $\Pi$ is \emph{fixed-parameter tractable} if it admits a \emph{fixed-parameter algorithm}, which is an algorithm $\mathcal A$ that decides membership in $\Pi$ in time $f(k)\cdot |I|^{O(1)}$ for some computable function~$f$.

A \emph{almost fixed-parameter dynamic algorithm} for a parameterized dynamic problem $\Pi$ is a dynamic algorithm for $\Pi$ that performs updates in time $f(k)\cdot |I|^{o(1)}$.
A \emph{fixed-parameter dynamic algorithm} for a parameterized dynamic problem $\Pi$ is a dynamic algorithm for $\Pi$ that performs updates in time $f(k)$.
$\Pi$ is  \emph{(almost) fixed-parameter dynamic} if it admits a (almost) fixed-parameter dynamic algorithm.

Note in particular that when considering a problem in the fully dynamic model parameterized by~$k$, we assume that the parameter $k$ does not change in the dynamic process.
That said, a generic fully dynamic parameterized algorithm can be modified in a straightforward way to maintain, for instance, an answer for all parameters up to a given value $k$.

We define two related complexity classes: the class $\mathsf{FPD}$ of fixed-parameter dynamic parameterized problems, and the class almost-$\mathsf{FPD}$ of almost fixed-parameter dynamic parameterized problems.

We will focus on two subclasses of these: Lin-$\mathsf{FPD}$ is the subclass of $\mathsf{FPD}$ consisting of dynamic problems~$\Pi$ with $D(\Pi)(n)\leq O(n)$. Similarly, Lin-almost-$\mathsf{FPD}$ is the subclass of almost-$\mathsf{FPD}$ consisting of dynamic problems $\Pi$ with $D(\Pi)(n)\leq O(n)$. 

Notice that Lin-$\mathsf{FPD}$ is a subclass of those parameterized problems with $f(k)n$ time algorithms, and Lin-almost-$\mathsf{FPD}$ is a subclass of those with $f(k)n^{1+o(1)}$ runtimes.

We study which parameterized problems with $f(k)n$ or $f(k)n^{1+o(1)}$ time algorithms with a dynamic version with linear diameter lie in Lin-$\mathsf{FPD}$ and Lin-almost-$\mathsf{FPD}$, respectively (and thus in~$\mathsf{FPD}$).

\section{Hitting Set-like problems}

\subsection{Vertex Cover}

\defparproblem{Vertex Cover}{$k$}{An undirected graph $G$ and an integer $k\in\mathbb N$.}{Find a set $S$ of at most $k$ vertices intersecting all edges of $G$.}

{\sc Vertex Cover} is a canonical fixed-parameter tractable problem.
There are several ways to obtain a small kernel for {\sc Vertex Cover}.
For instance, a kernel with at most $2k$ vertices can be obtained via the standard LP relaxation.
Our construction will make use of a very simple construction of a kernel with $O(k^2)$ vertices, due to S. Buss~(cited by J. Buss and Goldsmith~\cite{BussGoldsmith1993}). 

The dynamic problem we will solve is maintaining a vertex cover of size at most $k$ (if one exists) under {\em edge insertions and edge deletions}.
Each query asks to return a vertex cover of size at most~$k$ of the current graph, or to report that none exists.
Our algorithm will work in the full model, so that our update time will be efficient regardless of whether there is a vertex cover of size $k$ or not, and we will be able to distinguish these two cases.

\subsubsection{Dynamic kernel for Vertex Cover}

Iwata and Oka~\cite{IwataOka2014} present a dynamic algorithm for {\sc Vertex Cover} using a dynamic kernel of size $O(k)$ that can be updated in $O(k^2)$ time in the promise model.
The kernel they maintain is the output of a greedy $2$-approximation algorithm for {\sc Vertex Cover}.
Their update time depends on the size of the kernel; in the promise model this is $O(k^2)$, but it can be $\Omega(n^2)$ in the full model.
Hence their dynamic algorithm does not have fast update time when there is no promise that the size of a minimum vertex cover is bounded above by a function of $k$.

We will use Sam Buss' kernel for {\sc Vertex Cover}. We will modify it slightly and will make it dynamic in the full model. We achieve query time $O(1.2738^k)$ and update time either amortized $O(1)$ (independent of $k$) or worst case $O(k)$, in both cases improving upon the result by Iwata and Oka. 

Buss' static kernelization algorithm places all vertices of degree at least $k+1$ into the desired vertex cover~$C$ of $G$ with size at most $k$, and removes their incident edges and any isolated vertices.
If $G$ has a vertex cover of size at most $k$, then the remaining graph, called~$K$, has $O(k^2)$ vertices and edges, since any vertex cover of $K$ with size $k$ covers all edges, but all vertices in $K$ have degree at most $k$. 

We define an equivalent kernel on $O(k^2)$ vertices and edges that is easier to make dynamic.
Let $E'$ be the edge set and $V'$ be the vertex set of the kernel $K$ that we will build.
We will define $E'$ and will then let $V'$ be the vertices in $V$ that have nonzero degree in $E'$.

Consider every vertex $v\in V$ that has degree at least $k+1$ in $G$.
Each such $v$ selects a set of $k+1$ incident edges $S(v)$ arbitrarily.
Each $v$ picks its edges independently from other vertices; in particular, $v$ might pick $\{u,v\}$, but $u$ might not.
We add $\cup_v S(v)$ to $E'$.
In addition, we add to $E'$ all edges incident to two vertices of degree at most $k$.
Notice that a low degree vertex might not have all its incident edges in $E'$.

Note that $K$ is a kernel by the same argument as in Buss' kernel.
If an edge is not in $E'$, then this is because it was not picked by a vertex $x$ of degree at least $k+1$, but $x$ must be selected in any vertex cover of size at most $k$ solution in $G$ and in $K$, so any vertex cover of~$K$ with size at most $k$ is a vertex cover of $G$ (with size at most $k$) as well.
Also, similar to Buss' kernel, $K$ has at most $O(k^2)$ vertices and edges whenever it admits a vertex cover $S$ of size at most $k$, which can be seen as follows.
Let $S$ be a vertex cover of $K$ (and hence~$G$) with size at most $k$.
Then all edges are incident to $S$.
We count the number of edges in~$E'$.
Consider any $v$ of degree at least $k+1$; its incident edges in $E'$ are of two types---those in $S(v)$, of which there are $k+1$, and those edges $\{v,x\}$ whose other endpoint $x$ is also of degree at least $k+1$ and is hence also in $K$, and such that $x$ has selected $\{v,x\}$.
Thus the number of edges in $E'$ incident to vertices of degree at least $k+1$ is at most $k(k+1)$.
The remaining edges are incident to low degree vertices in $S$, and there can be at most $k^2$ of them.
Thus, $|E'|\leq k(k+1)$. Clearly then also $|V'|\leq O(k^2)$ since $V'$ does not contain isolated vertices in $E'$.

In order to dynamically maintain this kernel, we will maintain, for every vertex $v$:
\begin{itemize}
  \item its degree $d_G(v)$ in $G$,
  \item the edge set $E'$ of the kernel $K$,
  \item if $v$ has degree at least $k+1$, a doubly linked list $R_v$ of (pointers to) the edges incident to $v$ that are not in $S(v)$\footnote{In our amortized update algorithm, some vertices of degree at least $k+1$ might not have a list $R_v$.},
  \item if $v$ has degree at least $k+1$, a pointer $r(q,v)$ for every $v$ and $q\in N(v)$ into the position of $\{v,q\}$ in $R_v$ (or \texttt{nil} if $\{q,v\}\in S(v)$),
  \item the vertex set $V'$ of $K$ that consists of the vertices $v\in V(G)$ for which $d_{E'}(v)\geq 1$.
\end{itemize}

\subsubsection{\texorpdfstring{$O(k)$}{O(k)} worst case update time}
Let us describe the update procedures for $K$ when we want to achieve $O(k)$ worst case update time.
In these procedures, we say that a vertex has \emph{high degree} if its degree is at least $k+1$; and \emph{low degree} otherwise.

Suppose that an edge $\{u,v\}$ is inserted into $G$.
First we increment $d_G(u)$ and $d_G(v)$.
There are a few cases.
First, if $d_G(u)\leq k$ and $d_G(v)\leq k$, then we insert $\{u,v\}$ into~$E'$.
Otherwise, suppose that $d_G(u)>k$.
(We process $v$ similarly if it has $d_G(v)>k$.)
If $d_G(u)>k+1$, then~$u$ was already a high degree vertex.
We just place $\{u,v\}$ into $R_u$, updating the pointer~$r(v,u)$.
If $d_G(u)=k+1$, then $u$ used to be a low degree vertex and now is a high degree vertex.
Then~$u$ must pick all of its incident edges and place them in~$S(x)$; we do this by adding every incident edge of $u$ to $E'$ and creating an empty $R_u$.
If $u$ or $v$ or any of their neighbors had no incident edges in $E'$ but now they do, we insert them into~s$V'$.
The total insert time is worst case~$O(k)$.

Suppose now that an edge $\{u,v\}$ is to be deleted from $G$.
First we decrement $d_G(u)$ and~$d_G(v)$ and delete $\{u,v\}$ from $E'$ if necessary.
Let us consider $u$ ($v$ will be processed similarly).
If $d_G(u)\geq k+1$ and $\{u,v\}\in R_u$, then we just remove $\{u,v\}$ from $R_u$ using the pointer~$r(v,u)$.
If $d_G(u)\geq k+1$ and $\{u,v\}\notin R_u$, then take the first edge $\{u,x\}$ in $R_u$, remove it from $R_u$ and add it to $E'$.
If $d_G(u)=k$, then $u$ used to be a high degree vertex and now is a low degree vertex.
We note that before the deletion of $\{u,v\}$, all of $u$'s incident edges must have been in~$S(u)$ and thus now all its incident edges are in $E'$.
However, if an incident edge $\{u,y\}$ has its other endpoint $y$ be a high degree vertex and $y$ has not selected $\{u,y\}$, then we need to remove $\{u,y\}$ from $E'$.
We do exactly that---we go through all of the~$k$ incident edges of $u$ and remove any from $E'$ that should not be there.
Finally, if $d_G(u)<k$, then $u$ was already a low degree vertex, so we do not need to fix anything.
The total deletion time is worst case~$O(k)$.

\subsubsection{\texorpdfstring{$O(1)$}{O(1)} amortized update time}
Now we explain how to achieve $O(1)$ amortized update time.
We will modify the kernel we are maintaining slightly.
Now we will have three types of vertices:
\begin{itemize}
  \item \emph{high degree}, those of degree at least $2k+1$ in $G$,
  \item \emph{low degree}, those of degree at most $k$ in $G$, and
  \item \emph{medium degree}, all others.
\end{itemize}
The high degree vertices will behave similar to previously---they will select between $k+1$ and $2k+1$ edges to add to $E'$.
The low degree vertices will also behave similar to before---they will only add edges to $E'$ that have their other endpoint of low degree or medium degree (or if their other end point selected the edge). 

The medium degree vertices will behave either like low degree or like high degree vertices, depending on the operations performed on them.
We say that a medium degree vertex $u$ \emph{looks like a low degree vertex} if it does not have a list $R_u$.
Otherwise, if it has a list $R_u$ (potentially \texttt{nil}), we say that it \emph{looks like a high degree vertex}.

To insert an edge $\{u,v\}$, we update the degrees of $u$ and $v$ and add $\{u,v\}$ to $E'$ if 
neither endpoint is of high degree or of medium degree that looks like high degree.
Otherwise, we process any endpoint that either has high degree or looks like high degree.

Suppose $u$ is a high degree vertex or looks like a high degree vertex.
The first option is that $u$ has degree at least $k+1$ and has $R_u$ set (if its degree is strictly larger than $2k+1$,~$R_u$ will always be set as before the insertion $u$ had high degree).
We check how many edges are in $S(u)$ (we can keep a count, or calculate it from the degree and the size of $R_u$).
If their number is at most $2k$, we just add $\{u,v\}$ to $E'$; otherwise, we add it to $R_u$.

Suppose finally that $u$ has degree exactly $2k+1$ and $R_u$ does not exist.
This means that~$u$ used to be a medium degree vertex that looked like a low degree vertex, and is now a high degree vertex.
In this case we need to create $R_u$.
We go through all edges $\{u,x\}$ incident to~$u$, put them in $S(u)$ by setting~$R_u$ to \texttt{nil} and add them to $E'$.
Here we can afford to spend $O(k)$ time since at least~$k$ insertions must have occurred since $u$ was last a low degree vertex.

To delete an edge $\{u,v\}$, we decrement the degrees, remove $\{u,v\}$ from $E'$ and process both $u$ and~$v$.
Consider $u$.
If $u$ is a high degree vertex, then we proceed as before---if $\{u,v\}$ was not in $S(u)$, then we just remove it from $R_u$, and otherwise, we replace it in $E'$ with the first edge in $R_u$.
If $u$ was a high degree vertex but is now a medium degree vertex, we treat it as if it was a high degree vertex; now $u$ is a medium degree vertex that looks like a high degree vertex.
If $u$ is a medium degree vertex and also was a medium degree vertex, then we treat it like a high degree vertex if it looks like a high degree vertex, and otherwise we do nothing.
If $u$ is a low degree vertex and also used to be a low degree vertex, we do nothing.
If $u$ was a medium degree vertex and now is a low degree vertex, there are two options.
If~$u$ looked like a low degree vertex, we do nothing.
Otherwise, if $u$ looked like a high degree vertex, we know that since now its degree is $k$, all its incident edges are in $S(u)$ (and hence~$E'$) and $R_u$ is \texttt{nil}.
We delete~$R_u$ and we go through all of $u$'s incident edges $\{u,x\}$.
We remove each such $\{u,x\}$ from $E'$ if $x$ is a high degree vertex or a medium degree vertex that looks like a high degree vertex and $x$ did not select $\{u,x\}$ in $S(x)$.
Here we can afford an $O(k)$ runtime for the following reason: $u$ looked like a high degree vertex right before the deletion so we know that it was a high degree vertex at some point and between the last time that it became a high degree vertex and now at least $k$ deletions must have occurred.
Overall we obtain $O(1)$ amortized update time.

To answer a query, we first check whether $K$ has at most $2k(k+1)$ edges and at most $2k(k+2)$ vertices and if not, we return ``no''.
Otherwise we solve the {\sc Vertex Cover} problem on $K$ using the fastest known fixed-parameter algorithm for {\sc Vertex Cover} by Chen et al.~\cite{ChenEtAl2010}, that on $K$ runs in $O(1.2738^k)$ time.

\subsubsection{Dynamic algorithm via a dynamic branching tree}
Given $G$, we will keep a binary branching tree~$\mathcal T$, where each non-leaf node of $\mathcal T$ contains an edge to branch on.
Every non-leaf subtree $\mathcal T'$ of $\mathcal T$ represents a subgraph~$H$ of $G$ and the root of each $\mathcal T'$ contains an edge of its corresponding subgraph $H$.
If a tree node $t$ contains edge~$\{x,y\}$ and the subtree $\mathcal{T}_t$ under it corresponds to $H$, then the left subtree of $t$ corresponds to putting $x$ in the vertex cover and contains the edges of $H$ that are not covered by $x$, and the right subtree corresponds to putting $y$ in the vertex cover and contains the edges of $H$ not covered by $y$.

We will maintain the invariant that every non-leaf node $v$ of $\mathcal T$ contains an edge chosen uniformly at \emph{random} among the edges in the subgraph corresponding to the subtree under $v$.

A node of $\mathcal T$ becomes a leaf if its subgraph is empty, in which case it is a ``yes''-node, or if it is at depth~$k$, so that $k$ vertices have been added to the vertex cover along the path from the root to it, and then it is a ``no''-node if the subgraph contains any edges and a ``yes''-node otherwise.

Suppose that a new edge $\{x,y\}$ is inserted.
This is how to update $\mathcal T$:
Traverse each vertex of $\mathcal T$ starting at the root as follows:
suppose that we are at vertex $v$ such that the subgraph~$H$ corresponding to its subtree contains $L$ edges.
Then, after adding $\{x,y\}$ the number of edges is increased to $L+1$. 
We need to maintain the invariant that the edges in the nodes of~$\mathcal T$ are random from their subtrees.
Hence, with probability $1/(L+1)$ we replace the edge in~$v$ with $\{x,y\}$ and rebuild {\em the entire subtree} of $\mathcal T$ under~$v$.
If $\{x,y\}$ is not added into $v$, then the edge $e$ in $v$ has probability $(1-1/(L+1))\cdot (1/L) = 1/(L+1)$ of being into~$v$, and so the invariant is maintained.

How long does this take?
If $v$ is at distance $i$ from the root, then the number of vertices under it is at most~$2^{k-i}$ since there are only $k-i$ branching decisions left.
At each such node~$v$ at distance $i$ from the root, a random edge is placed and for each of the two children, at most $L$ edges need to be looked at to check whether they are covered.
So it takes $O(L 2^{k-i})$ time to rebuild the subtree.
However, this only happens with probability $1/(L+1)$, so in expectation the runtime is $O(2^{k-i})$.
There are $2^i$ nodes at distance $i$ from the root, so level~$i$ of the tree is rebuilt in expected time $O(2^k)$ and the entire tree in expected time $O(k2^k)$.

This rebuilding ensures that each edge in a subtree appears with the same probability at the root of the subtree maintaining our invariant.

To find a vertex cover of size $k$, one only needs to traverse the $2^k$ leaves of the tree and find a ``yes''-node if one exists (if it does not, just return ``no'' as there is no vertex cover of size at most $k$).
After finding the ``yes''-node, just go up the tree to the root to find the vertices chosen to be in the vertex cover.
In fact, we can improve the query time by maintaining in addition a pointer to a ``yes''-leaf node, if one exists.
Then the query time is $O(k)$.

Now suppose that edge $\{x,y\}$ is deleted.
Suppose first that $\{x,y\}$ does not appear anywhere in~$\mathcal T$.
Then consider any subtree of $\mathcal T$ under some tree node $v$ such that $\{x,y\}$ is in the subgraph $H$ corresponding to the subtree.
The edge $\{x',y'\}$ stored in $v$ is chosen uniformly at random from the edges of $H$. 
When we remove $\{x,y\}$, then $\{x',y'\}$ is still uniformly random among the remaining edges of $H$, so we do not have to do anything.

Now assume that $\{x,y\}$ does appear in $\mathcal T$.
At each node $v$ of $T$ containing $\{x,y\}$, we need to remove $\{x,y\}$ from $v$ and rebuild the entire subtree under $v$.
We will do this top to bottom starting at the root, searching for tree nodes containing $\{x,y\}$ in, say, a breadth first search fashion.

Consider one such $v$ containing $\{x,y\}$.
Suppose that the subgraph $H$ corresponding to it contains~$L$ edges and that $v$ is at distance $i$ from the root.
Similar to before, we remove~$\{x,y\}$ and rebuild the subtree in time $O(L2^{k-i})$.
However, $\{x,y\}$ only had a $1/L$ chance of being in node $v$ to begin with, so that we only spend $O(2^{k-i})$ time rebuilding that node in expectation.
Summing over all vertices we still get an expected time of $O(k2^k)$.

In summary, we have a dynamic algorithm for {\sc Vertex Cover} with expected worst case update time $O(k2^k)$ and $O(k)$ query time. 
This algorithm has a much higher update time and is randomized and thus the guarantee is only valid if the adversary is required to supply the updates offline.
Nevertheless, this algorithm has an essentially optimal query time.

\subsection{Connected Vertex Cover}
\defparproblem{Connected Vertex Cover}{$k$}{A graph $G$ and an integer $k\in\mathbb N$.}{Find a vertex cover $X\subseteq V(G)$ of $G$ with size $|X|\leq k$ inducing a connected subgraph of~$G$.}

We first describe a kernel for the problem and then show how to dynamize it.
The particular kernel we build on is suggested in Exercise 2.14 in the book by Cygan et al.~\cite{CyganEtAl2015}.
Let $S$ be the set of nodes in~$G$ of degree at least $k+1$; clearly, $S$ must be contained in any vertex cover of $G$ with size at most ~$k$.
Let~$Q$ be the set of nodes in $V(G)\setminus S$ that have no neighbors in $V(G)\setminus S$ (but have neighbors in~$S$).
As with the kernel for {\sc Vertex Cover}, the number of edges within $V(G)\setminus S$ and the number of vertices in $V(G)\setminus (S\cup Q)$ is~$O(k^2)$.
The vertices of $Q$ only have neighbors in $S$, and while they can be in a connected vertex cover, they are only there to make sure that the subgraph induced by the vertex cover is connected.
Thus, if two nodes in $Q$ have exactly the same neighbors in $S$, then we only need one of them in the kernel.
The kernel thus is the graph induced by (1) the nodes in $S$, (2) for every $s\in S$ a set of $k+1$ nodes adjacent to $s$, (3) the nodes in $V(G)\setminus (S\cup Q)$, and (4) for every $Y\subseteq S$ a vertex $q\in Q$ such that $N(q)=Y$ (if such a vertex exists).
The size of the kernel is $O(2^k)$.

To make this kernel dynamic, we store:
\begin{enumerate}
  \item  for every node $v$ its degree $d_G(v)$ in $G$ and a doubly-linked list $N(v)$ of its neighbors (where in addition, every $x$ that is a neighbor of $v$ has a pointer to its position in $N(v)$); 
  \item $S=\{v~|~v\in V(G), d_G(v)>k\}$ as a doubly-linked list (where in addition we have for each $v$, a pointer $s(v)$ to its position in $S$, or \texttt{nil} if $v\notin S$);
  \item for every $v\notin S$, the degree $d_{G-S}(v)$ of $v$ in the subgraph induced by $V(G)\setminus S$;
  \item a doubly-linked list $L$ of the nodes with $d_{G - S}(v)>0$ (where in addition we keep a pointer for each $v$ from $v$ to its position in $L$, or \texttt{nil} if $v$ is not in $L$);
  \item for every $Y\subseteq S$, a list $L_Y$ of nodes $q\notin S$ with $N(q)=Y$ (here again we keep for every~$Y$ and every $q$ a pointer from $q$ to its position in $L_Y$, or \texttt{nil} if $q\notin L_Y$).
\end{enumerate}

To answer a query, form the graph $K$ induced by the $O(2^k)$ nodes of $S$, $L$, for every $s\in S$ the first $k+1$ nodes in~$N(s)$, and the first node in $L_Y$ for each $Y\subseteq S$; then solve the {\sc Connected Vertex Cover} problem on $K$ for parameter $k$, say in $O(4^k)$ time by branching.

Suppose that we need to insert an edge $\{x,y\}$.
We first increment $d_G(x)$ and $d_G(y)$, and add~$y$ to $N(x)$ and $x$ to~$N(y)$; this takes $O(1)$ time.
If $d_G(x)$ became at least $k$, add $x$ to~$S$; similarly with $y$. 
Suppose that $x$ was added to $S$ (do the same for $y$ if $y$ was added to~$S$).
First, for every neighbor $a$ of~$x$ (there are exactly $k+1$ of these since $x$ was just added to~$S$), if $a\in V(G)\setminus S$ (which we can check by seeing if $s(a)=\texttt{nil}$), decrement $d_{G-S}(a)$ and if $d_{G-S}(a)=0$, remove $a$ from $L$.
This takes $O(k)$ time.
Then, go through all subsets $Y\subseteq S$ where $x\in Y$ and for every neighbor $a$ of~$x$, check whether $d_{G-S}(a)=0$ and $\{a,y\}\in E(G)$ for all $y\in Y$ but $\{a,z\}\notin E(G)$ for all $z\in S\setminus Y$, and if so, add $a$ to $L_Y$ ($L_Y$ was initially empty).
Note that there are at most $2^k-1$ such sets $Y$ and $k + 1$ neighbors $a$ of $x$.
This can be done in total $O(k2^k)$ time by following pointers.
Finally, if $x,y$ are in $V(G)\setminus S$, increment $d_{G-S}(x)$ and $d_{G-S}(y)$.
If~$x$ had $d_{G-S}(x)=0$ and now $d_{G-S}(x)=1$, we add $x$ to $L$, and
we remove $x$ from $L_Y$, where~$Y$ is the set of neighbors of $x$ excluding $y$.
Handle $y$ similarly.
The total insertion time is $O(k2^k)$.

Deletions are symmetric. 
Suppose that we need to delete an edge $\{x,y\}$.
We first decrement~$d_G(x)$ and~$d_G(y)$, and remove $y$ from $N(x)$ and $x$ from $N(y)$; this takes $O(1)$ time.
If $d_G(x)$ decreased to a value of at most $k$, remove $x$ from $S$; similarly with $y$. 
Suppose that $x$ was removed from $S$ (do the same for $y$ if $y$ was removed from $S$.).
First, for every neighbor $a$ of $x$ (there are exactly $k$ of these since $x$ was just removed from $S$), if $a\in V(G)\setminus S$ (which we can check by seeing if $s(a)=\texttt{nil}$), increment $d_{G\setminus S}(a)$ and if $d_{G\setminus S}(a)=1$, add $a$ to $L$.
Also, by going through the neighbors of $x$ we compute $d_{G-S}(x)$.
This all takes $O(k)$ time.
Then, go through all subsets $Y\subseteq S\cup \{x\}$ where $x\in Y$, and delete $L_Y$; there are at most $2^k - 1$ such sets $Y$.
Finally, if $x,y$ are in $V(G)\setminus S$ and neither was removed from $S$, decrement $d_{G-S}(x)$ and $d_{G - S}(y)$.
If $x$ had $d_{G - S}(x)=1$ and now $d_{G-S}(x)=0$, we remove $x$ from $L$, and add $x$ to~$L_Y$ where $Y$ is the set of neighbors of $x$.
Handle $y$ similarly.
The total deletion time is $O(k2^k)$.

\subsection{Edge Dominating Set}
For a graph $G$, an \emph{edge dominating set} is a set $D\subseteq E(G)$ of edges such that each edge of $G$ either belongs to $D$ or is incident to some edge in $D$.

\defparproblem{Edge Dominating Set}{$k$}{An undirected graph $G$ and an integer $k\in\mathbb N$.}{Find an edge dominating set of $G$ with size at most $k$.}

To show a dynamic algorithm, we use the well-known (see, e.g., Fernau~\cite{Fernau2006}) relation between edge dominating sets and vertex covers, and then apply our dynamic fixed-parameter algorithm for {\sc Vertex Cover} based on a dynamic kernel. The relation is that to every {\sc Edge Dominating Set} instance $G$ with a solution $D\subseteq V(G)$ of size at most $k$, there is a corresponding vertex cover of size at most $2k$, which consists of the endpoints of the edges in~$D$.

\begin{theorem}
  There is a dynamic algorithm for {\sc Edge Dominating Set} which handles updates in $O(1)$ time and queries in $O(2.2351^k)$ time.
\end{theorem}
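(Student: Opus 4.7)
The plan is to dynamize the standard reduction from {\sc Edge Dominating Set} to {\sc Vertex Cover}. The key observation I would use is that for any edge dominating set $D$ of $G$ with $|D|\leq k$, the set $V(D)=\bigcup_{e\in D}e$ of endpoints forms a vertex cover of $G$ of size at most $2k$. In particular, if $G$ has no vertex cover of size $\leq 2k$, then it has no edge dominating set of size $\leq k$, and the query can be answered negatively.

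I would maintain the Buss-style dynamic vertex cover kernel $K$ with parameter $2k$ provided by the preceding theorem, incurring $O(1)$ amortized time per edge update. Alongside $K$, I would maintain an adjacency hash structure on $G$ supporting $O(1)$-time adjacency queries, which is maintainable in $O(1)$ time per update. Recall that $V(K)$ consists of all vertices with at least one incident kernel edge, and that whenever $G$ admits a vertex cover of size at most $2k$ the kernel satisfies $|V(K)|=O(k^2)$.

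To answer a query, I would first check whether $|E(K)|$ exceeds the $O(k^2)$ kernel size bound; if so, I report that no edge dominating set of size at most $k$ exists. Otherwise, I would read off $V(K)$ and construct the induced subgraph $G[V(K)]$ in time $O(k^4)$ by querying adjacency for every pair of vertices in $V(K)$. Finally I would invoke the static Iwaide--Nagamochi algorithm for {\sc Edge Dominating Set} on $G[V(K)]$ in $O(2.2351^k)$ time and return its output.

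The main obstacle is establishing that $G$ has an EDS of size at most $k$ if and only if $G[V(K)]$ does. For the ``if'' direction, given an EDS $D'$ of $G[V(K)]$, every high-degree vertex $u$ of $G$ still has degree at least $2k+1$ in $G[V(K)]$, because the $2k+1$ edges $u$ selected during the VC kernel construction all have their other endpoint in $V(K)$; hence $u\in V(D')$, as otherwise the $\leq 2k$ vertices of $V(D')$ could not cover $u$'s incident edges. Combined with the fact that every edge of $G$ between two low-degree vertices already lies in $K\subseteq G[V(K)]$, this shows $D'$ dominates every edge of $G$. For the ``only if'' direction, any edge $e=\{u,v\}$ of an EDS $D$ of $G$ with $v\notin V(K)$ must have $v$ low-degree (every high-degree vertex lies in $V(K)$) and $u$ high-degree (else $e$ would be an edge between two low-degree vertices and hence in $K$); I would replace each such $e$ by an edge $\{u,u'\}$ selected by $u$ in the kernel construction, which is safe because $v\notin V(K)$ means no edge of $G[V(K)]$ is incident to $v$, so dropping $v$ uncovers nothing. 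After all replacements we obtain an EDS of $G[V(K)]$ of size at most $|D|\leq k$. This structural case analysis is the only delicate point; everything else is immediate from the properties of the dynamic VC kernel.
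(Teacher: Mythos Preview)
Your approach is essentially the same as the paper's: maintain the dynamic Buss-style vertex cover kernel with parameter $2k$ and answer queries by running the Iwaide--Nagamochi algorithm on the induced subgraph. Your correctness argument is in fact more careful than the paper's terse one---you explicitly handle the ``only if'' direction via an edge-replacement argument, whereas the paper simply asserts that every small edge dominating set already lives inside the kernel vertices.

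There is one small discrepancy worth noting. The paper additionally maintains the set $S$ of vertices of degree greater than $2k$ and runs the static algorithm on $G[V(K)\cup S]$, whereas you use $G[V(K)]$ alone. For the worst-case $O(k)$-update kernel these coincide, since every vertex of degree $\ge 2k+1$ selects exactly $2k+1$ edges and hence lies in $V(K)$; your argument is then fully correct. For the amortized $O(1)$-update kernel, however, a medium-degree vertex that ``looks low'' may have degree $>2k$ in $G$ yet lie outside $V(K)$ (if all its neighbours are high-degree or medium-looking-high and none of them selected the shared edge). Your ``if'' direction then does not go through: such a vertex must be in $V(D')$ for $D'$ to dominate $G$, but it is not even present in $G[V(K)]$. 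The fix is exactly what the paper does---track $S$ alongside the kernel (this costs $O(1)$ per update) and work on $G[V(K)\cup S]$ instead. So either switch to the $O(k)$-update kernel, which makes your argument watertight but weakens the update bound, or keep the amortized kernel but include $S$ and adjust the correctness proof accordingly.
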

\begin{proof}
  For a graph $G$, we will maintain a dynamic kernel $K$ together with the set $S\subseteq V(G)$ of vertices of degree greater than~$2k$.
  We can maintain $K$ and $S$ in just $O(1)$ time per edge update.
  The kernel $K$ that we are dynamically maintaining has the properties that
  \begin{itemize}
    \item if $G$ has a vertex cover with size at most $2k$ then $K$ contains at most $4k^2 + 2k$ vertices, and
    \item every vertex cover of $G$ of size at most $2k$ is equal to the union of $S$ and a subset of $K$.
  \end{itemize}
  Since an edge dominating set of $G$ with size at most $k$ is among vertices of a vertex cover of~$G$ with size at most $2k$, this means that
  \begin{itemize}
    \item if there is an edge dominating set of size at most $k$ then $K$ contains at most $4k^2 + 2k$ vertices and $S$ contains at most $2k$ vertices, and
    \item every edge dominating set of size at most $k$ of our graph is composed of edges among vertices in $K \cup S$.
  \end{itemize}
  We can therefore answer queries as follows.
  If $|K| > 4k^2 + 2k$ or $|S| > 2k$ we return that $G$ has no edge dominating set of size at most $k$.
  Otherwise, we solve edge dominating set on the subgraph of $G$ induced by $K \cup S$, which has at most $4k^2 + 4k$ vertices.
  This can be done using an algorithm due to Iwaide and Nagamochi~\cite{IwaideNagamochi2016}, in $O(2.2351^k)$ time.
\end{proof}

\subsection{Point Line Cover}
\label{sec:pointlinecover}
\defparproblem{Point Line Cover}{$k$}{A set $\mathcal P$ of $n$ points in the plane and an integer $k\in\mathbb N$.}{Find a set of at most $k$ lines passing through all the points in $\mathcal P$.}

Such a set of lines covering all points in $\mathcal P$ is called a \emph{line cover} of $\mathcal P$.
We will also consider the \emph{dual} problem known as \emph{Line Point Cover}.

\defparproblem{Line Point Cover}{$k$}{A set $\mathcal L$ of $n$ lines in the plane and an integer $k\in\mathbb N$.}{Find a set of at most $k$ points passing through all the lines in $\mathcal L$.}

It is known from folklore that these problems are equivalent, by replacing the point $(a,b)$ with the line $y = ax - b$ or vice versa.
These problems fall into a more general class of geometric problems, best described as \emph{covering things with things}~\cite{LangermanMorin2005}, for which we believe it should typically be possible to find dynamic algorithms with small update time.

\begin{theorem}
  There is a dynamic algorithm for {\sc Point Line Cover} which handles edge insertions in $O(g(k)^2)$ time, edge deletions in $O(g(k)^3)$ time, and queries in $O(g(k)^{2g(k)+2})$ time, under the promise that there is a computable function $g$ such that the point set can always be covered by at most $g(k)$ lines.
\end{theorem}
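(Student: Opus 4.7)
The plan is to maintain, under the promise that the instance always admits a cover of size at most $g := g(k)$, a dynamic version of the standard kernel for \textsc{Point Line Cover}. Call a line \emph{rich} if it contains more than $g$ points of $\mathcal P$; every such line is forced into every $k$-cover for $k \leq g$, so there are at most $g$ rich lines at any time, and the residual set $\mathcal P' \subseteq \mathcal P$ of points not on any rich line satisfies $|\mathcal P'| \leq g^2$, since every non-rich cover line contains at most $g$ residual points. The data structure stores the rich line collection $\mathcal R$ with the list of points of $\mathcal P$ on each, the set $\mathcal P'$, for every $p \in \mathcal P$ the list of rich lines through $p$, and a hash table keyed by normalized line equations that, for every line spanned by two residual points, stores the counter $|\ell \cap \mathcal P'|$. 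Because every such line meets each rich line in at most one point, its residual counter is within $|\mathcal R| \leq g$ of the true value $|\ell \cap \mathcal P|$, so a residual counter exceeding $g$ already certifies that $\ell$ is rich.

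To insert a point $p$, first test the $O(g)$ rich lines for incidence with $p$ and, if $p$ lies on any, append it to the appropriate lists. Otherwise add $p$ to $\mathcal P'$ and, iterating over $q \in \mathcal P' \setminus \{p\}$, bucket the pairs by $\ell_{pq}$ and increment each affected counter by one; any counter crossing the threshold $g$ triggers a promotion of the corresponding line, which removes its residual points from $\mathcal P'$ and decrements the counters of all buckets they participated in. To delete a point $p$, if $p$ lies on a rich line $\ell$ remove $p$ from the list of $\ell$, and if that list shrinks to at most $g$ demote $\ell$ by returning its remaining points to $\mathcal P'$ and creating the corresponding buckets against the preexisting residual points; if instead $p \in \mathcal P'$, decrement all $O(|\mathcal P'|)$ counters through $p$ and destroy those that fall below two residual points. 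Outside of promotions and demotions the work is $O(|\mathcal P'|) = O(g^2)$ per operation; a single promotion costs $O(g^3)$ but removes at least $g+1$ residual points, so assigning the potential $\phi = g^2 \cdot |\mathcal P'|$ amortizes promotions against the insertions that created those points and yields $O(g^2)$ amortized cost per insertion, while a demotion also costs $O(g^3)$ but fits directly within the deletion budget.

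Queries are then answered on the maintained kernel. Every rich line is forced into any $k$-cover, so the task reduces to covering the at most $g^2$ residual points by at most $k - |\mathcal R| \leq g$ non-rich lines. Enumerate such covers recursively by picking any residual point $p$ and branching on its cover line: either pair $p$ with another residual point $q$ (so that the cover line is $\ell_{pq}$) or use a singleton sentinel line through $p$, giving $O(|\mathcal P'|) = O(g^2)$ options; delete all collinear residual points and recurse with one fewer line of budget. The branching tree has depth at most $g$ and branching factor $O(g^2)$, producing $O(g^{2g})$ candidate covers, each verifiable in $O(g^2)$ time by scanning $\mathcal P'$; this gives a total query time of $O(g^{2g+2})$. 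The main obstacle is controlling the amortized cost of promotions and demotions within the stated budgets: the potential argument above, together with the fact that any line with more than $g$ residual points is genuinely rich, resolves this while keeping the kernel tractable in size throughout the sequence of updates.
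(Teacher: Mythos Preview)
Your core strategy is the same as the paper's: maintain the standard \textsc{Point Line Cover} kernel consisting of a set of ``forced'' rich lines $\mathcal R$ together with a residual point set $\mathcal P'$ of size at most $g^2$, and answer queries by branching on the residual set. That part is fine.

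Where you diverge is in the bookkeeping, and the extra machinery both weakens the bounds and introduces a gap. You maintain persistent counters $|\ell\cap\mathcal P'|$ for every line spanned by two residual points and use a potential $\phi=g^2|\mathcal P'|$ to amortize promotions. The paper does not keep any counters at all: on inserting $p$ into $\mathcal P'$, it simply buckets the other residual points by the direction $\ell_{pq}$ in $O(g^2)$ time, and if some bucket reaches size $g$ it promotes that one line. Crucially, only \emph{one} promotion is ever needed per insertion, because every candidate rich line passes through $p$; once $p$ leaves $\mathcal P'$ via the promotion, the remaining candidates drop back to at most $g$ residual points. Since promotion in the paper's scheme is just moving $g+1$ points between lists (no counter updates), insertion is $O(g^2)$ \emph{worst case}, not merely amortized.

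Your deletion analysis has an actual gap. You store, for each rich line, the full list of points of $\mathcal P$ on it and allow a point to sit on several rich lines simultaneously (``append it to the appropriate lists''). Then deleting a single point $p$ can drop up to $g$ rich lines below threshold at once, triggering $g$ demotions; with your counter maintenance each demotion costs $\Theta(g^3)$, giving $\Theta(g^4)$ rather than the claimed $O(g^3)$, and the potential increase from returning up to $g^2$ points to $\mathcal P'$ does not help. The paper sidesteps this by keeping each point in exactly one set (either some $\mathcal P_\ell$ or $\mathcal P'$), so a deletion touches at most one rich line and at most one demotion occurs; the $\le g$ freed points are then reinserted one by one at $O(g^2)$ each, for $O(g^3)$ worst case.
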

\begin{proof}
  If $g(k) < k$ then we are always in a ``no''-instance, so assume that $g(k) \geq k$.

  The main idea is to note that if there is a line cover with at most $k$ lines, then any line which passes through at least $k+1$ points must be contained in the line cover.
  In light of our promise, we will only use the weaker fact that any line which passes through at least $g(k)+1$ points must be contained in any line cover with at most $k$ lines.
  We will therefore maintain a set $\mathcal L_H$ of lines which pass through at least $g(k)+1$ points, and for each $\ell \in \mathcal L_H$, a set $\mathcal P_\ell$ of at least $g(k) + 1$ points on that line.
  We will also maintain the set~$\mathcal P'$ of points that are not in $\mathcal P_\ell$ for any $\ell \in \mathcal L_H$.
  We will further maintain that each point is in exactly one such set.

  Since every line in $\mathcal L_H$ must be in a line cover of size at most $g(k)$, our promise implies that we will always have $|\mathcal L_H| \leq g(k)$.
  Furthermore, since no line covers more than $g(k)$ points of $\mathcal P'$, we must always have $|\mathcal P'| \leq g(k)^2$.
  With these remarks, it is straightforward to maintain $\mathcal L_H$, $\mathcal P_\ell$, and~$\mathcal P'$ in $O(g(k)^2)$ time per insertion and $O(g(k)^3)$ time per deletion.
  When a new point $p$ is inserted, we first check for each line $\ell$ in $\mathcal L_H$ whether $p$ is on $\ell$, and if so we add it to $\mathcal P_\ell$ and conclude.
  If it is not on any of these lines, we add it to $\mathcal P'$.
  Then, for each line formed by $p$ and another point in $\mathcal P'$, we check whether that line contains at least $g(k)+1$ points in $\mathcal P'$.
  If we find such a line $\ell'$, we add~$\ell'$ to $\mathcal L_H$, and we remove those $g(k)+1$ points from $\mathcal P'$ and add them instead to $\mathcal P_{\ell'}$.
  When a point~$p$ is removed, we remove it from $\mathcal P'$ if it is in that set.
  If instead it is in $\mathcal P_\ell$ for some line $\ell \in \mathcal L_H$, then we remove it from $\mathcal P_\ell$.
  If $\mathcal P_\ell$ now consists of at most $g(k)$ points, then we remove $\ell$ from $\mathcal L_H$, and reinsert all the points from $\mathcal P_\ell$ as above.

  Finally, we must describe how to go from the sets we maintain to the point line cover of size at most~$k$ (or the conclusion that none currently exists) in order to answer queries.
  First, every line in $\mathcal L_H$ must be included since these lines each contain at least $g(k)+1 \geq k+1$ points.
  If there are more than $k$ such lines, we return that there is no line cover of size at most $k$.
  Otherwise, if there are $a \leq k$ such lines, we need to determine if there is a line cover of $\mathcal P'$ with only $k-a$ lines.
  Since $|\mathcal P'| \leq g(k)^2$, this can be solved in $O(g(k)^{2g(k)+2})$ time, using a simple static branching algorithm such as the one by Langerman and Morin~\cite[Theorem 1]{LangermanMorin2005}.
  If we find such a set $S$ of lines, we return $S \cup \mathcal L_H$, and otherwise we return that there is no line cover of $\mathcal P$ with size at most~$k$.
\end{proof}

\subsection{\texorpdfstring{$d$}{d}-Hitting Set}
\defparproblem{$d$-Hitting Set}{$k,d$}{A universe $U$ and a family $\mathcal F$ of subsets of $U$, each of cardinality exactly $d$.}{Find a set $X\subseteq U$ of at most $k$ elements that intersects all sets in $\mathcal F$.}

This problem generalizes {\sc Vertex Cover}, for which $d = 2$.

In the dynamic model, similar to {\sc Vertex Cover}, subsets of $U$ are inserted and deleted to/from~$\mathcal F$.
It is not hard to generalize the branching tree based dynamic algorithm to obtain a randomized algorithm with expected worst case update time $O(kd^k)$ and query time $O(k)$. (For details, see the end of this subsection.)

Here we will obtain a dynamic kernel based algorithm 
that supports updates in $g(k,d)$ time for some function $g$ that we very loosely bound by $g(k,d)\leq (d!)^d k^{O(d^2)}$.
It easily supports queries in $O(d^{k} d! (k+1)^d)$ time by running a branching algorithm on the kernel. 

Our dynamic kernel is a non-trivial and tricky generalization of the kernel for {\sc Vertex Cover}.
In the process we obtain a novel static kernel for the problem.
Interestingly, we improve (in the dependence on~$d$) upon the previous best linear-time kernel by van Bevern~\cite{vanBevern2014} by obtaining a kernel on $(d-1)! k(k+1)^{d-1}$ sets and $d! k(k+1)^{d-1}$ elements.
Van Bevern's kernel has $d! d^{d+1} (k+1)^d$ sets, and hence we save a factor of ~$d^{d+2} (1+1/k)$.
It is known that a polynomial time constructible kernel of size $O(k^{d-\eps})$ for any constant $\eps>0$ would imply that $\mathsf{NP}\subseteq \mathsf{coNP}/poly$~\cite{DellvanMelkebeek2014}.
Thus, the $k^d$ dependence on the number of sets in the kernel is optimal, assuming $\mathsf{NP}\not\subseteq \mathsf{coNP}/poly$.

\subsubsection{Algorithm Outline}
Before going into the details of the kernel and how to dynamically maintain it, we outline the main ideas behind the algorithm.
We first define the notion of an ``$(\ell, r)$-good subset'' of our universe $U$ of elements. 
A consequence of our formal definition is that a set $S \subseteq U$ is $(\ell, r)$-good if $|S| = \ell$, and there are many ways to add $r$ elements to it to obtain a set $S'$ which is itself good, and hence must be hit by any $d$-hitting set of $(U,\mathcal F)$ with size at most $k$.
This notion is defined recursively: whether or not a subset is $(\ell, r)$-good depends on whether some other related subsets are $(a,b)$-good for either $a > \ell$, or $a = \ell$ and $b < r$.
As a base case, all sets in $\mathcal F$ are good.
Our kernel will correspond to the set of minimal good sets.
In Lemma \ref{lem:dhitkernel} we show that this is a valid kernel, and in Lemma \ref{lem:dhitsmall} we show that, when there is a $d$-hitting set of $(U,\mathcal F)$ with size at most $k$, then the kernel is bounded in size by a function of $k$ and $d$.

We now outline how to dynamically maintain this kernel.
When a set $S$ is added to or removed from~$\mathcal F$, it can toggle which other sets are good. 
There are four ways in which this can happen, which are handled by four subroutines of our algorithm:
\begin{itemize}[labelindent=2em,labelwidth=\widthof{UPSTRONG},itemindent=3.5em,leftmargin=1.5em]
  \item[\texttt{DOWNINS}] If $S$ becomes good, then it might cause subsets of $S$ to become good as well.
    We can iterate over all subsets of $S$ and check whether the conditions have now been met.
  \item[\texttt{UPWEAK}] If $S$ becomes good, then if any supersets of $S$ were good and in the kernel, we need to remove them from the kernel since they are no longer minimal good sets.
    Since $S$ was not good before the update, there are not too many such sets, and we can find them efficiently using some pointers we maintain throughout the algorithm.
  \item[\texttt{DOWNDEL}] If $S$ becomes no longer good, then some of its subsets might also become no longer good, similar to \texttt{DOWNINS}.
  \item[\texttt{UPSTRONG}] If $S$ becomes no longer good, then its good supersets might become minimal good sets, similar to \texttt{UPWEAK}.
\end{itemize}
Notably, each of these procedures can change whether subsets other than $S$ are good, which can in turn trigger more procedures.
The recursive nature of our definition of a good set guarantees that this process can only go on for $k \cdot d$ depth of recursion.

\subsubsection{Static Kernel}
We now begin by formally describing the static kernel.
Let $\mathcal F$ be the given family of sets and let $U$ be the universe.
We will recursively define the notion of a {\em good} set.
\begin{definition}[good set]
  Let $r\in\mathbb N$ and $\nu_r=r! (k+1)^r$.
  Let $d\in\mathbb N$ and let $(U,\mathcal F)$ be an instance of {\sc $d$-Hitting Set}.
  We define the notion of an ``$(\ell,r)$-good'' set inductively, in decreasing order of~$\ell$ from $d$ to~$1$, and for fixed $\ell$, for increasing $r$ from $1$ to $d-\ell$.
  \begin{itemize}
    \item Any set $S\in \mathcal F$ is \emph{$(d,r)$-good} for all $r$.
    \item A set $S\subseteq U$ is \emph{$\ell$-good} if $S$ is $(\ell,r)$-good for some $r$.
    \item A set $S\subseteq U$ is \emph{$(\ell',r)$-strong} if $S$ is $\ell'$-good and does not contain any $(\ell'-j,j)$-good subsets for any $j\in\{1,\hdots,r-1\}$.
   \item A set $S\subseteq U$ is \emph{$(\ell,r)$-good} if $S$ is of size $\ell$ and is contained in at least $\nu_r$ $(\ell+r,r)$-strong sets.
   \item A set $S\subseteq U$ is \emph{good} if $S$ is $\ell$-good for $\ell = |S|$.
  \end{itemize}

\end{definition}
Notice that if a set is $(\ell',r)$-strong, then it is also $(\ell',r')$-strong for all $r'<r$. Also, any $\ell$-good set is $(\ell,1)$-strong.
Further, note that since the notion of $(\ell+r,r)$-strong only depends on $(\ell+a,r-a)$-good sets for $a\geq 1$, the definition of $(\ell,r)$-good is sound.

We will show the following useful lemma:
\begin{lemma}
\label{lem:dhitkernel}
  Let $(U,\mathcal F)$ be an instance of {\sc $d$-Hitting Set}.
  If $(U,\mathcal F)$ admits a hitting set $X$ of size at most~$k$, then any good set $S\subseteq U$ intersects $X$.
\end{lemma}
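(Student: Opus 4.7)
I would prove this by induction on $\ell=|S|$ running from $\ell=d$ downward. The base case $\ell=d$ is immediate: a $d$-good set is $(d,r)$-good, which by definition means it lies in $\mathcal{F}$, and so is hit by $X$ by hypothesis on $X$. For the inductive step, fix $\ell<d$ and assume every $\ell'$-good set with $\ell'>\ell$ intersects $X$. Let $S$ be $\ell$-good, say $(\ell,r)$-good for some $r\in\{1,\dots,d-\ell\}$, and assume towards a contradiction that $S\cap X=\emptyset$.

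By definition, $S$ is contained in at least $\nu_r=r!(k+1)^r$ distinct $(\ell+r,r)$-strong supersets $T_1,\dots,T_{\nu_r}$. Each $T_i$ is $(\ell+r)$-good, so by the inductive hypothesis $T_i\cap X\neq\emptyset$; combined with $S\cap X=\emptyset$, this gives $(T_i\setminus S)\cap X\neq\emptyset$. I would pick an element $x_i\in(T_i\setminus S)\cap X$ for each $i$ and apply pigeonhole on $X$ (of size at most $k$) to extract a single $x\in X$ with $x=x_i$ for at least $\nu_r/k$ values of $i$, so that $S\cup\{x\}$ is contained in at least $\nu_r/k$ of the $T_i$.

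The heart of the argument is to turn this into a contradiction using the definition of strongness. A short calculation gives $\nu_r/k=(r(k+1)/k)\cdot\nu_{r-1}\geq\nu_{r-1}$, which is exactly the numerical slack built into $\nu_r$. When $r\geq 2$, every $(\ell+r,r)$-strong set is also $(\ell+r,r-1)$-strong, so $S\cup\{x\}$ (a set of size $\ell+1$) lies in at least $\nu_{r-1}$ many $(\ell+r,r-1)$-strong sets; therefore $S\cup\{x\}$ is $(\ell+1,r-1)$-good. But each such $T_i$ is $(\ell+r,r)$-strong, and taking $j=r-1$ in the definition of strongness forbids $T_i$ from containing an $(\ell+(r)-j,j)=(\ell+1,r-1)$-good subset, contradicting $S\cup\{x\}\subseteq T_i$. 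For the remaining case $r=1$, I would argue more directly: $\nu_1=k+1$, the $T_i=S\cup\{y_i\}$ are distinct so the $y_i$ are distinct, and each $y_i\in X$, yielding $k+1$ distinct elements of $X$, contradicting $|X|\leq k$.

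The main obstacle I anticipate is not conceptual but definitional bookkeeping: the notions $(\ell,r)$-good and $(\ell',r')$-strong cross-reference one another, and one must verify that the goodness of $S\cup\{x\}$ derived from pigeonhole is exactly of the type ruled out by the strongness of its supersets. The choice $\nu_r=r!(k+1)^r$ is tuned precisely so that the ratio $\nu_r/k$ remains at least $\nu_{r-1}$, so that one can always peel off one coordinate of the $(\ell,r)$ index and feed the argument back in. Apart from that, the induction itself is completely standard once the base case $r=1$ and the recursive case $r\geq 2$ are handled separately as above.
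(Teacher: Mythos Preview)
Your proof is correct. It takes a somewhat different route than the paper, so a brief comparison is in order.

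The paper's argument is a greedy sunflower extraction: starting from the $\nu_r$ many $(\ell+r,r)$-strong supersets $X_1,\dots,X_L$ of $S$, it observes that for each $x\in X_1\setminus S$ the set $S\cup\{x\}$ is \emph{not} $(\ell+1,r-1)$-good (since $X_1$ is $(\ell+r,r)$-strong), hence lies in fewer than $\nu_{r-1}$ of the $X_j$. Discarding all $X_j$ that meet $X_1$ outside $S$ removes fewer than $r\nu_{r-1}$ sets; since $\nu_r=r(k+1)\nu_{r-1}$, one can iterate $k+1$ times and obtain $k+1$ supersets of $S$ that are pairwise disjoint outside $S$. The inductive hypothesis then forces $X$ to hit each of these $k+1$ disjoint petals, contradicting $|X|\le k$.

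Your argument runs in the opposite direction on the same definitional interplay. You invoke the inductive hypothesis \emph{first} to get $X\cap(T_i\setminus S)\neq\emptyset$ for all $i$, then pigeonhole on $X$ to find $x\in X$ lying in at least $\nu_r/k\ge\nu_{r-1}$ of the $T_i$, and conclude that $S\cup\{x\}$ \emph{is} $(\ell+1,r-1)$-good, which directly contradicts the $(\ell+r,r)$-strongness of any $T_i$ containing it. The separate treatment of $r=1$ is clean and necessary, since there the index $(\ell+1,r-1)$ is undefined. Your approach is slightly slicker in that it avoids the greedy extraction altogether; the paper's approach has the minor advantage that it actually exhibits a sunflower of $k+1$ petals with core $S$, which is a structural statement that goes a bit beyond what the lemma asserts.
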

\begin{proof}
  We will prove the statement of the lemma by induction.
  The base case is that $S\in \mathcal F$.
  Then, by definition of a hitting set, $S\cap X\neq \emptyset$.
  For the inductive case suppose that every $\ell'$-good set for $\ell'>\ell$ contains an element in $X$ and let $S$ be an $(\ell,j)$-good set.

  By definition, $S$ is of size $\ell$ and is contained in at least $\nu_r$ $(\ell+r)$-good sets that do not contain any $(\ell+j,r-j)$-good subsets for any $j\in\{1,\hdots,r-1\}$.
  Let the $(\ell+r)$-good sets be $X_1,\ldots,X_L$.
  Consider~$X_1$, and how many $X_j$ intersect it in more than $S$.
  Let $x\in X_1\setminus S$.
  Then $|S\cup \{x\}|=\ell+1$, and $X_1$ does not contain $(\ell+1,r-1)$-good sets.
  Thus, $S\cup \{x\}$ must lie in fewer than $n_{r-1}$ $(\ell+r)$-good sets that do not contain $(\ell+j+1,r-1-j)$-good subsets for any $j\in\{1,\hdots,r-2\}$.
  This also means that $S\cup \{x\}$ is contained in fewer than $\nu_{r-1}$ of the sets $X_1,\ldots,X_L$.
  Thus, if we remove all sets $X_j$ (for $j>1$) that intersect $S\cup \{x\}$ for each $x\in X_1\setminus S$, we have removed fewer than $r\nu_{r-1}$ sets. 
  Since $L\geq \nu_r=r(k+1)\nu_{r-1}$, we can repeat this greedy procedure at least $k+1$ times. The collection of the at least $k+1$ sets $X_1$ only intersect in $S$ and hence $S\cap X\neq \emptyset$ as otherwise $X$ must contain at least one element from each of $k+1$ disjoint sets.
\end{proof}

Let $\mathcal F'$ consist of those sets $S\subseteq U$ that are good and none of their subsets are good.
Let~$U'$ consist of all $u\in U$ that are contained in some set of $\mathcal F'$.
The above lemma states that $(K,k)$ with $K=(U',\mathcal F')$ is a kernel for the instance $(U,\mathcal F)$.
First, if $X'$ is a hitting set of $K$, it must be a hitting set for $\mathcal F$ as well since for every $F\in \mathcal F$, either $F\in \mathcal F'$ or some subset of $F$ is in $\mathcal F'$.
Furthermore, let $X$ be a hitting set of~$\mathcal F$ with size at most $k$.
By the lemma, if some $S$ is in $\mathcal F'$, then it must intersect~$X$ non-trivially and so~$X$ is a hitting set of~$\mathcal F'$ as well.

Now we argue about the size of $K$.
\begin{lemma}
\label{lem:dhitsmall}
  If $(U,\mathcal F)$ (and hence also $(U',\mathcal F')$) admits a hitting set of size at most $k$, then $|U'|\leq d|\mathcal F'|$ and $|\mathcal F'|\leq (1+\frac{2}{(k+1)(d-1)})\cdot d! (k+1)^{d}$.
\end{lemma}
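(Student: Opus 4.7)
The bound $|U'| \leq d|\mathcal F'|$ is immediate: $U'$ is by definition the union of the sets in $\mathcal F'$, and every good set has size at most $d$, so $|U'| \leq \sum_{S \in \mathcal F'} |S| \leq d|\mathcal F'|$.

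For $|\mathcal F'|$, the plan is to bound the kernel size-by-size. For $\ell \in \{1,\ldots,d\}$, let $M_\ell$ denote the number of sets of size $\ell$ in $\mathcal F'$, so $|\mathcal F'| = \sum_{\ell=1}^{d} M_\ell$. The key structural observation is that every $S \in \mathcal F'$ of size $\ell$ is in fact $(\ell,\ell)$-strong: it is $\ell$-good by definition, and since no proper subset of $S$ is good, in particular $S$ contains no $(\ell-j,j)$-good subset for any $j \in \{1,\ldots,\ell-1\}$. So $M_\ell$ is at most the number of $(\ell,\ell)$-strong sets.

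To bound the number of $(\ell,\ell)$-strong sets, I would extend the greedy ``sunflower'' elimination argument from the proof of Lemma~\ref{lem:dhitkernel} to the degenerate case $S = \emptyset$ with $r = \ell$. That argument, given $\geq \nu_r$ many $(\ell+r,r)$-strong supersets of a seed set $S$, greedily extracts $k+1$ of them whose pairwise intersection is exactly $S$; nothing in the elimination itself actually requires $S$ to be nonempty. Applied with $S = \emptyset$ and $r = \ell$, if there were $\nu_\ell$ or more $(\ell,\ell)$-strong sets we would extract $k+1$ pairwise disjoint $\ell$-good sets. Each such set must meet the hitting set $X$ by the already-established conclusion of Lemma~\ref{lem:dhitkernel}, forcing $|X| \geq k+1$ and contradicting $|X| \leq k$. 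Hence $M_\ell \leq \nu_\ell - 1 < \nu_\ell$ for every $\ell \in \{1,\ldots,d\}$.

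Summing, $|\mathcal F'| \leq \sum_{\ell=1}^{d} \nu_\ell$. Since the ratio $\nu_{\ell-1}/\nu_\ell = 1/(\ell(k+1))$ is at most $1/2$ for all $\ell \geq 2$ and $k \geq 0$, the tail forms a geometric series dominated by $\sum_{\ell=1}^{d-1} \nu_\ell \leq 2\nu_{d-1}$, and therefore
\[
|\mathcal F'| \;\leq\; \nu_d + 2\nu_{d-1} \;=\; d!(k+1)^d\!\left(1 + \frac{2}{d(k+1)}\right) \;\leq\; \left(1 + \frac{2}{(k+1)(d-1)}\right) d!(k+1)^d,
\]
as required. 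The main obstacle will be a careful reading of the greedy argument inside Lemma~\ref{lem:dhitkernel}: one must check that no step of it exploits any property of $S$ beyond the simple facts that every extracted set contains $S$ and is good, so that the ``degenerate'' instantiation with $S = \emptyset$ legitimately yields $k+1$ pairwise disjoint good sets.
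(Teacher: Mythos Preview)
Your argument is correct but organized differently from the paper's. The paper counts $\mathcal F'$ \emph{element-wise}: it fixes an element $u$ of the hitting set $X$, observes that if $\{u\}\notin\mathcal F'$ then $\{u\}$ is not $(1,r)$-good for any $r$, and hence $u$ lies in fewer than $\nu_r$ sets of $\mathcal F'$ of size $r+1$ (these sets being $(r+1,r)$-strong since they are minimal-good). Summing over $r$ and over the at most $k$ elements of $X$ gives $|\mathcal F'|\le k\sum_{r=1}^{d-1}\nu_r$, which is then relaxed to the stated bound. You instead count \emph{size-wise}: each size-$\ell$ member of $\mathcal F'$ is $(\ell,\ell)$-strong, and you re-run the sunflower extraction from Lemma~\ref{lem:dhitkernel} with $S=\emptyset$ to cap the number of such sets by $\nu_\ell$, then sum over $\ell$.

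Both routes are valid and end at the same inequality $\sum_r \nu_r \le (1+\tfrac{2}{(k+1)(d-1)})(d-1)!(k+1)^{d-1}$ (or your equivalent geometric estimate). The paper's approach is slightly more direct---it reads the bound straight off the failure of $(1,r)$-goodness rather than re-invoking the greedy procedure---and its intermediate bound $k\sum_{r=1}^{d-1}\nu_r$ is tighter than your $\sum_{\ell=1}^{d}\nu_\ell$ by roughly a factor of $d$, though both collapse to the same stated lemma. Your approach has the virtue of making the sunflower mechanism do all the work uniformly; the one thing to state explicitly in a write-up is that the case $\ell=1$ is handled separately (distinct good singletons are automatically disjoint), since the instantiation ``$\{x\}$ is not $(1,\ell-1)$-good'' only makes sense for $\ell\ge 2$.
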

\begin{proof}
  If $\{u\}\in\mathcal F'$, then no other set containing $u$ can be in~$\mathcal F'$. 
  Otherwise, consider all sets of size $r+1$ in~$\mathcal F'$ that contain $u$, for any choice of $r\in\{1,\hdots,d-1\}$.

  Since $\{u\}\notin \mathcal F'$, we know that $u$ cannot be $(1,r)$-good, and thus $u$ is contained in fewer than $\nu_r$ $(r+1)$-good sets that do not contain any $(j+1,r-j)$-good subsets for any $j\in\{2,\hdots,r-1\}$.
  Now since for every $F\in \mathcal F'$ we have that it contains no good subsets, this means that $u$ is contained in fewer than~$\nu_r$ sets in $\mathcal F'$ of size $r+1$.

  Thus, the number of sets of $\mathcal F'$ containing $u$ is at most 
  \begin{equation*}
    \sum_{r=1}^{d-1}\nu_r    = \sum_{r=1}^{d-1} r! (k+1)^r
                          \leq \left(1+\frac{2}{(k+1)(d-1)}\right) (d-1)!(k+1)^{d-1},
  \end{equation*}
  where the last inequality can be proven inductively.
  Thus, if there is a hitting set of size at most $k$ for $\mathcal F'$, then the size of $\mathcal F'$ is at most $(1+\frac{2}{(k+1)(d-1)})d! (k+1)^d$.
\end{proof}

Before we discuss how to dynamically maintain this kernel, we remark that it can be statically computed efficiently.

\begin{lemma}
  The static kernel can be computed in $O(3^d n+m)$ time.
\end{lemma}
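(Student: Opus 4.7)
The key observation is that every good set must be a subset of some $S\in\mathcal F$: the sole base case of the inductive definition is that members of $\mathcal F$ are $(d,r)$-good, and every recursive case produces a good set that is a subset of some already good set of size $\ell+r\leq d$. Consequently the only candidate subsets we ever have to examine are the (at most $2^d m$) subsets of members of $\mathcal F$. I would preprocess $\mathcal F$ by enumerating all these subsets once and storing them in a hash table (or a trie keyed by the sorted element list), so that every subsequent lookup, insertion, or counter-update on a subset costs $O(d)$.

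The computation itself is a dynamic program over the pair $(\ell,r)$, following exactly the recursion order in the definition: process $\ell$ from $d$ down to $1$, and for each $\ell$ process $r$ from $1$ up to $d-\ell$. When $(\ell,r)$ is reached, all $(\ell',\cdot)$-good statuses for $\ell'>\ell$ and all $(\ell,r')$-good statuses for $r'<r$ are already known; these are exactly the data required to identify the $(\ell+r,r)$-strong sets (an $(\ell+r)$-good set $T'$ fails to be $(\ell+r,r)$-strong iff it has a subset that has been marked $(\ell+r-j,j)$-good for some $1\leq j<r$, a test we can perform in $O(2^d\cdot d)$ time per candidate $T'$ by iterating over its proper subsets). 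Having collected the $(\ell+r,r)$-strong sets, I iterate over each such $T'$ and each of its $\binom{\ell+r}{\ell}$ subsets $T\subset T'$ of size $\ell$, incrementing a counter $c_{\ell,r}(T)$; after all strong sets have been processed, a subset $T$ is declared $(\ell,r)$-good precisely when $c_{\ell,r}(T)\geq \nu_r$. Once all good statuses are fixed, one final sweep collects $\mathcal F'$ as the set of good subsets none of whose proper subsets are good.

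\textbf{Running time.} The dominant term is the total work expended on triples $(T,T',S)$ with $T\subseteq T'\subseteq S\in\mathcal F$: each element of $S$ independently lies in exactly one of $T$, $T'\setminus T$, or $S\setminus T'$, so the number of such ordered pairs $(T,T')$ per $S$ equals $3^d$. Each triple is touched a constant number of times across all $(\ell,r)$ stages, contributing $O(3^d m)$ in total (where each constant hides an $O(d)$ factor for hash/trie operations). Adding $O(dn)$ to initialize per-element bookkeeping and $O(dm)$ to read the input gives the claimed bound (and in particular fits within $O(3^d n+m)$ after absorbing constants into the $3^d$). The main obstacle, and the part that most merits care in the full proof, is arranging the data structures so that the strength-test ``does $T'$ contain any $(\ell+r-j,j)$-good subset for some $j<r$?'' is not recomputed wastefully across iterations; this is handled by, at each stage, marking newly good subsets in the trie and having each superset remember the smallest such mark, so that each ``is-strong?'' check reduces to a single flag lookup rather than a fresh subset enumeration.
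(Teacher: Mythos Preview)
Your approach is correct and reaches the same $O(3^d\cdot|\mathcal F|)$ bound, but the paper organizes the computation rather differently, and the difference is worth noting.

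\textbf{What the paper does.} Instead of a double loop over $(\ell,r)$ with an auxiliary ``smallest-good-subset'' flag per candidate set, the paper runs a \emph{single} outer loop over a size parameter $i$ (decreasing), and maintains one list $R$. The invariant is that at iteration $i$, every set $S\in R$ of size $j>i$ is already $(j,j-i)$-strong. Thus, when you enumerate the size-$i$ subsets $T$ of each $S\in R$ and increment a counter $c(T,j-i)$, you are counting exactly the right strong supersets for $(i,j-i)$-goodness---no separate strongness test is ever performed. After marking the newly good size-$i$ sets and inserting them into $R$, the paper simply removes from $R$ any set that now has a good size-$i$ subset; this restores the invariant for the next iteration. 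In short, ``being in $R$'' \emph{is} the strongness certificate, so the flag machinery you introduce is unnecessary.

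\textbf{What each approach buys.} Your explicit DP over $(\ell,r)$ is conceptually closer to the recursive definition and makes the correctness argument transparent, but it forces you to build and maintain superset pointers and per-set flags, and to argue (as you do in your last paragraph) that flag updates do not blow up the cost. The paper's single-list formulation sidesteps all of that: the only nontrivial work is the subset enumeration itself, and the $3^d$ bound falls out directly from the observation that, for each $A\in\mathcal F$ and each $B\subseteq A$, across all iterations one enumerates subsets of $B$ of distinct sizes, hence each subset of $B$ at most once---precisely the triple count you identified.

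\textbf{Minor remarks.} Your runtime hides an $O(d)$ factor for hash/trie operations, so strictly you get $O(d\cdot 3^d\,|\mathcal F|)$; the paper's formulation (using array indexing over a fixed enumeration of subsets of each $A\in\mathcal F$) avoids this. Also, your sentence ``fits within $O(3^d n+m)$ after absorbing constants'' conflates $n$ and $m$: the dominant $3^d$ term scales with the number of \emph{sets} in $\mathcal F$, not with $|U|$, so be careful matching your bound to the lemma's stated form.
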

\begin{proof}
  We will initialize a list $R$ to contain all sets in $\mathcal F$.
  Then we will iterate $d$ times, once for each size~$i$ of a set.
  In iteration $i$ we have the current list $R$ of sets of size $\geq i$ such that (1) they are all good and (2) if their size is $j>i$, then they have no $(t,j-t)$-good subsets for any $i\leq t<j$, and are thus $(j,j-i)$-strong.

  Consider each $S\in R$, and let $j = |S|$.
  For every subset $T\subseteq S$ of size $i$, increment the counter $c(R,j-i)$ that determines whether $T$ needs to become $(i,j-i)$-good.
  After this, go through all touched subsets $T$ of size $i$ and mark each $T$ whose counter became big enough to make it $(i,j-i)$-good, also inserting $T$ into $R$.
  After this, iterate through $R$ again and for every set $S$ of size $>i$, check if it has a subset of size $i$ that is now marked and hence good; if so, remove $S$ from $R$.

  This maintains the invariant that at each iteration, the sets in $R$ are either of size $i$, or are of size $j > i$ and are $(j,j-i)$-strong.
  Thus at the end of all iterations, we will only have minimally good sets. 
  Also by induction we can show that for any $(i,j-i)$-good set $X$ of size $i$, in iteration $i$ all its $(j,j-i)$-strong supersets (for any $j$) would be in $R$ and hence~$X$ will be added to~$R$ and all its supersets removed.
  Hence~$R$ will consists of all minimal good sets and no other sets.

  The total runtime is $O(m +n\sum_{i=1}^d \binom{d}{i} 2^i)\leq O(m+ n3^d)$.
  This is because, for every $i$, every set $A\in\mathcal F$, and every subset $B\subseteq A$ of size $i$, over all iterations in which $B$ is in $R$, we iterate over at most all of its subsets.
  Moreover, in different iterations, we iterate over subsets of different sizes, so that no subset is considered more than once as a subset of $B$.
\end{proof}

\subsubsection{Dynamically Maintaining the Kernel}
To make the kernel dynamic, we need to be able to maintain the subsets $S$ that are $(\ell,r)$-good and those that are inclusion minimally good.

First, we keep for every set $S\subseteq U$ of size $\ell$ a list $L_{S,r}$ of pointers to all $(\ell+r,r)$-strong sets that contain~$S$.
We also keep $c(S,r)=|L_{S,r}|$.
Note that if $c_{S,r}\geq \nu_r$, then $S$ is $(\ell,r)$-good.

We will first show how to update $\mathcal F'$, given that we know for every set for which $r$ it is $(\ell,r)$-good. 

Consider the set $W$ of sets that the goodness update algorithm updates.
Their number is $g(k,d)$ for some function $g$ that we will bound later.
Consider each set $S\in W$ in {\em order of non-decreasing size}. 

Suppose first that $S$ is good; it might have been good or not before the goodness update. Consider all its (at most $2^d$) subsets: if $S$ has no subset that is good, then we add it to $\mathcal F'$, and if it has a good subset, then we remove it from $\mathcal F'$. 

Now, if $S$ used to be not good but the updates made it good, then we need to make sure that none of its supersets are in $\mathcal F'$.
Note that if $S$ has a good subset, then all its supersets have been handled by handling one if its newly good subsets, if any.
Hence, we can assume that $S$ has no good subsets and has been added to $\mathcal F'$.

We only care about those supersets $T$ of $S$ that used to be in $\mathcal F'$ and hence were good-minimal.
Every such $T$ was by definition $(|T|,r)$-strong for all $r$.

However, since $S$ was not good before the update, then for every $r$ it had $<\nu_r$ $(|S|+r,r)$-strong supersets, and in particular $T$ was in $L_{S,|T|-|S|}$ before the goodness update.
The goodness algorithm then should leave behind old copies of the lists of any modified set, and then we can use these lists to access all relevant supersets of $S$ in $O(\sum_r \nu_r)\leq O(d! (k+1)^d)$ time and remove them from ${\mathcal F'}$.

Now suppose that $S$ was updated and is not good.
If $S$ was not good before the update, it was not in~$\mathcal F$ so we do not need to do anything. 
Hence assume that $S$ was good before the update and now is not.
If $S$ has some good subset $T$, then we do not need to do anything since either $T$ was good to begin with and $S$ was not in $\mathcal F'$, or $T$ is newly good and then $T$ was processed already (and hence all its supersets were fixed).
Suppose that $S$ has no good subsets.
If one of them is newly not good, we have fixed $S$ already.
Thus, all subsets of $S$ were not good to begin with.
Suppose that $S$ was in $\mathcal F'$.
Now we remove it.
We need to find all inclusion minimal supersets of $S$ that are good.
Since $S$ is currently not $(\ell,r)$-good for any $r$ and all supersets we care about are $r$-strong for all $r$ due to minimality, the number of supersets of $S$ that we care about is at most $\sum_r \nu_r\leq d d! (k+1)^d$.
We go through all of these (via $L(S,\cdot)$) and add them to $\mathcal F'$ if they have no good subsets.

Thus, if the goodness updates can run in $g(k,d)$ time for some function $g$, then the entire update procedure can run in $g'(k,d)$ time for some function $g'$.

Now we show how to update the $(\ell,r)$-goodness information. 
For this task, we have four procedures, that we now describe in detail.

\medskip
\noindent
\textbf{Procedure $\mathtt{DOWNINS}(d',Q)$.}
This procedure is given a set $Q$ of size $d'$ that has newly become $(d',r)$-good for some $r$.
The goal of the procedure is to consider all subsets of $Q$, check whether they have become $(d'-j,j)$-good and we need to fix their status and propagate the changes to other affected subsets.
An invariant of this procedure is that any set affected by it is of size at most $d'$.

For $i$ from $1$ down to $d'-1$ we consider the subsets $T$ of $Q$ of size $d'-i$.
Say we are at~$i$: we check whether~$Q$ is $(d',i)$-strong and if so, we add it to $L_{T,i}$ for every $T\subseteq Q$ of size $d'-i$, updating the counters~$c(T,i)$.
Some sets $T$ might now become $(d'-i,i)$-good.

First notice that if some $T$ became $(d'-i,i)$-good for some $i$, no subset of $Q$ can become $(d'-j,j)$-good for any $j>i$ since $Q$ will not be $(d',j)$-strong for any $j>i$ since it contains a set that is $(d'-i,i)$-good.
Thus, there is at most one step $i$ in which sets $T$ become $(d'-i,i)$-good.
Updating lists and counters for the subsets of $Q$ takes $O(2^{d'})$ time.

Now for (at most 1) $i$, some sets $T$ become $(d'-i,i)$-good.
Two steps need to happen for every one of the at most ${\binom{d'}{i}}$ such $T$. 
The first step happens if $i$ is the only value for which~$T$ is good.
Then $T$ was not good at all before the update.
Then some of its subsets might need to become good and propagate. 
This is just a call to $\mathtt{DOWNINS}(d'-i,T)$.
Note that this call only touches sets of size at most $d'-i$.

The second step is required because some supersets of $T$ may need to update their strongness information since now they contain a $(d'-i,i)$-good set.
The only supersets affected are those of size~$d'$ and that are $(d',j)$-strong for some $j>i$, from the definition of strongness and now might only be $(d',i)$-strong.
Moreover, we know that since $T$ was not $(d'-i,i)$-good before, it had at most $\nu_i-1$ supersets that can be $(d',i)$-strong; together with $Q$, these are all the $(d',i)$-strong sets in $L(T,i)$.
Thus it can also have at most $\nu_i-1$ supersets~$Q'$ that are $(d',j)$-strong since for $j>i$, any $(d',j)$-strong set is $(d',i)$-strong and these sets can all be found in $L(T,i)$.
We thus call $\mathtt{UPWEAK}(d',i)$ on all supersets $Q'\in L(T,i)$ that are $(d',j)$ strong since for some $j>i$.
These calls only touch sets of size $d'$ and size at most $d'-i-1$.
Thus, the sets of size $d'-i$ will be finished before the $\mathtt{UPWEAK}$ calls.

Thus for some $i>0$, the runtime is
\begin{equation*}
  T(\mathtt{DOWNINS}(d')) = O(2^{d'}) + {\binom{d'}{i}}\cdot [T(\mathtt{DOWNINS}(d'-i)) + \nu_i\cdot T(\mathtt{UPWEAK}(d',i))] \enspace .
\end{equation*}

\medskip
\noindent
\textbf{Procedure $\mathtt{UPWEAK}(d',i,S)$.}
Here we are given a set $S$ that is $(d',j)$-strong for some $j>i$ but now needs to be made no longer $(d',j)$-strong for all $j>i$.
We first set a flag that $S$ is no longer $(d',j)$-strong.
We then access all subsets $T$ of $S$ of size $d-j$ for each choice of $j>i$; there are at most $2^{d'}$ of these. 
We remove $S$ from $L_{T,j}$, decrementing the counter $c(T,j)$.
Now $T$ might not be $(d'-j,j)$-good anymore, so we need to propagate this information to all their subsets and supersets.

Note that the subsets only need to change their information if $T$ is not $(d'-j,r)$ good for any $r$.
We detect when this happens using our counters $c(T,r)$, and if it does, we propagate the information to the subsets using a call to $\mathtt{DOWNDEL}(d'-j)$.
Note that this propagation accesses only sets of size at most $d'-j$.

We propagate the information to the supersets of $T$ as follows:
We need to fix any supersets of size~$d'$ that might need to become $(d',k)$-strong for $k>j$ due to making $T$ no longer $(d'-j,j)$-good.
We only need to look at those supersets of $T$ that are already $(d,j)$-strong, since a $(d',k)$-strong set for $k>j$ must also be $(d',j)$-strong.
The number of such supersets of $T$ is smaller than $\nu_j$, since $T$ is not $(d'-j,j)$-good.
We call $\mathtt{UPSTRONG}(d',j, S')$ on every such superset $S'$ that is $(d',j)$-strong (note that $S$ will not be touched).
This procedure only touches sets of size $d'$ or of size at most $d'-j-1$.

Thus overall, the call to $\mathtt{UPWEAK}$ only touches sets of size $d'$ or of size at most $d'-i-1$.
The runtime is $T(\mathtt{UPWEAK}(d',i))\leq \sum_{j>i} {\binom{d'}{j}}\cdot [T(\mathtt{DOWNDEL}(d'-j)) + \nu_j\cdot T(UPSTRONG(d',j))]$.
A loose upper bound on this runtime is
\begin{equation*}
  T(\mathtt{UPWEAK}(d',i))\leq 2^{d'}\cdot [T(\mathtt{DOWNDEL}(d'-i-1)) + \nu_{d'}\cdot T(\mathtt{UPSTRONG}(d',i+1))] \enspace .
\end{equation*}

\medskip
\noindent
\textbf{Procedure $\mathtt{DOWNDEL(d',Q)}$.}
Here one is given a set $Q$ of size $d'$ that has become no longer good after an update.
The goal is to propagate the information to all subsets of $Q$ and if they are changed, then to their supersets and so on.
The approach is symmetric to that of $\mathtt{DOWNINS}$, and the same invariant is maintained: the procedure only touches sets of size at most $d'$.

For $i$ from $1$ down to $d'-1$, we consider the subsets $T$ of $Q$ of size $d'-i$.
Say we are at $i$: we check whether $Q$ was in $L_{T,i}$ and if so, remove it and decrement $c(T,i)$.
If now $c(T,i)<\nu_i$ holds, then $T$ is no longer $(d'-i,i)$-good, and we will need to propagate this information to its subsets and supersets.

First notice that if some $T$ just became no longer $(d'-i,i)$-good for some $i$, then no subset of $Q$ was $(d'-j,j)$-good for any $j>i$, as before the update $Q$ was not $(d',j)$-strong for any $j>i$ as it contained a $(d'-i,i)$-good set.
Thus, there is at most one step $i$ in which sets $T$ become no longer $(d'-i,i)$-good.
Updating lists and counters for the subsets of $Q$ takes $O(2^{d'})$ time.

Now for (at most 1) $i$, some sets $T$ become no longer $(d'-i,i)$-good.
Two steps need to happen for every one of the at most ${\binom{d'}{i}}$ such $T$. 
The first step happens if $T$ is not good at all anymore (and so $T$ was $(d'-i,i)$-good only for $i$).
Then some of its subsets may no longer be good, and we propagate this information.
This amounts to a call to $\mathtt{DOWNDEL}(d'-i,T)$.
Note that this call only touches sets of size at most $d'-i$.

The second step happens as some supersets of $T$ may need to update their strongness information, since now the number of their $(d'-i,i)$-good subsets drops.
If this number drops to $0$, then they may become $(d-i',j)$-strong for some $j>i$. 
The only supersets affected are those of size $d'$ and that are already $(d',i)$-strong. 
Moreover, we know that since~$T$ is not $(d'-i,i)$-good, it has at most $\nu_i-1$ supersets that can be $(d',i)$-strong, and all of them are in $L(T,i)$.
We accomplish this propagation of information by calling $\mathtt{UPSTRONG}(d',i)$ on all supersets $Q'\in L(T,i)$.
These calls only touch sets of size $d'$ and size at most $d'-i$.
Thus, handling sets of size $d'-i$ will be finished before the $\mathtt{UPSTRONG}$ calls.

Thus for some $i>0$, the runtime is bounded by $T(\mathtt{DOWNDEL}(d'))\leq O(2^{d'}) + {\binom{d'}{i}}\cdot [T(\mathtt{DOWNDEL}(d'-i)) + \nu_i\cdot T(\mathtt{UPSTRONG}(d',i))]$.
A loose bound on this expression is $T(\mathtt{DOWNDEL}(d'))\leq 2^{d'}\cdot [T(\mathtt{DOWNDEL}(d'-1)) + \nu_{d'}\cdot T(\mathtt{UPSTRONG}(d',1))]$.

\medskip
\noindent
\textbf{Procedure $\mathtt{UPSTRONG}(d',i,S)$.}
Here we are given a set $S$ that is $(d',i)$-strong but the number of its $(d'-i,i)$-good subsets has dropped to $0$, and we need to make $S$ now $(d',i+1)$-strong and maybe also $(d',j)$-strong for other values $j>i$.

We start with $j=i+1$ and then access all subsets $T$ of $S$ of size $d'-j$; there are at most~$2^{d'}$ of these over all $j$. 
We add $S$ to $L_{T,j}$, incrementing the counter $c(T,j)$.
If none of these sets $T$ becomes $(d'-j,j)$-good, we conclude that $S$ needs to also be $(d',j+1)$-strong, and so we increment~$j$ and loop again.

Now suppose that we reach some $j>i$ such that some set $T$ has become $(d'-j,j)$-good (if $c(T,j)\geq \nu_{j}$), so we need to propagate this information to all the subsets and supersets of~$T$.
Then we see that we can stop incrementing $j$, since $S$ is not $(d',\ell)$-strong for any $\ell>j$.
Hence, we only need to deal with the newly $(d'-j,j)$-good subsets.

Fix any $T$ that became $(d'-j,j)$-good.
Note that the subsets of $T$ only need to change their information if $T$ was not good at all before it just became $(d'-j,j)$-good.
We detect when this happens using our counters $c(T,r)$, and if it does, we propagate the information to the subsets using a call to $\mathtt{DOWNINS}(d'-j)$.
Note that this propagation accesses only sets of size at most $d'-j$.

We propagate the information to the supersets of $T$ as follows:
We need to fix any supersets of size~$d'$ that might need to stop being $(d',\ell)$-strong for $\ell>j$ due to making $T$ $(d'-j,j)$-good.
We only need to look at those supersets of $T$ that are already $(d,j)$-strong, since a $(d',\ell)$-strong set for $\ell>j$ must also be $(d',j)$-strong.
The number of such supersets of $T$ is smaller than $\nu_j$, since $T$ was not $(d'-j,j)$-good before the update, and we can access all these sets through $L(T,j)$.
We call $\mathtt{UPWEAK}(d',j, S')$ on every such superset~$S'$ that is $(d',j)$-strong.
This procedure only touches sets of size $d'$ and of size at most $d'-j-1$.

Thus overall, the call to $\mathtt{UPSTRONG}$ only touches sets of size $d'$ or of size at most $d'-i-1$.
The runtime is $T(\mathtt{UPSTRONG}(d',i))\leq \sum_{j>i} {\binom{d'}{j}}\cdot [T(\mathtt{DOWNINS}(d'-j)) + \nu_j\cdot T(\mathtt{UPWEAK}(d',j))]$.
A loose upper bound on this expression is
\begin{equation*}
  T(\mathtt{UPSTRONG}(d',i))\leq 2^{d'}\cdot [T(\mathtt{DOWNINS}(d'-i-1)) + \nu_{d'}\cdot T(\mathtt{UPWEAK}(d',i+1))] \enspace .
\end{equation*}

This completes the description of the procedures.

Using induction, one can prove the following somewhat loose bounds on their run time:
\begin{eqnarray*}
  T(\mathtt{UPSTRONG}(d',i)),T(\mathtt{UPWEAK}(d',i)) & \leq & O(2^{(d'+1)(d'-i)}\nu_{d'}^{d'-i}),\\
     T(\mathtt{DOWNINS}(d')), T(\mathtt{DOWNDEL}(d')) & \leq & O(2^{(d'+1)d'} \nu_{d'}^{d'}) \enspace .
\end{eqnarray*}

Suppose that a set $Q$ of size $d$ is inserted into $\mathcal F$.
Then we simply call $\mathtt{DOWNINS}(d,Q)$.
If~$Q$ is deleted, we call $\mathtt{DOWNDEL}(d,Q)$.
These runtimes subsume the time to update $\mathcal F'$, so the total update runtime is at most $O(2^{d(d+1)}\nu_d^d) = O(2^{d(d+1)} (d!)^d (k+1)^{d^2}) = (d!)^d k^{O(d^2)}$.

We point out that we did not try to optimize the update time bound here, only striving to achieve a function of $k$ and $d$.

\subsubsection{Dynamic branching tree}
For completeness, we include the generalization of the dynamic branching tree for {\sc Vertex Cover} to {\sc $d$-Hitting Set}.

Given an instance $(\mathcal F,U)$ of {\sc $d$-Hitting Set}, we will keep a binary branching tree $\mathcal T$ whose every non-leaf node contains a set in $\mathcal F$ to branch on.
Every non-leaf subtree $\mathcal T'$ represents an instance $(\mathcal F',U)$ of {\sc $d$-Hitting Set}, where $\mathcal F'\subseteq \mathcal F$.

If a tree node $t$ contains a set $F = (u_1,\hdots,u_{d'})\in\mathcal F_t$ (for $d'\leq d$) and the subtree $\mathcal{T}_t$ rooted at $t$ corresponds to $\mathcal F'$, then the $i$-th subtree of $t$ corresponds to putting $u_i$ into the hitting set $X$ and contains the sets in $\mathcal F_t$ that are \emph{not} hit by $u_i$, for $i = 1,\hdots,d'$.

We will maintain the invariant that every non-leaf node $v$ of $\mathcal T$ contains an edge chosen uniformly at random among the sets in the set $\mathcal F_t$ corresponding to the subtree under $v$.

A node $t$ of $\mathcal T$ becomes a leaf if its subset $\mathcal F_t$ is empty, in which case it is a ``yes''-node, or if it is at depth~$k$, so that $k$ elements from $U$ have been added to the hitting set along the path from the root to it, and then it is a ``no''-node if the set $\mathcal F_t$ is non-empty and a ``yes''-node otherwise.

Suppose that a new set $F$ in inserted.
To update $\mathcal T$, we traverse each node of $\mathcal T$ starting at the root, as follows.
Suppose that we are at a node $v$ such that its corresponding instance $(\mathcal F_t,U)$ contains exactly $x_t = |\mathcal F_t|$ sets.
Then after adding the set $F$ increases the number of sets to $x_t + 1$.
We need to maintain the invariant that the sets in the nodes of $T$ are random in their subtrees.
Hence, with probability $1/(x_t + 1)$  we replace the set in $v$ with $F$ and rebuild the entire subtree of $T$ under~$v$.
If $F$ is not added into $v$, then the set $F$ in $v$ has probability $(1 - 1/(x_t + 1)) \cdot (1/x_t) = 1/(x_t + 1)$ of being in $v$, and so the invariant is maintained.
How long does this take?
If $v$ is at distance $i$ from the root, then the number of nodes under it is at most $ d^{k-i}$ since there are only $k-i$ branching decisions left.
At each such node $v$ at distance $i$ from the root, a random set is placed and for each of the two children, at most $x_t$ sets need to be looked at to check whether they are covered.
This takes no more than $O(x_td^{k-i})$ time to rebuild the subtree.
However, this only happens with probability $1/(x_t + 1)$, so in expectation the runtime is $O(d^{k-i})$.
There are $d^i$ nodes at distance $i$ from the root, so level $i$ of the tree is rebuilt in expected time $O(d^k)$ and the entire tree in expected time $O(kd^k)$.
This rebuilding ensures that each set in a subtree appears with the same probability at the root of the subtree maintaining our invariant.
To find a hitting set of size $k$, one only needs to traverse the at most~$d^k$ leaves of the tree and find a ``yes''-node if one exists (if it does not, just return ``no'' as there is no hitting set of size at most $k$).
After finding the ``yes''-node, just go up the tree to the root to find the elements chosen to be in the hitting set.
In fact, we can improve the query time by maintaining in addition a pointer to a ``yes''-leaf node, if one exists.
Then the query time is $O(k)$.

Now suppose that a set $F$ is deleted.
Suppose first that $F$ does not appear anywhere in $\mathcal T$.
Then consider any subtree of $T$ under some tree node $v$ such that $F$ is in the instance $(\mathcal F_v,U)$ corresponding to the subtree.
The set $F'$ stored in $v$ is chosen uniformly at random from the sets of~$\mathcal F_t$.
When we remove~$F$, set $F'$ is still uniformly random among the remaining sets of~$\mathcal F_t$, so we do not have to do anything.
Now assume that $F$ does appear in $\mathcal T$.
At each node~$v$ of $\mathcal T$ containing $F$, we need to remove $F$ from $v$ and rebuild the entire subtree under~$v$.
We will do this top to bottom starting at the root, searching for tree nodes containing~$F$ in, say, a breadth first search fashion.
Consider one such $v$ containing $F$.
Suppose that the subinstance $\mathcal F_t$ corresponding to it contains~$x_t$ sets and that $v$ is at distance $i$ from the root.
Similar to before, we remove $F$ and rebuild the subtree in time $O(x_td^{k-i})$.
However,~$F$ only had a $1/x_t$ chance of being in node $v$ to begin with, so that we only spend
$O(d^{k-i})$ time rebuilding that node in expectation.
Summing over all nodes we still get an expected time of $O(kd^k)$.
We end up with a dynamic algorithm for {\sc $d$-Hitting Set} with expected worst case update time $O(kd^k)$ and $O(k)$ query time.

\section{Max Leaf Spanning Tree}
\defparproblem{Undirected Max Leaf Spanning Tree}{$k$}{A graph $G$ and an integer $k\in\mathbb N.$}{Does $G$ admit a spanning tree with at least $k$ leaves?}

We will use the dynamic kernel approach.
We will build on an extremely simple kernel with $O(k^2)$ vertices that is obtained by contracting nodes of degree $2$; this idea goes back to a kernelization algorithm by Cai et al.~\cite{CaiEtAl1997}.

We say a vertex $v$ in $G$ is \emph{useless} if it has degree two, and its neighbors each have degree two.
We \emph{resolve} a useless vertex $v$ by connecting its neighbors by an edge, and then deleting~$v$.
The \emph{resolved graph} of $G$, denoted~$G^*$, is the result of taking $G$ and repeatedly resolving useless vertices until no more remain.

The kernelization algorithm (for the decision version of the problem) works as follows:
\begin{enumerate}
  \item If $G$ is not connected, return a ``no''-instance of constant size.
  \item If $G$ has a node of degree at least $k$, return a ``yes''-instance of constant size.
  \item Compute the resolved graph $G^{*}$.
    If $G^{*}$ has at least $4k^2 + 12k + 8$ nodes then return a ``yes''-instance of constant size, otherwise return $G^{*}$.
\end{enumerate}

The resulting graph has $O(k^2)$ nodes and $O(k^3)$ edges.
Correctness is implied from the following result by Cai et al.~\cite{CaiEtAl1997}.
\begin{proposition}[{\cite[Theorem 2.6]{CaiEtAl1997}}]
  \label{MLSTKernel}
  Let $H$ be a connected graph without useless vertices and with maximum degree less than $k$.
  If $H$ has at least $4k^2 + 12k + 8$ vertices, then $H$ has a spanning tree with at least $k$ leaves.
\end{proposition}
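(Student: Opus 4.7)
The plan is a proof by contradiction. Let $T$ be a spanning tree of $H$ maximizing the number of leaves, and suppose $T$ has $\ell < k$ leaves. Partition $V(T)$ into the leaf set $L$ (tree-degree $1$), the path set $P$ (tree-degree $2$), and the branching set $B$ (tree-degree $\geq 3$). First I would use the identity $\sum_{v} \deg_T(v) = 2(n-1)$ together with the lower bound $\deg_T(v) \geq 3$ on $B$ to derive $|B| \leq |L| - 2 \leq k-3$. Contracting each maximal tree-degree-$2$ path to a single edge produces a tree on $L \cup B$ whose internal vertices have degree $\geq 3$, so $T$ contains at most $|L|+|B|-1 \leq 2k-5$ maximal tree-paths.

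The technical heart of the argument is a bound on $|P|$. Consider any interior vertex $v_j$ of a maximal tree-path and any non-tree $H$-edge $v_j u$; I claim $u$ must be a leaf of $T$. Indeed, the fundamental cycle of $v_ju$ in $T$ passes through exactly one of the tree edges $v_{j-1}v_j, v_jv_{j+1}$; removing that edge and adding $v_ju$ keeps $v_j$ at tree-degree $2$ while dropping the tree-degree of the other endpoint (a tree-path interior vertex) from $2$ to $1$, creating a new leaf. If $u$ were not already a leaf, its new tree-degree would stay $\geq 2$, so no existing leaf would be destroyed and the resulting spanning tree would strictly beat $T$, contradicting maximality. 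Next, the no-useless-vertex hypothesis on $H$ implies that among any three consecutive interior vertices of a tree-path, at least one has $H$-degree $\geq 3$ (otherwise the middle one would be useless), hence emits at least one non-tree $H$-edge. Summing over all tree-paths, interior vertices therefore emit at least $|P|/3 - O(k)$ non-tree edges in total, each landing in $L$; since $\Delta(H) < k$ and $|L|\leq k-1$, we get $|P| \leq 3(k-1)^2 + O(k)$.

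Combining the three estimates gives
\begin{equation*}
n \;=\; |L| + |P| + |B| \;\leq\; (k-1) + 3(k-1)^2 + (k-3) + O(k) \;<\; 4k^2 + 12k + 8,
\end{equation*}
contradicting $|V(H)| \geq 4k^2 + 12k + 8$, so $\ell \geq k$. The main obstacle is the exchange argument at the core of the path-length bound: one must verify for every non-tree edge $v_ju$ that the tree edge incident to $v_j$ on the fundamental cycle can indeed be removed without destroying an existing leaf, and combine this with the no-useless-vertex hypothesis so that enough interior vertices actually emit non-tree edges to make the counting tight. Everything else reduces to elementary degree counting, and the constants in the target bound $4k^2+12k+8$ are loose enough to absorb the lower-order terms.
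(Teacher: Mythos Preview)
The paper does not prove this proposition; it is quoted verbatim as Theorem~2.6 of Cai, Chen, Downey, and Fellows and used as a black box in the dynamic algorithm for \textsc{Max Leaf Spanning Tree}. So there is no in-paper argument to compare your proposal against.

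As an independent proof sketch, your approach is the standard Kleitman--West style argument and is essentially sound, but the exchange step needs a repair. You claim that for any interior path vertex $v_j$ and non-tree edge $v_ju$, swapping out the tree edge on the fundamental cycle incident to $v_j$ creates a new leaf. This is only guaranteed when the tree neighbour you disconnect, say $v_{j+1}$, itself has tree-degree~$2$: if $v_j$ happens to be the first (or last) interior vertex of its maximal path and the fundamental cycle exits through the adjacent branching vertex $w\in B$, then removing $v_jw$ drops $w$'s tree-degree from $\ge 3$ to $\ge 2$, creates no new leaf, and the contradiction does not fire. The fix is to restrict the exchange argument to interior vertices whose both tree neighbours lie in $P$; you lose at most two vertices per maximal path, i.e.\ at most $2(2k-5)=O(k)$ vertices overall, which your bookkeeping already allows for. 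With that adjustment your counting yields $|P|\le 3(k-1)(k-2)+O(k)$ and hence $n\le 3k^2+O(k)$, comfortably below $4k^2+12k+8$, so the conclusion goes through once the constants hidden in the $O(k)$ terms are made explicit.
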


We now show how to dynamically maintain this kernel, in such a way that we actually maintain a spanning tree, rather than just solving the decision problem. This data structure can be seen as a generalization of the past work on dynamically maintaining a spanning forest, which we discuss in Section \ref{sec:dynamictree}.
Since we are maintaining the spanning tree, we can answer simple queries like whether a given edge is in the tree, or what the $i$-th edge in the tree out of a given vertex is, in constant time.

\begin{theorem} 
  There is a data structure that for graphs $G$ and integers $k$, maintains a spanning forest $T$ of~$G$ such that
	\begin{itemize}
	\item if $G$ has a spanning tree with at least $k$ leaves, then $T$ is a spanning tree which has at least $k$ leaves,
	\item if $G$ does not have a spanning tree with at least $k$ leaves, then $T$ is an arbitrary spanning forest, and
	\item the data structure always knows which case it is in,
	\end{itemize}
	and that takes expected amortized $O(3.72^k + k^5 \log n + \log n \log\log^{O(1)} n)$ time or deterministic amortized $O(3.72^k + k^5 \log n + \log^2 n / \log \log n)$ time per edge insertion or deletion.
\end{theorem}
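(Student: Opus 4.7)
The plan is to combine three ingredients: a dynamic spanning forest of $G$, a dynamically maintained resolved kernel $G^*$ obtained by contracting maximal chains of useless (degree-two) vertices, and a splicing step that promotes a locally computed $k$-leaf subtree to a $k$-leaf spanning tree of all of $G$.

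For the first two ingredients, I maintain a spanning forest $T$ of $G$ using one of the dynamic connectivity data structures given in Proposition \ref{thm:dynamicconnectivity}; these supply the stated $\log n \log\log^{O(1)} n$ or $\log^2 n / \log \log n$ terms per edge update and also tell me which constant number of tree edges are swapped when $T$ changes. I maintain $G^*$ explicitly: each edge of $G^*$ stores a doubly linked list of the $G$-vertices that have been contracted into it, together with back pointers so that any useless $G$-vertex knows where it sits in $O(1)$ time. An edge update in $G$ changes at most two vertex-degrees, so only $O(1)$ vertices can cross the useless/non-useless threshold; each crossing costs $O(1)$ kernel surgery (splitting or merging chains) using the stored pointers. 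In parallel I track Cai's cheap certificates in $O(1)$ per update: whether $G$ is connected (read off from $T$) and whether $G$ has a vertex of degree $\geq k$.

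After each update I determine which of the two cases the data structure is in. If $G$ is disconnected, I am in the ``no'' case. Otherwise, if $G$ contains a vertex of degree $\geq k$ or a BFS in $G^*$ from an arbitrary base vertex reaches the Proposition \ref{MLSTKernel} threshold of $4k^2 + 12k + 8$ vertices, then a spanning tree of $G$ with $\geq k$ leaves exists; in the latter sub-case I stop the BFS once it accumulates $\Theta(k^2)$ vertices, giving a vertex set $W$ of size independent of $n$. If neither trigger fires then the explored kernel component has fewer than $4k^2 + 12k + 8$ vertices, and I decide ``yes'' or ``no'' exactly by running the static $O(3.72^k \cdot \mathrm{poly}(k))$ algorithm of Daligault et al.\ on it. In either ``yes'' branch I then run the same static algorithm on $G^*[W]$ to obtain a subtree $S^*$ with at least $k$ leaves, and lift $S^*$ back to a subtree $T_S$ of $G$ by re-expanding each contracted edge into its chain of useless vertices; internal chain vertices end up with degree exactly two in $T_S$ and contribute no new leaves, so $T_S$ has at least $k$ leaves in $G$.

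The main obstacle, and the real content of the proof, is the \emph{merge} step: turning the current spanning forest $T$ into a spanning tree $T'$ of $G$ that contains $T_S$ and retains its $\geq k$ leaves, using only $O(k^4)$ edits to $T$. I would proceed by iterating over the at most $O(k^2)$ edges of $T_S$: for each $e \in T_S \setminus T$ I insert $e$ into $T$ and delete from the newly created fundamental cycle some edge $e' \notin T_S$, which always exists because $T_S$ is acyclic; I locate $e'$ via a path-search on the Euler-tour or link-cut auxiliary of the dynamic forest in $O(\log n)$ time per swap, giving the $k^5 \log n$ term once a $\mathrm{poly}(k)$ factor for repairing the $O(k)$ leaves of $T_S$ is included. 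The delicate part is to prevent leaves of $T_S$ from accidentally acquiring extra tree edges during later swaps; I would handle this by rooting $T_S$, processing the swaps in a bottom-up traversal, and ensuring that whenever a leaf $\ell$ of $T_S$ is currently incident to a surplus tree-edge $f$, the swap that eliminates $f$ uses the cycle created by inserting an edge of $T_S$ incident to $\ell$, so $\ell$ loses $f$ exactly once. Showing that this schedule terminates with the correct leaf count and bounding the total edit count to $O(k^4)$ is where I expect the main technical care to lie; once established, correctness follows from the invariants that $T$ is always a spanning forest of $G$ and that, in the ``yes'' case, $T$ contains $T_S$ as a subgraph whose leaves remain leaves of $T$.
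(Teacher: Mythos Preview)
Your high-level strategy matches the paper's: maintain a dynamic spanning forest $T$, maintain the resolved kernel $G^*$, locate a small piece of $G^*$ carrying a $k$-leaf subtree $T_S$, and splice $T_S$ into $T$. But your execution diverges from the paper in two places, and the first is a genuine gap.

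\textbf{The BFS ball alone is not enough.} You stop the BFS once $|W|$ reaches the Proposition~\ref{MLSTKernel} threshold and then run the static max-leaf algorithm on $G^*[W]$. But Proposition~\ref{MLSTKernel} requires the graph to have \emph{no useless vertices}, and boundary vertices of $W$ can become useless inside $G^*[W]$ even though they are not useless in $G^*$: their degree drops when neighbours outside $W$ are cut away. So you have not justified that $G^*[W]$ contains a $k$-leaf subtree at all. The paper handles this by taking $S' = W$ and then enlarging to $S = S' \cup N_2(S')$; since every vertex has degree $<k$ this blows the size up by only a factor $O(k^2)$ (whence the $O(k^4)$ vertex bound), and the two extra BFS layers guarantee that every vertex of $S'$ keeps its full $G^*$-neighbourhood---and its neighbours keep theirs---inside $G^*[S]$, so no vertex of $S'$ is useless there and the Proposition applies to (the resolution of) $G^*[S]$.

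\textbf{The merge is simpler than you make it.} Your swap-based splice can be made to work, but your worry that leaves of $T_S$ might ``accidentally acquire extra tree edges'' is unfounded: any tree containing $T_S$ as a subgraph has at least as many leaves as $T_S$, because each leaf of $T_S$ either remains a leaf or becomes the attachment point of a nonempty pendant subtree, which contributes at least one new leaf. The paper exploits this directly: delete from $T$ every edge with an endpoint in $S$, drop in all of $T_S$, then greedily add back the $O(k^5)$ boundary $T$-edges that do not create a cycle (checked via the link/cut structure). The result is an extension of $T_S$ and automatically has $\geq k$ leaves---no bottom-up schedule is needed. Note also that your count of ``at most $O(k^2)$ edges of $T_S$'' breaks once you lift to $G$: a single $G^*$-edge can unfold into an arbitrarily long chain of useless $G$-vertices, so the splice must be done at the level of $S \subseteq V(G^*)$ and rely on the invariant that $T$ already contains all but one edge of each contracted chain.
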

\begin{proof}
  We now provide the dynamic algorithm.
  We will maintain the resolved graph $G^{*}$ of our current graph~$G$, we will maintain a spanning forest $T^*$ (which does not necessarily have $k$ leaves) of the resolved graph~$G^{*}$, and we will maintain the spanning forest $T$ of $G$ corresponding to $T^*$.
  We also store a sorted list of all nodes in $G$ by degree, which can be maintained in constant time per update, and a dynamic tree structure~$D$ containing $T$, which can be maintained in amortized $O(\log n)$ time per update (see Section~\ref{sec:dynamictree}).
  Then, queries for a spanning tree with at least $k$ leaves will be answered in time $O(3.72^k + k^5 \log n)$ per update or query by a modification of $T$.
  We first describe how to maintain $G^{*}, T^*, T$, and $D$.

  Recall that $G^{*}$ is formed by contracting paths consisting only of useless vertices into a single edge.
  We will maintain $G^{*}$ along with a list $L_e$ on each edge $e$ corresponding to the path of useless nodes which have been contracted to that edge.
  The lists are stored as doubly-linked lists, with a pointer from any vertex in~$G$ to its position in a list in $G^{*}$.
  When an edge $e = \{u,v\}$ is inserted or deleted, we need to check whether~$u,v$, or one of their neighbors has changed whether it is a useless node.
  If not, we simply reflect the change in~$G^{*}$.
  We discuss how to deal with $u$; we then deal with $v$ similarly.
  If $u$ did not have degree 2 before or after the change then neither $u$ nor its neighbors changed whether they are useless.
  If $u$ now has degree 2, then either $u$ or its two neighbors may now be useless.
  We observe the degrees of those three nodes and, if they are degree two, their neighbors, to determine which of the three became useless.
  We reflect a new useless node $x$ in $G^{*}$ by removing $x$, and connecting its two neighbors $a,b$ with a new edge whose dynamic list is the concatenation $L_{(a,x)}, x, L_{(x,b)}$.
  If $u$ had degree 2 but now does not, we similarly check it and its former two neighbors to see whether they used to be useless but are now not.
  If a vertex $x$ becomes no longer useless, we can do the opposite of the above process to reflect this in~$G^{*}$: we find its position in a list following a pointer and split the list appropriately.
  Because we store pointers into the doubly linked lists we can split and merge lists in constant time.
  Thus, the entire update operation takes $O(1)$ time.

  Maintaining $T^*$ can be done in either expected amortized $O(\log n \log\log^{O(1)} n)$ time or deterministic amortized $O(\log^2 n / \log \log n)$ time per update to $G^*$, and hence per update to~$G$, depending on which data structure we select from Proposition~\ref{thm:dynamicconnectivity} in Section \ref{sec:dynamicconnectivity}.
  Now~$T$ corresponds almost directly to~$T^*$ except for edges of $G$ which have been compressed into lists in $G^{*}$.
  For each edge $e$ in $G^{*}$ with corresponding list $L(e)$ which contains more than one edge, we always include every edge in $L(e)$ except the final one in~$T$.
  If $e$ is included in $T^*$ then we also include the final edge of $L(e)$ in $T$.
  We make these corresponding constant-size changes to $T$ when the sets of useless vertices change in the previous paragraph.
  We also maintain a copy of $T$ in our dynamic tree structure $D$.
  Each time a change is made to~$T$, we make it in~$D$ as well.
  This will be used later to answer \texttt{after(a,b)} queries, where we want to find the vertex after $a$ on the path from $a$ to~$b$ in $T$, in $O(\log n)$ amortized time per query.

  We now describe how to query at a given time for a spanning tree with at least $k$ leaves.
  We can tell from our spanning forest $T$ whether $G$ is connected or not; if not, we return that there is no such spanning tree.
  Next, we check whether the highest degree vertex in our graph has degree at least~$k$ from our list of vertices sorted by degree.
  If so, let $v$ be that vertex, and pick any $k$ of its edges $e_1 = \{v,y_1\}, \ldots, e_k = \{v,y_k\}$.
  To find a tree with at least~$k$ leaves to return, we start by adding these $k$ edges into $T$. 
  For each $y_i$, we find the first edge out of $y_i$ on the path to $v$ in $T$ using $D$, and we remove that edge.
  The result is a tree with at least $k$ leaves, since we included at least $k$ leaves out of $v$.

  We now assume the graph is connected and each node has degree at most $k-1$.
  We are going to select a subgraph of $G^{*}$ with $O(k^4)$ nodes on which we will find a tree with at least~$k$ leaves using the fastest static algorithm for the problem.
  If $G^{*}$ has at most $4k^2 + 12k + 8$ vertices, we simply use $G^{*}$ (in this case, if we find that $G^{*}$ does not have a tree with~$k$ leaves, then we return that there is no such tree). 
  Otherwise, pick an arbitrary node $r$, and perform a breadth-first search from $r$ until we have reached at least $4k^2 + 12k + 8$ vertices.
  Let $S'$ be the set of those vertices, and let $S$ be the set containing $S'$ and also all the vertices at distance at most 2 from $S'$.
  Since each vertex in~$G^{*}$ has degree less than $k$, set~$S$ has $O(k^4)$ vertices.
  Let~$G_S$ be the induced subgraph of $G^{*}$ on the vertices of~$S$.
  No vertex in $S'$ is useless in $G_S$, since they are not useless in $G^{*}$, and we included enough vertices to make sure they are not useless in $S$.
  By Proposition \ref{MLSTKernel}, $G_S$ has a spanning tree $T_S$ with at least $k$ leaves, which we can find in $O(3.72^k)$ time using the static algorithm by Daligault et al.~\cite{DaligaultEtAl2010} for {\sc Max Leaf Spanning Tree}.

  We now want to merge $T$ and $T_S$ into a spanning tree with at least $k$ leaves.
  We will include all edges in~$T_S$, and all edges in $T$ which are between pairs of vertices not in $S$.
  To add edges between vertices in~$S$, and nodes not in $S$, we can use our dynamic tree structure~$D$ maintaining $T$.
  We remove from $D$ all the edges with an endpoint in $S$, and add in all the edges in $T_S$. 
  Then, for each edge in $T$ with exactly one endpoint in $S$, we use $D$ to check whether adding that edge would create a cycle in the tree we are building, and if not, we add it.
  Since each node has degree $O(k)$, this requires $O(k^5)$ operations on $D$, for a total amortized time of $O(k^5 \log n)$.
  Since the resulting tree is an extension of $T_S$, it has at least $k$ leaves, as desired.
\end{proof}

\section{Undirected \texorpdfstring{$k$}{k}-Path}
We are also able to design dynamic algorithms for some problems which do not use either the kernelization technique or the branching tree technique.
One example is our algorithm for {\sc Undirected $k$-Path}, which is the following problem:

\defparproblem{Undirected $k$-Path}{$k$}{An undirected graph $G$ and an integer $k\in\mathbb N$.}{Does $G$ have a path of length at least $k$?}

By using a color-coding technique (as introduced by Alon et al.~\cite{AlonEtAl1995}), we are able to reduce the problem to a dynamic connectivity problem, which can be solved using known dynamic data structures.
This way, we achieve an update time of $f(k)\cdot \log^{O(1)}(n)$.

Our construction is as follows.
An \emph{$(n,k)$-perfect hash family} is a set of functions $\{h_1,\ldots,h_T\}$ such that $h_i:\{1,\hdots,n\}\ldots \{1,\hdots,k\}$ for all $i$ and for every $S\subseteq \{1,\hdots,n\}$ with $|S|=k$ there is some $h_i$ that maps all the elements of $S$ to distinct colors.

\begin{proposition}[\cite{NaorEtAl1995}]
\label{thm:perfecthash}
  For any $n,k\in\mathbb N$ one can construct an $(n,k)$-perfect hash family of size $T = e^kk^{O(\log k)}\log n$ in time $e^kk^{O(\log k)}\log n$.
\end{proposition}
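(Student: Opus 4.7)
The plan is to establish the bound in two stages: first existentially via the probabilistic method, and then constructively via derandomization through a splitter-based composition with a small-universe perfect hashing subroutine.

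For the probabilistic step, choose $h:\{1,\dots,n\}\to\{1,\dots,k\}$ uniformly at random. For any fixed $k$-subset $S\subseteq\{1,\dots,n\}$, the probability that $h$ is injective on $S$ is exactly $k!/k^k$, which by Stirling's formula is at least $e^{-k}/\mathrm{poly}(k)$. A union bound over the $\binom{n}{k}\le n^k$ choices of $S$ shows that a family of $T = O(e^k\cdot k\log n)$ independent random hash functions is, with positive probability, an $(n,k)$-perfect hash family. This already gives the target $e^k$ factor and the $\log n$ dependence, though only in a non-constructive way.

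To obtain an explicit construction, I would combine a \emph{splitter} with an explicit small-universe perfect hash family. Recall that an $(n,k,\ell)$-splitter is a small family of functions $[n]\to[\ell]$ such that for every $k$-subset $S\subseteq[n]$ at least one function in the family partitions $S$ into blocks of size at most $\lceil k/\ell\rceil$. Using a splitter with $\ell = O(k^2)$ reduces the problem on $[n]$ to perfect hashing on universes of size polynomial in $k$, and on such small universes a perfect hash family of size $k^{O(\log k)}$ can be constructed using $O(\log k)$-wise independent hash functions combined with brute-force derandomization over a polynomial-size sample space. Composing the splitter with the small-universe family multiplies their sizes: the $\log n$ factor comes entirely from the splitter, while the $e^k k^{O(\log k)}$ factor absorbs both the $e^k$ probabilistic cost and the derandomization overhead on $[k^2]$.

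The main technical obstacle will be engineering the splitter to have the right size. The natural algebraic construction, based on evaluating low-degree polynomials (Reed--Solomon-style) at carefully chosen points, must be tuned so that the multiplicative part of its size stays $e^k k^{O(\log k)}$ while the additive part remains $O(\log n)$; pushing down a naive $2^{O(k)}$ splitter size to match the probabilistic $e^k$ is the delicate step. Once this is done, correctness of the composed family follows routinely: the splitter guarantee produces, for any $k$-set $S$, a partition of $S$ into blocks contained in small sub-universes, and the small-universe perfect hash family then separates the elements within each block, yielding a single function in the composed family that is injective on $S$.
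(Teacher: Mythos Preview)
The paper does not prove this proposition at all: it is quoted verbatim from Naor, Schulman, and Srinivasan~\cite{NaorEtAl1995} and used as a black box. So there is no ``paper's own proof'' to compare against.

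As for your sketch on its own merits: the high-level two-stage decomposition you outline---first reduce $[n]$ to $[k^2]$ via a small family of injective-on-$k$-sets maps (size $\mathrm{poly}(k)\log n$, obtainable from pairwise independence), then build a $(k^2,k)$-perfect hash family of size $e^k k^{O(\log k)}$---is indeed the NSS strategy, and the multiplication of sizes gives the stated bound. Where your sketch drifts is in the mechanism for the second stage. The claim that ``$O(\log k)$-wise independent hash functions combined with brute-force derandomization over a polynomial-size sample space'' yields the $(k^2,k)$ family is not how NSS proceeds, and it is not clear that this would give $e^k k^{O(\log k)}$ rather than a cruder $2^{O(k)}$ bound; the actual construction uses a recursive splitter scheme (splitting $k$ into $O(\log k)$ blocks, enumerating the balanced compositions of $k$ into those blocks, and recursing) where the $e^k$ arises from counting the compositions via a multinomial/Stirling argument. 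You correctly flag this as ``the delicate step,'' but the ingredients you name (Reed--Solomon evaluation, limited independence) are not the ones that make it work. If you want a self-contained proof, that recursion is the missing idea.
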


We will keep $k! T$ dynamic graphs $G_{i,\pi}$ for each of the $T$ choices of $h_i$ and each of the~$k!$ permutations~$\pi$ of $\{1,\hdots,k\}$.
In $G_{i,\pi}$ we associate the vertices $V$ with $\{1,\hdots,n\}$ in the natural way and we color every vertex~$v$ with $\pi(h_i(v))$.
For every edge $\{u,v\}$ of $G$ we also add it to $G_{i,\pi}$ if $\pi(h_i(v))=\pi(h_i(u))+1$, i.e., we only leave edges between adjacent colors.
We also add two vertices $s$ and~$t$ and we add edges $\{s,v\}$ for all $v$ with $\pi(h_i(v))=1$ and $(u,t)$ for all $u$ with $\pi(h_i(u))=k$.

We claim that $G$ contains a path on $k$ vertices if and only if $s$ and $t$ are connected in one of the graphs~$G_{i,\pi}$.
To see this, first suppose that $s$ and $t$ are connected in some $G_{i,\pi}$.
Then the distance between $s$ and $t$ in~$G_{i,\pi}$ is at least $k+1$ since we have a layered graph and in order to get from $s$ to $t$ the path needs to visit at least one vertex from every color class.
Moreover, every path from~$s$ to $t$ that visits~$s$ and $t$ exactly once is of the form $s$, followed by a path $P$ in $G$ and then followed by $t$, and every path from $s$ to $t$ contains a subpath from $s$ to $t$ of this form.
As the distance between $s$ and $t$ is at least $k+1$, $|V(P)|\geq k$ and we can return any subpath of $P$ on $k$ nodes.

Now suppose that $G$ has a path $P=v_1\rightarrow v_2\rightarrow\ldots\rightarrow v_k$ on $k$ vertices.
Let $h_i$ be the hash function that maps $\{v_1,\ldots,v_k\}$ to all distinct colors; we know it exists.
Let $\pi$ be the permutation of $\{1,\hdots,k\}$ that maps~$h_i(v_j)$ to $j$ for each $j$.
Then all edges of $P$ are in $G_{i,\pi}$, $v_1$ is colored $1$ and $v_k$ is colored $k$, and~$s$ followed by~$P$ followed by $t$ connects $s$ to $t$.

Hence we maintain all graphs $G_{i,\pi}$ using the fully dynamic connectivity data structures alluded to in Proposition~\ref{thm:dynamicconnectivity}.

When we insert/delete an edge $\{x,y\}$ in our dynamic data structure for {\sc $k$-Path}, we insert/delete $\{x,y\}$ into/from each graph $G_{i,\pi}$ for which $\pi(h_i(y))=\pi(h_i(x))+1$ or $\pi(h_i(y))=\pi(h_i(x))-1$ ($\{x,y\}$ can be used in either direction).
This takes time bounded by $k!2^{O(k)}k^{O(\log k)} O((\log n \log\log n)^2)$.

\section{Dense Subgraph in Bounded-Degree Graphs}

\defparproblem{Dense Subgraph in Bounded-Degree Graphs}{$k$}{A graph $G$ with maximum degree $\Delta$ and an integer $k\in\mathbb N$.}{Find an induced subgraph on $k$ vertices in $G$ with the maximum number of edges.}

Similar to our {\sc $k$-Path} algorithm, we will use a derandomized variant of color-coding.

Let us use $(n,k')$-perfect hash families for $k' = k(\Delta+1)$.
According to Proposition~\ref{thm:perfecthash}, we have $T\leq e^{k'}{k'}^{O(\log k')}\log n$ functions $h_1,\ldots, h_T$ from $\{1,\hdots,n\}$ to $\{1,\hdots,k'\}$ such that for every $S\subset \{1,\hdots,n\}$ of size at most $k'$, one of the $h_i$ maps the elements of $S$ to distinct colors.
We will maintain $2^{k'} T \leq (2e)^{k'}{k'}^{O(\log k')} \log n$ data structures $D_{i, U}$, one for every $h_i$ and every $U\subseteq \{1,\hdots,k'\}$.

Let $G$ be a graph that is input to the {\sc Dense Subgraph in Bounded-Degree Graphs} problem.
In~$D_{i,U}$ we will consider the sets $L=\{u\in V(G)~|~h_i(u)\in U\}$ and $R=\{u\in V(G)~|~h_i(u)\notin U\}$.
Let~$H$ be a subgraph of $G$ on $k$ vertices with the maximum number of edges.
Let $N(H)$ be the set of neighbors of $H$ that are not in $H$.
Then $|H\cup N(H)|\leq k(\Delta+1)=K$ since the graph has degree bounded by $\Delta$.
Hence there is some~$h_i$ which colors all the vertices of $H\cup N(H)$ into different colors.
Let $U$ be the set of colors that $h_i$ assigns to the vertices of $H$.
Consider the data structure $D_{i,U}$: in it $H\subseteq L$ but $N(H)\subseteq R$.
We will exploit this fact.

Now consider any connected component $C$ in the subgraph of $G$ induced by $L$, given that $H\subseteq L$ but $N(H)\subseteq R$.
If $C$ contains a vertex of $H$ then it cannot contain any vertices that are not in $H$, as otherwise there would be a node of $N(H)$ in $L$.
Thus, $H$ is a union of a subset of the connected components of $L$. 

Suppose that we can maintain the connected components of $L$.
In addition, for each connected component~$C$ of $L$ we will maintain the number of edges
touching $C$ (i.e., those inside $C$ and those from~$C$ to $R$) and the sizes $|C|$.
Now we need to be able to dynamically compute a collection of connected components $\{C_1,\ldots,C_\ell\}$ such that $\sum_i |C_i|=  k$ and $\sum_{i} m(C_i)$ is maximized.

Here is how we do it.
For every $D_{i,U}$ we create a dynamic connectivity data structure for the subgraph induced by $L$ using the construction of Proposition~\ref{thm:dynamicconnectivity}.
This data structure performs edge updates in time $O(\log n (\log\log n)^2)$, and given any node $u$ can return the name $C$ of the component containing~$u$.
It is straightforward to augment the data structure to also keep track of $m(C)$ for each connected component~$C$. 

In addition to this data structure, we keep a matrix $A$ of size $k \times k\Delta$ such that $A[i][j]$ contains a list of the names of components $C$ in $L$ for which 
$|C|=i$ and $m(C)=j$.
Given this matrix, we can determine the densest $k$-subgraph in $G$ as follows.
Let $p: \mathbb{N} \to \mathbb{N}$ denote the partition number, so that $p(n)$ is the number of integer partitions of $n$.
For every choice of ways to split $k$ into a sum of $\ell \leq k$ positive integers (there are~$p(k)$ such choices) $k_1,\ldots,k_\ell$, for every integer $M$ between $0$ and $k\Delta$, and every choice of a way to split~$M$ into an ordered $\ell$-tuple of non-negative integers $m_1,\ldots,m_\ell$ (such that $\sum_i m_i=M$ and $m_i\geq 0$; (there are at most $p(k \Delta)$ such choices), check whether $A[k_i][m_i]$ contains a non-empty list for every $i\leq \ell$.
This finds the largest $M$ for which there is a subgraph in $G$ on $k$ nodes and at least $M$ edges, and one can return such a subgraph by picking the first component from each of the non-empty lists $A[k_i][m_i]$.
The query time is thus $O(p(k) + k\Delta + p(k\Delta))$.
As $p(n) = 2^{O(\sqrt{n})}$ \cite{HardyRamanujan1918}, we can bound this query time by $2^{O(\sqrt{k \Delta})}$.

Let us see how the updates work.
Suppose that an edge $\{x,y\}$ is inserted (deleted).
We iterate through all choices for $D_{i,U}$.
For each $D_{i,U}$, if $x\in L$ and $y\in R$, then query the connectivity data structure for~$L$ to find the connected component $C$ that contains $x$, and increment (decrement in case of deletions)~$m(C)$.
We assume that each component has a pointer to its position in one of the lists in $A$, say $A[i,j]$.
Now move the component to the list $A[i, j+1]$ (or $A[i,j-1]$ in case of deletions), updating the pointer.
(It is similar if $x\in R$ and $y\in L$.)

If $x,y\in R$, then we do nothing.
Finally, if both $x,y\in L$ and $(x,y)$ is inserted, then we first find the components $C_1$ and $C_2$ such that $x\in C_1$, $y\in C_2$.
If these components are the same, we just insert $(x,y)$ in the connectivity data structure and move $C_1$ from $A[i][j]$ to $A[i][j+1]$ following pointers as before (the connectivity structure will maintain $m(C_1)$ automatically).
If $C_1\neq C_2$, remove both $C_1$ and $C_2$ from $A$, insert $(x,y)$ into the connectivity data structure, obtain the new component $C$ that contains both $x$ and $y$ and its $m(C)$ and then insert $C$ into $A[|C|,m(C)]$.
Note that if we ever get $|C|>k$, we do not need to insert into $A$ since the component becomes irrelevant.
If $x,y\in L$ and $(x,y)$ is deleted, obtain $C$ that $x$ and $y$ are both in, remove $C$ from $A$, delete $(x,y)$ from the connectivity structure, obtain the new components and their $m(\cdot)$ values and update the array $A$ as necessary.
The update time is $O(\log n(\log\log n)^2)$ per $D_{i,U}$.

The update time is $2^{O(\Delta k)}\log n (\log\log n)^2$ and the query time is $2^{O(\sqrt{n})}+2^{O(\Delta k)}\log n = 2^{O(\Delta k)}\log n$.

\section{Edge Clique Cover}
\label{sec:edgecliquecover}

For a graph $G$, an \emph{edge clique cover} is a set $\mathcal{C} = \{C_1,\hdots,C_k\}\subseteq V(G)$ of subgraphs of $G$, such that each~$C_i$ induces a clique in $G$ and every edge of $G$ belongs to some subgraph $C_i$.

\defparproblem{Edge Clique Cover}{$k$}{A graph $G$ and an integer $k\in\mathbb N$.}{Find an edge clique cover of $G$ with size at most $k$.}

\begin{theorem}
  There is an algorithm which can handle edge insertions or edge deletions in $2^{2^{O(g(k)}}$ time, under the promise that the graph always has an edge clique cover of size $g(k)$ for some computable function~$g$, which maintains an edge clique cover in the sense that, any of the following queries can be supported
  \begin{itemize}
    \item Given a clique name $C$, return the set of nodes in $C$ in linear time in the size of $C$.
    \item Given a node $v$, return the names of all cliques that $v$ is contained in, in linear time in the size of the output.
  \end{itemize}
\end{theorem}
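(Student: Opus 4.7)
\medskip
\noindent
\textbf{Proof plan.} My plan is to dynamically maintain the classical Gy\'arf\'as kernel, and on every update simply rerun the brute-force algorithm of Gramm et al.~\cite{GrammEtAl2009} on it in $2^{2^{O(g(k))}}$ time. Recall that the Gy\'arf\'as kernel is obtained by contracting true twins, i.e., by quotienting $V(G)$ by the relation $u\sim v$ iff $N[u]=N[v]$; under the promise, the resulting quotient $K$ has at most $2^{g(k)}$ vertices, and any edge clique cover of $K$ lifts to one of $G$ by replacing every kernel vertex by all members of the corresponding equivalence class.

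First, I would set up a data structure storing, for each vertex $v$, a pointer to its equivalence class $[v]$ and the doubly linked list of members of $[v]$; and for each class, a bitmask of length $|V(K)|\le 2^{g(k)}$ listing the classes of $K$ adjacent to it. A global hash table keyed by these bitmasks allows lookup of an existing class in time $2^{O(g(k))}$, and each vertex also stores a pointer to its own entry in the member list.

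Next, on an edge update $\{u,v\}$, observe that $N[w]$ changes only for $w\in\{u,v\}$, so only the classes of $u$ and $v$ can be affected. I would pull $u$ out of its class (deleting the class from $K$ if it becomes empty), recompute the bitmask corresponding to the new $N[u]$ from the current class structure, and look up (or create) the class matching that bitmask in the hash table; proceed analogously for $v$. Creating or deleting a class in $K$ requires flipping a single bit in every currently-stored class bitmask, costing $2^{O(g(k))}$ in total. After the kernel is up to date, I rerun the brute-force Gy\'arf\'as-style algorithm on $K$ in $2^{2^{O(g(k))}}$ time and maintain two explicit cross-reference structures: for every clique $C$ in the computed cover, a list of the equivalence classes forming $C$; and for every class, a list of the cliques containing it. A clique query traverses the class list of $C$, then within each class its member list, giving output-linear time; a vertex query follows the $v\mapsto [v]$ pointer and then lists the cliques attached to $[v]$.

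The main obstacle, and the reason the promise is needed, is keeping the kernel-maintenance step cheap enough not to exceed the stated budget: the signature used to identify a class must depend only on the current class structure so that reclassifying $u$ and $v$ does not cascade into signature recomputations at other classes. The class-level bitmask representation accomplishes exactly this and confines the kernel-update cost to $2^{O(g(k))}$, which is comfortably absorbed by the doubly exponential brute-force step. A secondary subtlety is to argue correctness of the lift: because equivalent vertices have identical closed neighborhoods, an edge clique cover of $K$ becomes one of $G$ precisely by expanding each kernel clique into the union of its classes, which is exactly the mapping maintained by the cross-reference structures.
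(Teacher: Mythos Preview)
Your proposal is correct and follows essentially the same approach as the paper: maintain the Gy\'arf\'as kernel, observe that an edge update $\{u,v\}$ can only change the equivalence classes of $u$ and $v$, reclassify those two vertices by testing adjacency against each of the at most $2^{g(k)}$ class representatives, and rerun the static Gramm et al.\ algorithm on the kernel after every update. One small omission to fix: the twin quotient alone does not yield a kernel of size at most $2^{g(k)}$, since each isolated vertex $u$ has $N[u]=\{u\}$ and thus sits in its own class; you also need the ``remove isolated vertices'' rule from the Gy\'arf\'as kernel (the paper applies it explicitly), which is trivial to maintain dynamically.
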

\begin{proof}
  We first describe the static rules that we use; they lead to a kernel with at most $2^k$ nodes, and are due to Gy\'{a}rf\'{a}s~\cite{Gyarfas1990}.
  Namely, we repeatedly apply the following reduction rules which do not change the size of an minimum clique cover of the graph:
  \begin{itemize}
    \item Remove any isolated node.
    \item For any two nodes $u,v$ with same closed neighborhood $N[u] = N[v]$, remove $v$ from the graph (`merging' $u$ and $v$)
  \end{itemize}
  The first reduction rule is fine as isolated nodes do not belong to any cliques.
  For the second rule, whatever set of cliques we eventually put $u$ in, we can also put $v$ in those same cliques, and this will be valid since~$u$ and $v$ had the same closed neighborhoods.
  Exhaustively applying these rules to a graph~$G$ until no more are possible yields a {\em reduced graph} $G^{*}$.
  In an edge clique cover of a reduced graph, no two nodes with any edges can be in the exact same set of cliques, as their closed neighborhoods differ.
  As $G^{*}$ has no isolated nodes, if it has an edge clique over of size $s$ then it has at most $2^s$ nodes, one for each subset of the cliques that it may be in.
  Hence, our promise implies that $G^{*}$ will always have size at most~$2^{g(k)}$ throughout the sequence of edge updates.

  We will maintain $G^{*}$ along with, for each node $u$ in $G^{*}$, a list $L(u)$ of the nodes in $G$ which have been merged into $u$, stored as a doubly-linked list, and for each node 
$v$ in $G$, a pointer $p(v)$ to the location of $v$ in the list that it is in. 
  After each update, to answer a query, we will compute an edge clique cover of $G^{*}$ in $2^{2^{O(k)}} + O(2^{4g(k)})$ time using the static algorithm by Gramm et al.~\cite{GrammEtAl2009}, as $G^{*}$ has at most $2^{g(k)}$ nodes.
  These are sufficient to recover the edge clique cover of the entire graph (or any of the types of queries mentioned in the theorem statement) in linear time in the size of the answer to the query.

  It remains to describe how to maintain $G^{*}$, the lists $L(u)$, and the pointers $p(v)$.   
  Suppose an edge between nodes $a$ and $b$ is changed (either added or removed).
  We first remove $u$ and $v$ from $G^{*}$.
  We describe how to do this for $u$, then do the same for $v$.
  If $u$ was the only element of $L(p(u))$, then we do this by deleting~$u$ from $G^{*}$. 
  Otherwise, we remove $u$ from the list $p(u)$ that contained it.
  If the node corresponding to $p(u)$ was actually $u$, then we rename it to one of the other nodes remaining in $p(u)$.
  We then do the same for~$v$.
  The resulting graph is now the reduced graph on $G \setminus \{ u,v \}$, the graph $G$ when $u$ and $v$ are removed.
  Moreover,~$u$ is adjacent to a node $x$ from $G$ if and only if it is adjacent to every node in $p(x)$, since whether or not $u$ is adjacent to any node still in our $G^{*}$ has not changed, and similarly for~$v$.

  We now need to add $u$ and $v$ back in.
  Again, we describe how to do this for $u$.
  For each of the at most $2^{g(k)}$ nodes in the current graph $G^{*}$, we check whether $u$ is adjacent to it.
  If there is any node $x$ in~$G^{*}$ which is adjacent to $u$ and has the exact same neighborhood (in~$G^*$) as $u$, we add in $u$ to $L(x)$ and set $p(u) = L(x)$.
  Otherwise, we create a new node $u$ in~$G^*$, set $p(u) = L(u) = u$, and make that node adjacent to each node that $u$ is adjacent to.
  Once we do this for $v$ as well, our graph $G^{*}$ will be a valid reduced graph for the new $G$.
  The entire process took only $f(k)$ time, since $G^{*}$ has at most $2^{g(k)}$ nodes, and we only did graph operations on~$G^{*}$.
\end{proof}

\section{Undirected Feedback Vertex Set}
\defparproblem{Undirected Feedback Vertex Set}{$k$}{A graph $G$ and an integer $k\in\mathbb N$.}{Find a set $S\subseteq V(G)$ of size at most $k$ for which $G - S$ is a forest?}

We build on the degree-based algorithm described by Cygan et al.~\cite[p. 57]{CyganEtAl2015}.
We first show that any feedback vertex set $S$ of size at most $k$ must intersect the first $f(k)$ vertices of largest degree in any connected graph of minimum degree at least 2, by proving a modification of a Lemma by Iwata \cite{Iwata2017}.

\begin{lemma} 
\label{bounds}
  Let $G$ be a graph of minimum degree at least 2 without self-loops, but multi-edges are allowed.
  Let $S \subseteq V(G)$ be a feedback vertex set of $G$ of size at most $k$, and let $D$ be an upper bound on the average degree of the vertices in $S$.
  If $G$ has at most $|V(G)|/2-k$ vertices of degree 2, then $|V(G)| \leq 2Dk$ and $|E(G)| \leq 3Dk$.
  In particular, the average degree of $G$ is at most $3k$.
\end{lemma}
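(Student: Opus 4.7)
The plan is to bound $|E(G)|$ from below in terms of $|V(G)|$ using the degree-$2$ hypothesis, bound it from above using the forest structure of $G-S$, and then compare.

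For the lower bound, partition $V(G)=V_2\cup V_{\ge 3}$ by degree (which uses $\delta(G)\ge 2$). The hypothesis $|V_2|\le |V(G)|/2-k$ gives $|V_{\ge 3}|\ge |V(G)|/2+k$, so summing degrees yields
\[
  2|E(G)|\;=\;\sum_{v\in V(G)}\deg(v)\;\ge\;2|V_2|+3|V_{\ge 3}|\;=\;2|V(G)|+|V_{\ge 3}|\;\ge\;\tfrac{5}{2}|V(G)|+k,
\]
and hence $|E(G)|\ge \tfrac{5}{4}|V(G)|+k/2$.

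For the upper bound, use that $G-S$ is a forest and therefore contributes at most $|V(G)|-|S|$ edges, while every remaining edge is incident to some vertex of $S$. The number of edges incident to $S$ is at most $\sum_{v\in S}\deg_G(v)\le D|S|\le Dk$ (edges with both endpoints in $S$ are counted twice in this sum, which only strengthens the inequality), so $|E(G)|\le (|V(G)|-|S|)+Dk$. Combining the two estimates solves for $|V(G)|$: to match the constant $2Dk$ claimed in the lemma one does a slightly sharper accounting in the forest step—using that each connected component of $G-S$ saves an edge and, crucially, that any vertex of $V(G)\setminus S$ that is isolated in $G-S$ must contribute at least two edges to the cut between $S$ and $V(G)\setminus S$ because $\delta(G)\ge 2$. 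Plugging $|V(G)|\le 2Dk$ back into the upper bound on $|E(G)|$ then immediately gives $|E(G)|\le 3Dk$.

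The ``in particular'' assertion is actually independent of the degree-$2$ hypothesis: the forest-plus-$S$ decomposition alone gives $|E(G)|\le (|V(G)|-1)+k\cdot|V(G)|\le(k+1)|V(G)|$, so $2|E(G)|/|V(G)|\le 2(k+1)\le 3k$ whenever $k\ge 2$. The main obstacle in the proof is the constant in the forest accounting step: the naive use of $|E(G-S)|\le |V(G)|-|S|$ together with $\sum_{v\in S}\deg(v)\le Dk$ only yields $|V(G)|\le 4Dk$, and squeezing this down to $2Dk$ requires the more careful bookkeeping sketched above, trading off the components and isolated vertices of $G-S$ against the number of cut edges to $S$.
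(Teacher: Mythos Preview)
Your overall strategy---bound edges from below using the degree hypothesis and from above using the forest structure---is sound, but the specific implementation has a genuine gap that the ``more careful bookkeeping'' you sketch does not close.

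The issue is structural. When you bound $2|E(G)|\ge \tfrac{5}{2}|V(G)|+k$ from below and $|E(G)|\le (|V(G)|-|S|)+Dk$ from above, the comparison forces you to double the upper bound, turning the $Dk$ term into $2Dk$; this is exactly why you land at $|V(G)|\le 4Dk$. Counting components of $G-S$ or noting that isolated vertices of $G-S$ send at least two edges across the cut can only sharpen the forest term $|V(G)|-|S|$ by subtracting the number of components, and that saving is itself bounded by $Dk/2$ (each component sends at least two cut edges, and there are at most $Dk$ cut edges). Plugging this back in still leaves a $4Dk$--type bound; the factor of two is baked into the doubling, not into the forest count.

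The fix---and this is what the paper does---is to avoid bounding $|E(G)|$ on both sides and instead bound the \emph{cut} $|E(S,V(G)\setminus S)|$ directly. Sum degrees only over $V(G)\setminus S$: this sum is at least $3(|V(G)|-k)-(|V(G)|/2-k)$ by the degree-$2$ hypothesis, and it equals $2|E(G-S)|+|E(S,V(G)\setminus S)|$. Since $|E(G-S)|\le |V(G)|-k-1$, the cut is at least $|V(G)|/2$; since the cut is also at most $Dk$, you get $|V(G)|\le 2Dk$ immediately. The point is that cut edges are counted \emph{once} in $\sum_{v\notin S}\deg(v)$, so the $Dk$ term is not doubled. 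Your deduction of $|E(G)|\le 3Dk$ from $|V(G)|\le 2Dk$ is then correct.
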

\begin{proof}
  As $V(G) \setminus S$ is a forest, the number of edges inside $V(G)\setminus S$ is at most $|V(G)|-k-1$.
  Hence, since at most $|V(G)|/2-k$ vertices have degree 2 and all other vertices have degree at least 3, the number of edges between $S$ and $V(G) \setminus S$ is at least
  \begin{equation*}
    3(|V(G)|-k) - (|V(G)|/2-k) - 2(|V(G)|-k-1) \geq |V(G)|/2 \enspace .
  \end{equation*}
  However, since the vertices in $S$ have average degree at most $D$, there are at most $Dk$ edges from $S$ to~$V(G) \setminus S$.
  Combining the two bounds gives that $Dk \geq |V(G)|/2$, which rearranges to $|V(G)| \leq 2Dk$.

  The sum of degrees of vertices in $G$ is $2|E(G)|$, and the sum of degrees of vertices in $S$ is at most~$Dk$, so the sum of degrees of vertices in $V(G) \setminus S$ is at least $2|E(G)| - Dk$.
  Hence, the number of edges between $S$ and $V(G) \setminus S$ is at least $2|E(G)| - Dk - 2(|V(G)|-k-1)$.
  As before, this amounts to at most $Dk$, so we get that
  \begin{multline*}
    Dk \geq 2|E(G)| - Dk - 2(|V(G)|-k-1)
       \geq 2|E(G)| - Dk - 2(2Dk-k-1)\\
          = 2|E(G)| - 5Dk +2k +2
       \geq 2|E(G)| - 5Dk  \enspace .
  \end{multline*}
  This rearranges to $|E(G)| \leq 3Dk$.

  The average degree of $G$ is at most $3k$ since $D \leq |V(G)|$, so $|E(G)| \leq 3Dk \leq 3|V(G)|k$.
\end{proof}

We are now ready to prove our main result of this section.
See the description in Sect.~\ref{subsec:examples} for a high-level overview of the algorithm.

We give a dynamic branching tree algorithm for the problem.
To make the proof easier to read, we present an inductive argument, where we define an algorithm $\mathcal A_k$ which works on instances with parameter~$k$, and which uses $\mathcal A_{k-1}$ as a subroutine.
This corresponds to a branching tree algorithm where~$\mathcal A_k$ is used at levels of the branching tree corresponding to parameter $k$.

\begin{theorem}
  There is a dynamic fixed-parameter algorithm for {\sc Feedback Vertex Set} that handles edge insertions and edge deletions in $f(k) \cdot \log^{O(1)}(n)$ amortized time for $n$-vertex graphs.
\end{theorem}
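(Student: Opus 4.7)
The plan is to prove the theorem by induction on $k$, constructing an algorithm $\mathcal{A}_k$ that maintains a dynamic reduced kernel together with a shallow branching decision and recursively invokes $\mathcal{A}_{k-1}$ on each branch. At each recursion level I would maintain the classical topological kernel $K$ obtained by iteratively deleting vertices of degree at most one and contracting vertices of degree exactly two, keeping the resulting multigraph (multi-edges and self-loops included); a vertex carrying a self-loop must lie in every feedback vertex set and so is added eagerly to the partial solution. The kernel is stored as an adjacency structure for $K$ together with doubly-linked lists $L_e$ attached to each edge $e \in E(K)$ which record the chain of contracted degree-two vertices merged into $e$, plus pointers from every original vertex to its position in such a list; these data allow reductions and expansions to be applied locally in time polynomial in $k$ per affected neighborhood, in the same style as the bookkeeping used earlier for \textsc{Max Leaf Spanning Tree}.

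For the branching rule, note that after exhaustive reduction every non-self-loop vertex in $K$ has degree at least three, so Lemma \ref{bounds} applies to $K$ and yields $|E(K)| \leq 3Dk$ for any feedback vertex set $S$ of size at most $k$ whose average degree in $K$ is $D$. Hence some vertex of $S$ has degree at least $|E(K)|/(3k)$, and a summation argument shows that at most $6k$ vertices of $K$ can have degree this large, so the top-$6k$ highest-degree vertices of $K$ must contain an element of every feedback vertex set of size at most $k$. For dynamic robustness $\mathcal{A}_k$ instead branches on the top $12k$ vertices by degree, which it tracks via a bucketed degree structure. For each branching vertex $v$, a child node runs $\mathcal{A}_{k-1}$ on the instance obtained by deleting $v$ from $K$ and re-triggering the local reductions; queries walk the tree of size at most $(12k)^k$ and report the smallest discovered feedback vertex set, with base case $\mathcal{A}_0$ merely testing acyclicity via the dynamic spanning-forest data structure of Proposition~\ref{thm:dynamicconnectivity}.

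The main obstacle is the amortized update analysis, which I would establish as follows. Consider a node $u$ of the branching tree at recursion depth $k-k'$, with kernel $K_u$ and branching set $T_u$ of size $12k'$ fixed at its last rebuild time $t_0$. Because $\sum_v \deg(v) = 2|E(K_u)|$, the $12k'$-th largest degree satisfies $d_{12k'} \leq |E_{t_0}(K_u)|/(6k')$. The set $T_u$ remains a valid witness for the Lemma \ref{bounds} branching as long as no $v \notin T_u$ reaches the current threshold $|E_t(K_u)|/(3k')$; for this to be violated, either $|E_t(K_u)| \leq |E_{t_0}(K_u)|/2$, which already requires $\Omega(|E_{t_0}(K_u)|)$ edge updates to have been applied at $K_u$, or the degree of some $v \notin T_u$ must have grown by at least $|E_{t_0}(K_u)|/(12k')$, which requires $\Omega(|E_{t_0}(K_u)|/k')$ updates incident to $v$. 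A rebuild at $u$ reconstructs the child subtrees and costs $f(k') \cdot |E(K_u)| \cdot \log^{O(1)} n$, so its amortized cost is $f(k') \cdot \log^{O(1)} n$ per update directed at $u$. Combining this with the inductive guarantee from $\mathcal{A}_{k-1}$ and the fact that each graph update propagates into at most $f(k)$ branching-tree nodes yields an overall amortized update time of $f(k) \log^{O(1)} n$. The most delicate step will be verifying that propagating an update into a child kernel does not cost more than what is already charged to the update at the parent level; I would handle this inductively, exploiting that the local reduction bookkeeping touches only a constant-size neighborhood per change and that the amortization above is robust under variation of $|E(K_u)|$ driven by reductions higher in the recursion.
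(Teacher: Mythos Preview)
Your overall architecture matches the paper's proof almost exactly: induction on $k$, a reduced multigraph kernel obtained by suppressing degree-$\le 2$ vertices, branching on the $O(k)$ highest-degree kernel vertices (with the same $12k$ vs.\ $6k$ slack for robustness), amortizing a rebuild over $\Omega(m^*/k)$ updates, and $\mathcal A_0$ handled by dynamic connectivity. The amortization sketch you give is the same idea as the paper's, though the paper simply rebuilds after a fixed $\delta = m^*/(6k{+}1)$ kernel updates rather than waiting for the branching set to become invalid; either version works.

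There is, however, one genuine gap in the data-structure layer. You store only doubly-linked lists $L_e$ for the chains of contracted degree-two vertices and pointers into them, ``in the same style as \textsc{Max Leaf Spanning Tree}''. That is not enough here. In \textsc{Max Leaf} only degree-two vertices with degree-two neighbors are suppressed, so the contracted material really is a path. In the FVS kernel you also iteratively delete degree-one vertices, so entire pendant \emph{trees} get absorbed; moreover a pendant tree can hang off a vertex that is itself later contracted as a degree-two vertex, so the object attached to a kernel edge is a tree, not a chain. When an update touches a vertex $u$ buried inside such a tree, you must reintroduce $u$ into the kernel, which requires locating the kernel vertex or edge that currently owns $u$, splitting the absorbed tree along the path from $u$ to that owner, and re-linking the pieces. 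With only doubly-linked lists and back-pointers this can cost $\Theta(n)$ in the worst case. The paper handles this by storing, for each kernel vertex $v$ and each kernel edge $e$, genuine rooted trees $T_v$ and $L_e$ inside a Sleator--Tarjan link/cut structure, so that the needed \texttt{evert}, \texttt{nca}, \texttt{cut}, and \texttt{link} operations all run in $O(\log n)$; it then shows that a single edge update to $G$ triggers only a constant number (at most $17$) of such tree operations. Once you replace your linked lists by link/cut trees and account for the pendant-tree case, the rest of your argument goes through and coincides with the paper's.
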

\begin{proof}
  We prove that for each $k \geq 0$ there is a dynamic algorithm $\mathcal A_k$ which handles edge insertions and edge deletions in $f(k) \cdot \log^{O(1)}(n)$ amortized time for $n$-node graphs.
  In particular, starting with an edge-less graph on $n$ nodes, $\mathcal A_k$ handles updates such that after $u$ updates, the total runtime is $O(u \cdot f(k) \cdot \log^{O(1)}(n))$.
  We prove this by induction on $k$.

  For $k=0$, this is the problem of dynamically maintaining whether a graph has a cycle.
  We solve this using the dynamic algorithm from Proposition~\ref{thm:dynamicconnectivity} for maintaining the connected components of an undirected graph, together with a count of the number of edges within each component.

  We now want to design~$\mathcal A_k$ for $k \geq 1$ given $\mathcal A_{k-1}$ as prescribed.
	
	\subsection{Maintaining the Kernel}
	
  Starting with a graph~$G$, we begin by preprocessing it.
  First, to each vertex $v$ of $G$ we assign a tree $T_v$, and to each edge~$e$ of $G$ we assign a tree~$L_e$.
  Initially, $T_v$ consists of the single vertex $v$, and when $e = \{x,y\}$ then $L_e$ consists of the vertices~$x$ and~$y$ connected by an edge.
  The set $\{T_v\}_{v \in V(G)}\cup\{L_e\}_{e \in E(G)}$ will be stored in a dynamic tree data structure $D$ that we describe in Section~\ref{sec:dynamictree}.
  We will always store $O(n+m)$ vertices in $D$, and so operations will take amortized time $O(\log n)$. 
  Note in particular that if a vertex $x$ has degree $d$, then there are $d+1$ copies of $x$ in $D$. 
  For notation, we refer to the copy of $x$ in $T_x$ as $x_x$, and the copy of $x$ in $L_{x,y}$ as $x_y$.

  For a vertex $v$, $T_v$ initially contains just $v$, and it will correspond to an induced subgraph of $G$ which is a tree rooted at $v$ that was contracted into $v$.
  For an edge $\{x,y\}$, $L_{x,y}$ initially consists of the vertices~$x$ and $y$ connected by an edge, and will eventually correspond to an induced subgraph of $G$ which is a tree containing both $x$ and $y$ that was contracted into~$\{x,y\}$.
  Importantly, $x$ and $y$ will both always be leaves in the tree corresponding to~$L_{x,y}$.

  For two adjacent vertices $x,y$ in $G$, we define the \texttt{combine(x,y)} subroutine as follows:   
  Let $a$ be the unique neighbor of $x_y$ in $L_{x,y}$.
  First cut $a$ from $x_y$ in $L_{x,y}$, then link $a$ to $x_x$ in~$T_x$. 

  We repeatedly perform the following steps until no more are possible:
  \begin{itemize}
    \item Delete any vertex $v$ of degree one.
      When doing this, let $x$ be the unique neighbor of $v$ in~$G$.
      We then need to `merge' $T_x$, $L_{x,v}$, and $T_v$ all into $T_x$.
      We do this by merging $v_v$ with~$v_x$, so that $v_x$ now has all the neighbors of both nodes, and then similarly merging $x_x$ with $x_v$, and calling then entire new connected tree $T_x$.
    \item Delete any vertex $v$ of degree two, replacing it with an edge between its two neighbors~$x$ and $y$.
      Similar to above, we want to merge $L_{x,v}$, $T_v$, and $L_{v,y}$ all into the new $L_{x,y}$.
      We do this by merging $v_v$ with both $v_x$ and $v_y$ as above, and calling the resulting tree $L_{x,y}$. 
  \end{itemize}

For any particular contracted vertex, we have a pointer into its position in the corresponding tree that it was contracted into (each contracted vertex is in exactly one tree of~$D$).
This can be done in $O(m\log n)$ time using $D$ and by maintaining the vertices' degrees throughout the process.
In particular this takes no time on an empty initial graph.
The resulting graph~$G^*$ may have multi-edges or self-loops.
Note in particular that the size of a minimum FVS has not changed, since whenever we deleted a vertex, we did not need to include it in a minimum FVS, and moreover, any minimum FVS for $G^*$ works for $G$ as well.
Let $m^*$ be the number of edges in $G^*$, and note that $m^* \leq m$.

We first describe how we will reflect updates to $G$ in our representation $G^*$.
We will maintain that the set of vertices for a minimum FVS of $G$ still appears in $G^*$, and that each update to $G$ will cause at most~9 updates to $G^*$.
Each vertex in $G$ will either also be in $G^*$, or else it will be in at least one $L_e$ or $T_v$, and in this case it won't have any other neighbors in $G$ besides the vertices it is adjacent to in that tree.
  
Suppose an edge is inserted/deleted between vertices $u$ and $v$ of $G$.
First, we reintroduce~$u$ and $v$ into~$G^*$ if they are not there.
Suppose $u$ is not in $G^*$ (and then deal with $v$ symmetrically).
Hence, $u$ appears in $D$.
There are three cases depending on where $u$ is in $D$.

  \begin{itemize}
    \item If $u$ is in $T_v$ for some vertex $v$ in $G^*$, then we add $u$ and an edge from $v$ to $u$ into $G^*$.
      Let~$w$ be the first vertex on the path from $u$ to $v$ in $T_v$.
      We cut between $u$ and $w$ in~$T_v$, then rename $u$ to $u_u$.
      We create a new vertex~$u_v$ which we link to $w$.
      Next let $x$ be the first vertex on the path from $v_v$ to $w$.
      We cut between~$v_v$ and $x$, then create a new vertex~$v_u$ which we link to $x$.
      The component which still contains~$v_v$ is still $T_v$, the component containing~$u_u$ is now $T_u$, and the remaining component containing $v_u$ and~$u_v$ is now $L_{u,v}$.
    \item If $u$ is in $L_{x,y}$ for some vertices $x,y$ in $G^*$, then we have two cases depending on whether~$u$ is on the path from $x$ to $y$ in $L_{x,y}$.
      To determine this, we root $L_{x,y}$ at $u$, and find the nearest common ancestor~$z$ of~$x$ and $y$.
      \begin{itemize}
        \item If $z = u$, then $u$ is on the path from $x$ to $y$.
          In $G^*$, we delete the edge between $x$ and~$y$, and replace it with a new vertex $u$, and edges between $x$ and $u$, and between $y$ and~$u$.
          Let $b$ be the first vertex on the path from $u$ to $x$ in $L_{x,y}$.
          Cut $(b,u)$, create a new vertex~$u_x$, and link $u_x$ to $b$.
          Then, rename $x_y$ to $x_u$.
          The new component containing $x_u$ is now $L_{u,x}$.
          Do the same with $y$ to create the new $L_{u,y}$.
          The remaining component in $D$ containing the original $u$ from $L_{x,y}$ is now renamed to $T_u$.
        \item If $z \neq u$ then $u$ is not on the path from $x$ to $y$.
          First we do the previous steps with~$u$ replaced by~$z$.
          We have thus `split' edge $(x,y)$ by putting vertex $z$ in the middle.
          Furthermore, $u$ is now in~$T_z$.
          We then do the first step with $v = z$ to add a vertex $u$ adjacent to only $z$.
      \end{itemize}
  \end{itemize}

  We note that these rules work even if $u$ appears in $L_e$ where $e$ is a self-loop on a vertex~$w$.
  In this case we apply the rules where $x$ and $y$ are the two copies of $w$ in $L_e$.

  The two vertices $u$ and $v$ now appear in $G^*$, and we can make the new edge update as appropriate.
  Finally, some vertices may have degree two or one and need to be removed, which can cause more vertices to need to be removed, and so on.
  We do this in the way we originally did to construct $G^*$.
  In particular, throughout this process we added at most four vertices to $G^*$, namely $u$, $v$, and the two additional vertices we may have added to the graph ($z$ in the case $z \neq u$), and these are the only vertices which could have degree one.
  Deleting a vertex of degree one is the only way to decrease the degree of any other vertex.   
  Moreover, every vertex other than these four originally had degree at least three.
  Hence, this process will remove at most eight vertices (these four, which can reduce two more vertices to degree one, which can in turn decrease one more vertex to degree one, and in turn one more vertex to degree two).
  In all, at most 17 edges were changed (1 given edge change, 8 edge changes when simplifying $G^*$, and 8 edge changes when adding $u$ and $v$ to $G^*$).

	\subsection{Choosing vertices to branch on}
  Let us now describe what $\mathcal A_k$ will do assuming $G^*$ has no self-loops.
  We will then describe the modification if $G^*$ does have self-loops.

  Set $\delta = m^* / (6k+1)$, and let $B_H$ be the set of vertices in $G^*$ which have degree at least $\delta$ (initially; this set will not change as we make updates to $G^*$).
  By Lemma \ref{bounds}, we know that if $G^*$ has a FVS of size at most~$k$, then it must include a vertex of degree at least $m^* / (3k)$.
  Hence, one of the vertices in $B_H$ need to be included in such an FVS.
  Moreover, even if we make up to $\delta$ edge insertions and deletions, yielding a graph with $m' \geq m^* - \delta = m^* \cdot \frac{6k}{6k+1}$ edges, the set of vertices of degree at least $m'/(3k) \geq 2 \delta$ must be entirely contained within~$B_H$, since vertices not in $B_H$ currently have degree at most $\delta - 1$, and so after $\delta$ insertions they can have degree at most $2\delta - 1$.
  Moreover, Lemma \ref{bounds} still applies, since each edge update can create at most two new vertices of degree two, so the number of degree-2 vertices is at most $2\delta < n/2 - k$.
  In other words, as long as we make at most $\delta$ edge modifications to~$G^*$, and never introduce self-loops, we know that if the current $G^*$ has a FVS of size at most~$k$, then it must intersect~$B_H$ in at least one vertex.

  We will branch on picking each vertex of $B_H$ to be in our FVS.
  Since $|B_H| \leq 2m^* / \delta = 12k+1$, we are branching on at most $12k+1$ choices.
  For each $v \in B_H$, we initialize $\mathcal A_{k-1}$ on the graph $G^* \setminus \{ v \}$.
  Whenever there is an update to $G^*$, we will reflect that update in the $G^* \setminus \{ v \}$ for $A_{k-1}$ as well.
  Then, if~$A_{k-1}$ says there is a FVS $S$ of size at most $k-1$, we can return $S \cup \{ v \}$ as our FVS of size $k$.
  If all of the branches say that there is no FVS of size $k-1$, then we know that there is no FVS in $G$ of size $k$ either.
  This is guaranteed to be correct as long as there have been at most $\delta$ updates to $G^*$.

  After $\delta$ updates, we simply reinitialize on the current graph $G$, by recomputing $G^*$ and~$B_H$, and the recursive calls.
  This way we are guaranteed that the algorithm is always correct.
  This time cost will be amortized over the $\delta$ updates, as we will discuss later.

  Suppose instead that $G^*$ initially has self-loops.
  Recall that our process for updating $G^*$ upon edge insertions or deletions can never create new self-loops in $G^*$, and so we can only have self-loops if they are initially present in $G^*$.
  Let $B_L$ be the set of nodes with self-loops (this set is computed initially in $O(m)$ time).
  Any FVS for $G^*$ must contain all of $B_L$.
  If $|B_L| \leq k$, then it is still possible that there is a FVS of size at most $k$.
  In this case, instead of branching on only $B_H$, we instead branch on $B := B_H \cup B_L$.
  As long as we have done at most $\delta$ edge updates, then if there are still any self-loops in our graph, any FVS must contain a node (in fact, every vertex) in $B_L$, and if the updates have removed all the self-loops, then a FVS of size at most $k$ must contain a vertex from $B_H$.
  Hence, as long as we have done at most $\delta$ edge updates, a FVS of size at most $k$ must contain a vertex from $B$.
  We hence proceed as before, branching on all $|B| \leq 13k+1$ vertices.

  If, on the other hand, $|B_L| > k$, then we know that there is no FVS of size at most $k$, and we return this.
  Whenever an edge update to $G^*$ deletes a self-loop, and that node no longer has a self-loop, we remove it from $B_L$.
  Once we get to $|B_L| \leq k$, it becomes possible that there is a FVS of size at most $k$, so we then compute $B$ and continue as in the above paragraph instead.

  This concludes the description of the algorithm and proof of correctness.
  We now analyze the runtime.
  While $|B_L| > k$, each update takes only constant time to update at most 17 edges and hence remove at most 17 nodes from $B_L$.
  The initialization, either at the beginning or once $|B_L| \leq k$, takes $O(m \cdot (1 + (13k+1) \cdot f(k-1)) \log^{O(1)}n)$ time, since computing $G^*$ and $B$ takes $O(m\log n)$ time, and then we initialize $13k+1$ copies of $\mathcal A_{k-1}$.
  Each update before $\delta$ updates to $G^*$ causes 17 updates to $G^*$, and then we need to propagate those updates to the $13k+1$ copies of $\mathcal A_{k-1}$, which takes $O((1 + (13k+1) \cdot f(k-1)) \log^{O(1)}n)$ time.
  Reinitialization after $\delta$ updates to $G^*$ again takes $O(m \cdot (1 + (13k+1) \cdot f(k-1)) \log^{O(1)}n)$ time, but amortized over all $\delta/17 = m^* / (17 (6k+1)) \geq m (1 - 1/(6k+1)) / 17(6k+1))$ updates, takes only amortized time $O((1 + \frac{(6k+1)(13k+1)}{1 - 1/(6k+1)}) \cdot f(k-1)) \log^{O(1)}n)$ per update.
  These are of the desired form with $f(k) = O((17 \cdot 6)^k \cdot (k!)^2) = 2^{O(k \log k)}$.
\end{proof}

\section{Conditional Lower Bounds}
Our conditional lower bounds are based on two hypotheses involving reachability oracles (ROs): data structures that preprocess a given directed graph $G$ and are able, for any queried pair of vertices $u,v$, to return whether $u$ can reach~$v$ in $G$.
Suppose that $G$ has $m$ edges and $n$ vertices.
There are essentially only two approaches to this problem: (1) precompute and store the transitive closure of $G$ in $O(\min\{n^\omega,mn\})$ time (for $\omega<2.373$~\cite{VassilevskaWilliams2012}) and then answer queries in constant time; (2) do not precompute anything and answer queries in~$O(m)$ time.
If the current algorithms are actually optimal, then this would mean that any reachability oracle with $O(n^{2-\eps})$ preprocessing time for $\eps>0$ must require $m^{1-o(1)}$ time to answer queries.
We are not going to make such a bold conjecture, even though it is currently plausible.
Instead we will use the much more innocent hypothesis that an RO that uses $O(m^{1+o(1)})$ preprocessing time, cannot answer reachability queries in $m^{o(1)}$ time.

\begin{hypothesis}
  On a word-RAM with $O(\log m)$ bit words, any Reachability Oracle for directed acyclic graphs on $m$ edges must either use $m^{1+\eps}$ preprocessing time for some $\eps>0$, or must use $\Omega(m^\delta)$ time to answer reachability queries for some constant $\delta>0$.
\end{hypothesis}

P{\v{a}}tra{\c{s}}cu studied this hypothesis, and while he was not able to prove it, he showed the following strong cell probe lower bound~\cite{Patrascu2011}: If an RO uses $n^{1+o(1)}$ words of space on a word-RAM with $O(\log n)$ bit words, then in the cell probe model it must require query time~$\omega(1)$.
His lower bound also holds for butterfly graphs (which are DAGs) with $n$ vertices and $n^{1+o(1)}$ edges, and so Patrascu's unconditional bound shows in particular:

\begin{corollary}[follows from Patrascu's work]
  There are directed acyclic graphs on $m$ edges for which any RO that uses $m^{1+o(1)}$ preprocessing time (and hence space) in the word-RAM with $O(\log n)$ bit words, must have $\omega(1)$ query time.
\end{corollary}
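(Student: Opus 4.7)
The plan is to invoke Patr\'a\c{s}cu's unconditional cell probe lower bound directly, applied to the family of butterfly graphs. First, I would recall the structure of the $n$-vertex butterfly DAG: it has $\log n$ layers each of width $n/\log n$, with two forward edges out of every non-terminal vertex, giving $m = \Theta(n \log n) = n^{1+o(1)}$ edges. In particular, for this family $m^{1+o(1)} = n^{1+o(1)}$, so any assumption that preprocessing takes $m^{1+o(1)}$ time is equivalent (up to the $o(1)$) to assuming it takes $n^{1+o(1)}$ time.

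Second, I would observe that on the word-RAM model with $O(\log n)$-bit cells, preprocessing time upper-bounds the number of cells touched: in $T$ steps the algorithm can write to at most $T$ memory locations, so any cell probed later during a query must either have been written during preprocessing or left untouched. Hence a preprocessing time of $m^{1+o(1)}$ forces the effective data structure to live in $m^{1+o(1)} = n^{1+o(1)}$ cells of space. Moreover, any $q$-step word-RAM query algorithm makes at most $q$ cell probes, so a word-RAM upper bound on query time is also a cell probe upper bound of the same order.

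Third, I would apply Patr\'a\c{s}cu's cited cell probe theorem to this setting: any reachability oracle for the butterfly DAG that uses $n^{1+o(1)}$ cells of space in the cell probe model with $O(\log n)$-bit words must answer reachability queries in $\omega(1)$ probes. Combining with the previous step, the assumed word-RAM data structure would yield a cell probe oracle with $n^{1+o(1)}$ space and $O(1)$ probe query time, contradicting Patr\'a\c{s}cu's bound. Therefore the word-RAM query time must be $\omega(1)$, proving the corollary on the family of butterfly graphs.

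The main obstacle here is conceptual rather than computational: one needs to be careful that Patr\'a\c{s}cu's result is stated for the cell probe model and in terms of \emph{space}, whereas the corollary is stated in terms of word-RAM \emph{preprocessing time}. The crucial bridges are the trivial ``time $\geq$ space written'' bound during preprocessing, and the fact that word-RAM query time dominates cell probe query complexity; once these are in place the corollary follows with no additional combinatorics. A minor secondary check is that Patr\'a\c{s}cu's hard instances are indeed DAGs (butterflies are acyclic by construction, so this is immediate).
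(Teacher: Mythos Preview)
Your proposal is correct and follows essentially the same approach as the paper: the paper simply notes that P\v{a}tra\c{s}cu's cell probe lower bound applies to butterfly graphs, which are DAGs with $n$ vertices and $n^{1+o(1)}$ edges, so $m^{1+o(1)}=n^{1+o(1)}$ and the corollary follows. You are more explicit than the paper about the two bridging steps (preprocessing time bounds space, and word-RAM query time dominates cell probe query complexity), which the paper compresses into the parenthetical ``(and hence space)''; this extra care is fine and arguably clearer.
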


We will be use this unconditional result to obtain similar unconditional lower bounds for some of our dynamic parameterized problems.
Let us first see that two popular conjectures used in prior work strongly support our RO hypothesis above.

The {\sc Triangle Dection} problem asks to determine the existence of a cycle of length 3 in a given graph.
\begin{triangleconjecture}
  There is a constant $\eps>0$, so that on a word-RAM with $O(\log m)$ bit words, detecting a triangle in an $m$-edge graph requires $\Omega(m^{1+\eps})$ time.
\end{triangleconjecture}

The {\sc 3SUM} problem asks if a given set of $n$ integers contains three elements that sum to zero.
The {\sc 3SUM} problem can easily be solved in $O(n^2)$ time, and this is conjectured to be optimal:
\begin{3sumconjecture}
  There is a constant $c$, so that on a word-RAM with $O(\log n)$ bit words, $3$SUM on~$n$ integers from $\{-n^c,\ldots,n^c\}$ requires $n^{2-o(1)}$ time.
\end{3sumconjecture}

These conjectures have been used a lot (see e.g.~\cite{Patrascu2010,KopelowitzEtAl2016,AbboudVassilevskaWilliams2014}).
The Triangle Conjecture is particularly plausible, since the currently fastest algorithm for {\sc Triangle Detection} runs in $O(\min\{n^\omega,m^{2\omega/(\omega+1)}\})$ time~ \cite{AlonEtAl1997,ItaiRodeh1978}; this is quite far from linear time.
The fastest algorithm for {\sc 3SUM} on $n$ integers runs in $O(n^2 / \max\{\frac{w}{\log^2w},\frac{\log^2n}{(\log\log n)^2}\})$ time in the word-RAM model with $w$-bit words~\cite{BaranEtAl2008}.

Using a slight modification of the reductions by Abboud and Vassilevska Williams~\cite{AbboudVassilevskaWilliams2014}, we can reduce both the {\sc Triangle Detection} problem and the {\sc 3SUM} problem not only to the RO problem, but in fact to the following potentially simpler special case:

\medskip
\noindent
{\bf $k$-Layered Reachability Oracle ($k$-LRO)}:
Let $G$ be a given graph the vertices of which consist of $k$ layers of up to $n$ vertices each, $L_1,\ldots,L_k$.
The edges of $G$ are directed and go between adjacent layers in increasing order of $i$, i.e. the edge set is $\cup_{i=1}^{k-1} E_i$ where $E_i\subseteq L_i\times L_{i+1}$. 
Given $G$, one is to preprocess it so that on query $\{u,v\}$ for $u\in L_1$, $v\in L_k$, one can return whether $u$ can reach $v$ in $G$.

\medskip
\noindent
{\bf Triangle Conjecture implies hardness for $3$-LRO.}
The {\sc Triangle Detection} problem in $m$-edge graphs can be reduced to $3$-LRO on graphs with $O(m)$ edges where $m$ queries are asked, as follows.
Given a graph $G$, create a new graph $G'$ by taking three copies $L_1,L_2,L_3$ of the vertex set $V(G)$, where each copy forms a layer of 3-LRO.; the copies of $v\in V(G)$ are~$v_i$ in $L_i$ for $i=1,2,3$.
Every edge $\{u,v\}\in E(G)$ appears as four (directed) edges: $(u_i,v_{i+1})$ and $(u_{i+1},v_i)$ for $i=1,2$.
Now, for any $u,v\in V(G)$, $u_1$ can reach $v_3$ if and only if there is a path of length $2$ between $u$ and $v$ in $G$, and hence an edge $(u,v)$ of $G$ is in a triangle if and only if $u_1$ can reach $v_3$ in $G'$.
Thus, the Triangle Conjecture implies that if $3$-LRO on $m$-edge graphs can be solved with $m^{1+o(1)}$ preprocessing time, then its queries must take~$\Omega(m^\eps)$ time for some $\eps>0$.

\medskip
\noindent
{\bf The {\sc 3SUM} Conjecture implies hardness for $3$-LRO.}
The {\sc 3SUM} problem on $n$ integers can be reduced to $3$-LRO on an $O(n)$-vertex $O(n^{1.5})$-edge graph, where $O(n^{1.5})$ queries are asked.
The reduction uses a construction by Patrascu~\cite{Patrascu2010} who showed that {\sc 3SUM} can be reduced to triangle listing in a special graph.
Abboud and Vassilevska Williams~\cite{AbboudVassilevskaWilliams2014} further noticed that Patrascu's reduction reduces {\sc 3SUM} on~$n$ numbers to the following problem.
Let $\eps>0$ be any constant.
Let $G$ be a directed graph with $3$ layers $L_1,L_2,L_3$ with some edges between vertices of $L_i$ and vertices of $L_{i+1}$ for $i=1,2$.
Here, $|L_1|\leq O(n^{1+\eps}),\linebreak |L_2|\leq n, |L_3|\leq O(n^{1+\eps})$, and every vertex in $L_1$ has $\leq O(n^{1/2-\eps})$ neighbors in $L_2$; the same holds for $L_3$.
Now, we are given $n^{1.5+\eps}$ pairs $(a^j_1,b^j_3)\in L_1\times L_3$ and need to determine for each such pair whether there is some $j$ so that $a^j_1$ can reach $a^j_3$ in $G$. 

This task is just a $3$-LRO problem with $m\leq O(n^{1.5})$ edges and $O(n^{1.5+\eps})$ queries.
Therefore, if this problem can be solved in $O(m^{4/3-\delta})$ time for some $\delta>0$, then {\sc 3SUM} can be solved in $O(n^{2-1.5\delta})$ time.
Thus, assuming the {\sc 3SUM} conjecture, each of the $O(n^{1.5+\eps})$ queries needs to take $n^{0.5-\eps-o(1)}$ time (amortized), and this expression equals $m^{1/3-3\eps-o(1)}$.
Since $\eps>0$ can be chosen arbitrarily, we obtain that either the preprocessing time is at least $m^{4/3-o(1)}$, or the query time is at least $m^{1/3-o(1)}$.

We note that we can assume an even weaker form of the $3$SUM conjecture and still get a meaningful result here: suppose that $3$SUM is believed to require $\Omega(n^{1.5+\eps})$ time for some $\eps>0$, then $3$-LRO must either need $\Omega(m^{1+\delta})$ preprocessing time or $\Omega(m^\delta)$ query time (for $\delta=2\eps/3>0$).
In any case, if one believes either the triangle conjecture or the {\sc 3SUM} conjecture, then the following strengthening of our RO hypothesis should be a very popular conjecture:

\begin{hypothesis}[LRO]
  There is some constant $\eps>0$ and some constant $\ell$ so that any $\ell$-LRO for $m$-edge graphs either has $\Omega(m^{1+\eps})$ preprocessing time, or $\Omega(m^\eps)$ query time.
\end{hypothesis}
In fact, one should believe the above conjecture even for the special case of $3$-LROs.

We now show how this conjecture implies strong hardness for several parameterized dynamic problems.
\begin{theorem}
  Fix the word-RAM model of computation with $w$-bit words for $w = O(\log n)$ for inputs of size~$n$.
  Assuming the LRO Hypothesis, there is some $\delta>0$ for which the following dynamic parameterized graph problems on $m$-edge graphs require either $\Omega(m^{1+\delta})$ preprocessing or $\Omega(m^\delta)$ update or query time:
  \begin{itemize}
   \item {\sc Directed $k$-Path} under edge insertions and deletions,
   \item {\sc Steiner Tree} under terminal activation and deactivation, and
   \item {\sc Vertex Cover Above LP} under edge insertions and deletions. 
  \end{itemize}
  Under the RO Hypothesis (and hence also under the LRO Hypothesis), there is a $\delta>0$ so that {\sc Directed Feedback Vertex Set} under edge insertions and deletions requires $\Omega(m^\delta)$ update time or query time. 

  Unconditionally, there is no function $f$ for which a dynamic data structure for {\sc Directed Feedback Vertex Set} performs updates and answers queries in $O(f(k))$ time. 
\end{theorem}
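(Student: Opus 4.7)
The plan is to reduce from the appropriate reachability-oracle hypothesis to each of the five dynamic problems via a common template: preprocess the input DAG $G$ into the dynamic data structure (perhaps after adding $O(1)$ auxiliary vertices), then answer each reachability query $(u,v)$ by performing $O(1)$ updates to install the query, issuing one dynamic query, and performing $O(1)$ updates to restore the original state. If updates and queries both ran in $m^{o(1)}$ time after $m^{1+o(1)}$ preprocessing, the composite algorithm would be a (L)RO of matching efficiency, contradicting the assumed hypothesis.

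For \textsc{Directed $k$-Path} under the LRO Hypothesis, given an $\ell$-layered DAG on layers $L_1,\ldots,L_\ell$, I would introduce two auxiliary isolated vertices $s,t$ and set $k := \ell + 2$. A reachability query from $u \in L_1$ to $v \in L_\ell$ is then answered by inserting the directed edges $(s,u)$ and $(v,t)$, querying for a directed path on $k$ vertices, and deleting the two edges. Because the underlying graph is strictly layered and $s,t$ each have at most one incident edge during the query, the only candidate $k$-vertex directed path is $s \to u \to \cdots \to v \to t$, which exists iff $u$ reaches $v$. The parameter $k = \ell+2$ is a constant determined by the LRO.

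For \textsc{Directed Feedback Vertex Set} under the RO Hypothesis, a DAG admits a FVS of size $0$, so I would answer a reachability query $(u,v)$ by inserting the back edge $(v,u)$, querying whether the current graph still admits a FVS of size $0$ (equivalently, is acyclic), and then deleting that edge; the cycle appears iff $u$ reaches $v$. For the unconditional statement I would plug this into the corollary to P{\v{a}}tra{\c{s}}cu's cell-probe bound stated earlier: if some dynamic algorithm handled updates and queries in $O(f(k))$ time, then starting from the empty graph and performing $m$ insertions would build P{\v{a}}tra{\c{s}}cu's hard DAG in $O(m\cdot f(0))$ time, and the above encoding would answer each reachability query in $O(f(0)) = O(1)$ time, yielding an RO of $m^{1+o(1)}$ preprocessing and $O(1)$ query time, contradicting the $\omega(1)$ unconditional lower bound.

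For \textsc{Steiner Tree} (under terminal activation/deactivation) and \textsc{Vertex Cover Above LP} (under edge updates), I would use small, fixed-size \emph{directionality gadgets} replacing each directed edge of the layered DAG. For Steiner Tree, the gadget routes any tree connecting a constant set of activated terminals through a unique corridor whose length changes by a predictable amount exactly when a directed $u$-$v$ path exists, so activating $u$ and $v$ reveals reachability through the Steiner optimum. For Vertex Cover Above LP, I would subdivide each directed edge an even number of times so that directed $u$-$v$ paths correspond bijectively to \emph{odd} cycles through an inserted edge $(u,v)$, toggling the graph between bipartite (LP gap $0$) and non-bipartite (LP gap at least $\tfrac{1}{2}$), which a constant-parameter query can distinguish. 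The main obstacle I anticipate is the Vertex Cover Above LP reduction: the LP-above-LP parameter is a fragile, global quantity, and I need to engineer the gadget so that a single reachability query flips it by only a bounded amount via $O(1)$ updates while keeping the parameter bounded by a constant throughout, without introducing spurious odd cycles elsewhere in the gadget.
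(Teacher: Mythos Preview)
Your reductions for \textsc{Directed $k$-Path} and \textsc{Directed Feedback Vertex Set} (both conditional and unconditional) are correct and essentially the paper's arguments; the paper routes through an extra vertex $s$ with edges $(v,s),(s,u)$ rather than a single back edge $(v,u)$, but this is immaterial.

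For \textsc{Steiner Tree} you are working much harder than necessary. No directionality gadget is needed: the paper simply takes the $\ell$-layered DAG, forgets edge directions, and uses that as the fixed host graph. To answer an LRO query $(u,v)$ with $u\in L_1$, $v\in L_\ell$, activate $u$ and $v$ and ask whether there is a Steiner tree on $k=\ell-1$ edges. Since every $u$--$v$ path in the undirected layered graph has length at least $\ell-1$, any $(\ell-1)$-edge tree connecting $u$ and $v$ is a shortest path hitting each layer once, which is exactly a directed $u\to v$ path in $G$. Your ``corridor'' idea could perhaps be made to work, but it is vague and unnecessary.

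For \textsc{Vertex Cover Above LP} your approach has a genuine gap. The claim that a single odd cycle forces the LP gap to be at least $\tfrac12$ is false even in the specific layered construction you propose. Take $\ell=3$ with $L_1=\{u,u'\}$, $L_2=\{w\}$, $L_3=\{v\}$, edges $u\,w$, $u'\,w$, $w\,v$, and the inserted back edge $u\,v$. This graph contains the odd cycle $u\,w\,v\,u$, yet $x_w=1$, $x_u=x_v=\tfrac12$, $x_{u'}=0$ is a feasible LP solution of value $2$, which equals the integral optimum $\{w,v\}$. So the graph can toggle between bipartite and non-bipartite while the above-LP parameter stays at $0$, and a constant-parameter query cannot distinguish the two cases. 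The paper sidesteps this entirely: it splits each vertex $v$ into $v',v''$ joined by an edge, rewires each directed edge $(a,b)$ as the undirected edge $\{a'',b'\}$, and adds two isolated vertices $s,t$. The resulting graph is bipartite throughout, so the LP gap is identically $0$; a query $(x,y)$ inserts $\{s,x'\}$ and $\{y'',t\}$, and then asks whether the minimum vertex cover (equivalently, the maximum matching, by K{\H o}nig) attains $N/2$. The augmenting path from $s$ to $t$ that witnesses a perfect matching is forced by the layer structure to trace out a directed $x\to y$ path in the original DAG. In short, the paper exploits that the parameter is always $0$ and reads off reachability from the \emph{value} of the optimum, whereas you try to move the parameter itself---which does not move in the way you need.
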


\subsection{Directed $k$-Path}

\defparproblem{Directed $k$-Path}{$k$}{A directed graph $G$ and an integer $k\in\mathbb N_0$.}{Find a directed path in $G$ on $k$ vertices.}

The updates to be supported are edge insertions and deletions.
We show that {\sc Directed $k$-Path} probably does not admit fast dynamic algorithms.
Let $\ell$ be the value for which the LRO conjecture holds.
That is, under the {\sc 3SUM} Conjecture or the Triangle Conjecture, we can let $\ell=3$.
Suppose, for sake of contradiction, that there was a fully dynamic data structure $D$ for {\sc Directed $k$-Path} that can support updates and queries in $f(k) n^{o(1)}$ time; we will apply it for $k=\ell+2$.

We start from an instance $G$ of the $\ell$-LRO problem and add two extra vertices $s$ and $t$.
We insert all the edges of $G$ into $D$ in total $O(f(k)mn^{o(1)})\leq m^{1+o(1)}$ time (as $k=\ell+2$ is a constant).
When a query $(u,v)$ (for $u\in L_1,v\in L_\ell$) is given to the $\ell$-LRO data structure, we simulate it with $D$ as follows: first, add a directed edge from $s$ to $u$ and a directed edge from $v$ to $t$ in total $O(f(k)m^{o(1)})\leq m^{o(1)}$ time.
Then, ask the query whether the graph has a path on $k=\ell+2$ vertices.
Since the graph has $\ell+2$ layers $\{s\},L_1,\ldots,L_\ell,\{t\}$ and edges only go from left to right between adjacent layers, the only way for there to be a directed path on $k = \ell+2$ vertices if it uses the edges $(s,u)$ and $(v,t)$ and has a path from $u$ to $v$ in the $\ell$-LRO instance~$G$.
Thus, the query for a path on $\ell+2$ vertices answers the $\ell$-LRO query in $m^{o(1)}$ time.
After the query, we delete the edges $(s,u)$ and $(v,t)$ also in $m^{o(1)}$ time.
Thus, our algorithm can answer $\ell$-LRO queries in total~$m^{o(1)}$ time, contradicting the LRO conjecture.

\subsection{Steiner Tree}
\label{sec:steinertree}
\defparproblem{Steiner Tree}{$|T|, k$}{A connected graph $G$ and a set $T\subseteq V(G)$ of terminals; an integer $k\in\mathbb N$.}{Find a subtree in $G$ on at most $k$ edges that connects all terminals in $T$.}

The dynamic version of {\sc Steiner Tree} that we consider is where the input graph $G$ is fixed, and vertices get activated and deactivated as terminals.
That is, vertices from $V(G)$ get added to the set $\mathcal M$ and removed from it.
The query asks to return a $k$-edge tree that connects all currently active terminals.
This model has been studied by {\L}{{a}}cki et al.~\cite{LackiEtAl2015}, who provide constant-factor approximations.

We show that under the LRO conjecture, this problem does not have an efficient dynamic algorithm, even when the parameterization is under both $k$ and the number of active terminals.
Let $\ell$ be the value for which the LRO conjecture holds; that is, we may take $\ell = 3$ assuming the {\sc 3SUM} Conjecture or the Triangle Conjecture.
Suppose, for sake of contradiction, that there was a fully dynamic data structure $D$ for the special case of {\sc Steiner Tree} with 2 terminals that can support updates and queries in $f(k) n^{o(1)}$ time.
Set $k=\ell-1$.

We start from an instance $G$ of the $\ell$-LRO problem and make all its edges undirected.
We take the resulting undirected graph $G'$ as the fixed graph of the {\sc Steiner Tree} instance.
All vertices are considered inactive.
To simulate a query $(u,v)$ to the $\ell$-LRO ($u\in L_1,v\in L_\ell$) we simply activate $u$ and $v$ in $G'$.
Since the distance between $u$ and $v$ (and any pair of vertices in $L_1\times L_\ell$) is at least $\ell-1$, the only way for there to be a tree on $k=\ell-1$ edges connecting them is if this tree is a path between them that contains exactly one vertex from each $L_j$.
This path then corresponds to a {\em directed} path in $G$, answering whether $u$ can reach $v$ in $G$.
After the query is answered, $u$ and $v$ are deactivated.
Thus, under the LRO conjecture, the special case of {\sc Steiner Tree} with 2 terminals must either require $\Omega(m^{1+\eps})$ preprocessing time for some $\eps>0$, or the updates and/or queries must take $\Omega(m^\eps)$ time for some $\eps>0$.

\subsection{Directed Feedback Vertex Set}
\defparproblem{Directed Feedback Vertex Set (DFVS)}{$k$}{A directed graph $G$ and an integer $k\in\mathbb N_0$.}{Find a set $X\subseteq V(G)$ of at most $k$ vertices that hits all directed cycles of $G$.}
 
Our conditional lower bound will be based on the full Reachability Oracle hypothesis for arbitrary DAGs.
We reduce the RO problem to {\sc Directed Feedback Vertex Set} with parameter value $k = 0$; note that one can clearly make the reduction work for any constant value of $k$ by adding $k$ disjoint cycles to the reduction instance.
We are given a directed acyclic graph $G$ on $m$ edges for which we want to create a RO. 
Suppose, for sake of contradiction, that we had a dynamic algorithm for {\sc Directed Feedback Vertex Set} that performs updates and queries in $m^{o(1)}$ time.
We will create a dynamic data structure~$D_0$ for {\sc Directed Feedback Vertex Set} with $k = 0$ on the vertices of $G$, together with an extra vertex $s$.
We start by inserting all edges of $G$ into $D_0$; this takes $m^{1+o(1)}$ time.
When a reachability query $(u,v)$ comes to the RO, we simulate it by adding edges $(v,s)$ and $(s,u)$.
Now any cycle (if one exists) must use these two edges, together with a path from $u$ to~$v$ in the original DAG $G$.
There is a directed feedback vertex set of size 0 if and only if the new graph is still a DAG and hence if and only if $u$ cannot reach $v$.
The data structure~$D_0$ checks exactly whether the new graph is a DAG, and thus answers reachability queries.
After the query is answered, we remove the edges $(v,s)$ and $(s,u)$, and thus the RO queries can be answered  in $m^{o(1)}$ time, contradicting the RO hypothesis.

Because our lower bound follows from a reduction from Reachability Oracles for arbitrary DAGs, we also immediately obtain an unconditional lower bound: 
There is no computable function $f$ for which a dynamic data structure for {\sc Directed Feedback Vertex Set} performs updates and answers queries in $O(f(k))$ time. 

\subsection{Vertex Cover Above LP}
Here we consider the {\sc Vertex Cover} problem under edge insertions and deletions.
The parameter of interest is the difference $k$ between the minimum size of the vertex cover and the optimal solution of the linear programming relaxation of the natural integer programming formulation of the {\sc Vertex Cover} problem in a graph $G$:
\begin{equation*}
  \min \sum_{v\in V(G)} x_v \mbox{ subject to } x_v\in [0,1], v \in V(G) \mbox{ and } x_u+x_v\geq 1, \{u,v\}\in E(G) \enspace .
\end{equation*}

Abboud and Vassilevska Williams~\cite{AbboudVassilevskaWilliams2014} show that assuming both the {\sc 3SUM} Conjecture and the Triangle Conjecture, any dynamic algorithm that maintains a perfect matching in a bipartite graph on $m$ edges requires $\Omega(m^\eps)$ update time or query time for some constant $\eps>0$.
In other words, $m^{o(1)}$ update time and query time are very unlikely.

Consider now a bipartite graph $G$ on $n$ vertices and $m$ edges.
It is well known that in any bipartite graph, the maximum size of any matching equals the minimum size of a vertex cover.
Furthermore, if $G$ contains a perfect matching, then the optimal solution to the above natural LP is integral and equals $n/2$, the size of the perfect matching (and the minimum size of a vertex cover). 
Thus, the gap $\lambda$ between the minimum size of a vertex cover and the optimal value of the LP is $0$.
If {\sc Vertex Cover} parameterized by integrality gap~$\lambda$ has a dynamic algorithm with update time $f(\lambda)m^{o(1)}$ for any function~$f$, then bipartite perfect matching can be maintained with update/query time~$m^{o(1)}$, which contradicts the {\sc 3SUM} Conjecture or the Triangle Conjecture.

Here we give a very simple, direct reduction from the LRO problem, showing that the {\sc Vertex Cover} problem parameterized by the integrality gap $\lambda$ is hard to make dynamic even under the potentially more plausible LRO conjecture.
Let $\ell$ be the constant from the conjecture and consider an $\ell$-layered directed graph $G$ with layers $L_1,\ldots,L_\ell$.
We take every~$L_i$ and split each of its vertices $v$ into $v'$ and $v''$ with an undirected edge between them.
Take any edge $(u,v)$ incident to $v\in L_i$ such that $u\in L_{i-1}$ and replace it with the undirected edge $\{u'',v'\}$.
Add two new vertices $s$ and $t$.
Insert all these edges and vertices into the data structure for {\sc Vertex Cover} parameterized by the integrality gap $\lambda$ that supposedly handles updates in $m^{o(1)}$ time.
So far, we have spent $m^{1+o(1)}$ time.
The current $N$-vertex graph has a maximum matching size of $(N-1)/2$ composed of all edges $(v',v'')$ and where only $s$ and $t$ are unmatched.

When a reachability query $(x,y)$ for $x\in L_1,y\in L_\ell$ arrives at the $\ell$-LRO, we simulate it by adding edges $(s,x)$ and $(y,t)$.
After this, the graph has a perfect matching if and only if there is an augmenting path from~$s$ to $t$ over the original matching.
Because of the structure of the graph one can show by induction that in every layer $L_i$, at most one edge $\{v',v''\}$ can be missing from the perfect matching, so that the augmenting path must look like $s\rightarrow v'_1\rightarrow v''_1\rightarrow\ldots\rightarrow v'_j\rightarrow v''_j\rightarrow\ldots v'_\ell\rightarrow v''_\ell \rightarrow t$ where every $v_j$ is in $L_j$ for $j=1,\ldots,\ell$, $x=v_1,y=v_\ell$.
This can only happen when $x=v_1\rightarrow \ldots\rightarrow v_\ell=y$ is a directed path in $G$, and hence the reachability query is answered.
After this, the edges $(s,x),(y,t)$ are removed, and the LRO is simulated with $m^{1+o(1)}$ preprocessing time and $m^{o(1)}$ time queries, contradicting the LRO hypothesis.

\section{Dynamic Algorithms from Prior Work}
\label{sec:priorwork}

Here we expand on some dynamic algorithms from past work which we use in some of our algorithms.

\subsection{Dynamic Tree Data Structure}
\label{sec:dynamictree}

Some of our algorithms use a dynamic tree data structure and several different operations it allows. We implement this dynamic tree structure using Sleator and Tarjan's link/cut tree~\cite{SleatorTarjan1983}.
Here we highlight the operations they support which we use:

\begin{proposition}[\cite{SleatorTarjan1983}]
  There is a data structure for maintaining a forest which supports the following operations, each in $O(\log n)$ amortized time per operation with $n$ nodes in the structure:
  \begin{itemize}
    \item \texttt{maketree()} -- make a new vertex in a singleton tree.
    \item \texttt{link(a,b)} -- Adds an edge between $a$ and $b$.
    \item \texttt{cut(a,b)} -- Remove the edge between $a$ and $b$.
    \item \texttt{after(a,b)} -- Find the vertex after $a$ on the path from $a$ to $b$.
    \item \texttt{evert(a)} -- Makes $a$ the root of the tree it is in.
    \item \texttt{nca(a,b)} -- Find the nearest common ancestor of $a$ and $b$.
  \end{itemize}
\end{proposition}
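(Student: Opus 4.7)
The plan is to implement each forest as a Sleator–Tarjan link/cut tree: partition every represented tree into vertex-disjoint \emph{preferred paths}, represent each preferred path by a \emph{splay tree} keyed by depth along the path, and connect the splay trees across preferred-path boundaries by \emph{path-parent pointers} from the root of each auxiliary splay tree to the parent (in the represented forest) of the topmost node of its preferred path. Every node carries, in addition, a reversal bit to support lazy path reversal; subtree aggregates such as min-depth along a preferred path can be maintained by the usual splay-tree update routine.

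First I would build the primitive \texttt{access(v)}, which reorganizes the preferred paths so that the entire path from the root of $v$'s represented tree down to $v$ is a single preferred path and $v$ sits in the root auxiliary splay tree. This is done by splaying $v$ within its auxiliary tree, detaching the subtree that lies below $v$ along the path (making its root a new path-parent pointer), and then repeatedly jumping up via path-parent pointers, splicing each successively higher preferred path by the same detach/attach pattern. Every other operation reduces cleanly to \texttt{access}: \texttt{link(a,b)} calls \texttt{evert(a)} and then makes $b$ the path-parent of $a$; \texttt{cut(a,b)} accesses $b$ then accesses $a$ and splits the auxiliary tree at the edge between them; \texttt{after(a,b)} accesses $b$, accesses $a$, and returns the minimum-depth node in $a$'s right auxiliary subtree; \texttt{evert(a)} accesses $a$ and toggles the reversal bit at the root; \texttt{nca(a,b)} accesses $a$ and then performs \texttt{access(b)}, returning the last node at which the access procedure jumped via a path-parent pointer.

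The main obstacle is proving that \texttt{access} runs in $O(\log n)$ amortized time, since every other operation uses $O(1)$ calls to \texttt{access} plus $O(1)$ additional splay-tree work. I would combine the standard heavy-path decomposition argument with the splay-tree access lemma: assign to each represented edge a weight equal to the size of the subtree rooted at its lower endpoint, call an edge \emph{heavy} if its weight is more than half of its parent's weight and \emph{light} otherwise, and observe that any root-to-leaf path crosses at most $O(\log n)$ light edges. Then use a two-part potential function, namely the sum over auxiliary splay trees of the Sleator–Tarjan splay potential, to show that (i) the amortized cost of each splay step inside \texttt{access} is $O(\log n)$ by the access lemma, and (ii) the number of preferred-child changes during \texttt{access(v)}, once summed against the potential drop from reassigning preferred edges, is $O(\log n)$ amortized because only $O(\log n)$ of those changes can involve light-to-heavy transitions along the accessed path, while heavy-to-light transitions pay for themselves against the potential.

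Finally, I would verify that the lazy reversal bits introduced to support \texttt{evert} are compatible with the splay and splice routines (push the bit down before any rotation), and that path-parent pointers are correctly rerouted in \texttt{link}, \texttt{cut}, and the splice step of \texttt{access}. A subtle point that I would want to check carefully is that \texttt{nca(a,b)} really is the last vertex at which \texttt{access(b)} jumps via a path-parent pointer after \texttt{access(a)}, since this is the node at which $a$'s and $b$'s preferred paths merge in the reorganized tree; granting this, the stated $O(\log n)$ amortized bound for all six operations follows from the \texttt{access} analysis above.
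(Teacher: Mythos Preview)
The paper does not prove this proposition at all: it is stated as a citation of Sleator and Tarjan's result and used as a black box, with no argument given. Your proposal is a correct and reasonably detailed sketch of the standard link/cut tree construction and its amortized analysis, so there is nothing to compare against; if anything, you have supplied more than the paper does. One minor historical remark: the splay-tree implementation and the accompanying access-lemma analysis you describe are from Sleator and Tarjan's 1985 paper, whereas the 1983 paper cited here used biased binary trees as the auxiliary structure; either yields the stated $O(\log n)$ amortized bounds, so this does not affect correctness.
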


\subsection{Dynamic Connectivity}
\label{sec:dynamicconnectivity}
Many of our algorithms also rely on the following dynamic algorithms for connectivity.
The expected amortized time bound is due to a recent improvement by Huang et al.~\cite{HuangEtAl2016} over an algorithm of Thorup~\cite{Thorup2000}.
The worst case expected time bounds are due to Gibb et al.~\cite{gibb2015dynamic} improving upon Kapron et al.~\cite{KapronEtAl2013}, and the deterministic amortized time bounds are due to Wulff-Nilsen~\cite{Wulff-Nilsen13a}.

\begin{proposition}[\cite{gibb2015dynamic,HuangEtAl2016,KapronEtAl2013,Thorup2000,Wulff-Nilsen13a}]
\label{thm:dynamicconnectivity}
  There is a dynamic data structure that maintains a spanning forest of a graph, answers connectivity queries in $O(\log n)$ time, and supports edge deletion and insertions in either 
	\begin{itemize}
	\item {\em expected amortized} update time $O(\log n(\log\log n)^2)$, or in
	\item {\em expected worst case}\footnote{The expected worst case data structures for dynamic connectivity from the literature assume an \emph{oblivious adversary} who does not get access to the random bits used by the data structure, so our results using $DC$ with expected worst case guarantees do as well.} time $O(\log^4 n)$, or in
	\item {\em deterministic amortized} time $O(\log^2 n/\log\log n)$.
	\end{itemize}
\end{proposition}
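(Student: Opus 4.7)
}
The statement collects three known trade-offs for fully dynamic connectivity and is proved by combining the data structures of Thorup and Huang et al.\ (expected amortized), Kapron--King--Mountjoy and Gibb et al.\ (expected worst case against an oblivious adversary), and Wulff-Nilsen (deterministic amortized). My plan is to set up the common framework first and then indicate, for each bound, the specific ingredient that achieves it.

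The common skeleton is to maintain a spanning forest $F$ of the current graph $G$ inside a balanced representation of dynamic trees (Euler-tour trees, or top trees) that supports \texttt{link}, \texttt{cut}, and ``which tree contains $v$?'' in $O(\log n)$ time. A connectivity query $(u,v)$ is then answered by comparing the identifiers of the trees of $u$ and $v$ in $F$, which takes $O(\log n)$ time. Edge insertions are handled by checking whether the endpoints are already in the same tree of $F$; if not, the edge is added to $F$ via \texttt{link}, otherwise it is stored as a non-tree edge. Deletions of non-tree edges are trivial. The entire difficulty therefore concentrates in the \emph{replacement edge} problem: when a tree edge $e$ of $F$ is deleted, we must either certify that no non-tree edge reconnects the two resulting subtrees or find such a replacement quickly.

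For the expected amortized bound $O(\log n(\log\log n)^2)$, the plan is to use the Holm--Lichtenberg--Thorup level framework: assign each edge a level in $\{0,1,\dots,\lfloor\log n\rfloor\}$, maintain a spanning forest $F_\ell$ in the subgraph of edges at level $\geq\ell$, and invariantly keep trees at level $\ell$ of size at most $2^{n-\ell}$. On deleting a tree edge at level $\ell$, we scan non-tree edges incident to the smaller of the two resulting components, promoting each inspected edge one level (this amortizes the scan against the level potential). The improvement from $O(\log^2 n)$ to $O(\log n(\log\log n)^2)$ comes from Thorup's hierarchical clustering trick and the refinement of Huang, Huang, Kopelowitz, Pettie, and Thorup that plugs in a faster local structure for maintaining candidate lists. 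The main effort in this part of the proof is checking that the potential argument still absorbs the extra work introduced by the clustering.

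For the expected worst-case bound $O(\log^4 n)$ against an oblivious adversary, the plan is to follow the Kapron--King--Mountjoy approach, refined by Gibb--Kapron--King--Thorn. Here we do not amortize over future deletions; instead, each edge $\{u,v\}$ is hashed with a random signature and the algorithm maintains, at each level of a hierarchical decomposition, XOR-based ``cut sketches'' of the edges leaving each subtree. A single random sketch returns the XOR of all cut edges, which is a valid edge whenever exactly one cut edge exists; by taking $\Theta(\log n)$ independent sketches at $\Theta(\log n)$ sampling rates, we recover a replacement edge with high probability whenever one exists. The obliviousness assumption is needed precisely because the sketches are sampled independently of the update sequence. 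The bookkeeping to maintain these sketches under \texttt{link}/\texttt{cut} yields the $O(\log^4 n)$ worst-case cost.

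For the deterministic amortized bound $O(\log^2 n/\log\log n)$, the plan is to invoke Wulff-Nilsen's refinement of the HLT scheme, which replaces the vanilla level-promotion argument by a more careful structural analysis that shaves a $\log\log n$ factor. The main obstacle in all three components is the replacement edge step; the rest of the argument is routine once the right dynamic-tree structure and potential function are chosen. Since each of the three bounds is proved in the cited references, the proof of the proposition itself is just this assembly, and no genuinely new argument is needed.
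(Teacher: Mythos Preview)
Your proposal is correct in content, but it actually goes well beyond what the paper does. In the paper, this proposition carries no proof at all: it is stated purely as a citation of prior work, with one sentence per bound attributing it to the relevant references (Huang et al.\ over Thorup for the expected amortized bound; Gibb et al.\ over Kapron--King--Mountjoy for the expected worst-case bound; Wulff-Nilsen for the deterministic amortized bound). There is no sketch of the level framework, the cut sketches, or the potential arguments --- the paper simply imports the result as a black box.

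Your write-up is a faithful summary of how those cited papers work, so nothing is wrong with it as an exposition. One small slip: the HLT invariant is that trees at level $\ell$ have at most $n/2^{\ell}$ vertices, not $2^{n-\ell}$ as you wrote. If your goal is to match the paper, though, the appropriate ``proof'' is a single sentence pointing to the citations; the detailed sketch you give is strictly more than the paper provides.
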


\section{Conclusion}
We give dynamic fixed-parameter algorithms for a variety of classic fixed-parameter problems, and lower bounds suggesting that this is not possible for others. The next step, of course, is to expand such results to more problems. Two particular problems come to mind.

First, it would be exciting to extend Bodlaender's algorithm for maintaining a tree decomposition~\cite{Bodlaender1993} to work for any treewidth parameter $k$. This would open many problems parameterized by treewidth to our approach, as most fixed-parameter algorithms for such problems need to use the tree decomposition.

Second, we are only able to design algorithms in the promise model for some problems, but depending on the specific problem, the promise model can be substantially weaker than the full model. It would be interesting to explore which problems have dynamic fixed-parameter algorithms in the promise model but not the full model.

\subparagraph*{Acknowledgments.}
The authors would like to thank Nicole Wein, Daniel Stubbs, Hubert Teo, and Ryan Williams for fruitful conversations.

\newpage

\bibliographystyle{abbrv}
\bibliography{dynamicfpt}

\end{document}